\documentclass[11pt,a4paper]{article}

\usepackage{amssymb,amsthm,amsmath,mathtools}
\usepackage{thmtools,thm-restate}
\usepackage[round]{natbib}
\usepackage{pifont}
\usepackage[usenames,svgnames,xcdraw,table]{xcolor}
\definecolor{DarkGreen}{rgb}{0.1,0.5,0.1}
\usepackage[backref=page]{hyperref}
\hypersetup{
	colorlinks=true,
	linkcolor=red,
	urlcolor=DarkGreen,
	citecolor=blue
}
\renewcommand*{\backref}[1]{}
\renewcommand*{\backrefalt}[4]{%
    \ifcase #1 (Not cited.)%
    \or        (Cited on page~#2)%
    \else      (Cited on pages~#2)%
    \fi}
\usepackage[capitalise,noabbrev]{cleveref}

\Crefname{prop}{Proposition}{Propositions}
\Crefname{property}{Property}{Properties}
\Crefname{example}{Example}{Examples}
\Crefname{table}{Table}{Tables}

\usepackage{nicefrac}
\usepackage{enumerate}
\usepackage{etoolbox}
\usepackage[margin=1in]{geometry}
\usepackage[T1]{fontenc}
\usepackage[tt=false]{libertine}

\usepackage{url}
\usepackage[utf8]{inputenc}
\usepackage[small]{caption}
\usepackage{graphicx}
\usepackage{booktabs}
\urlstyle{same}

\usepackage{mathrsfs,amsfonts,dsfont,authblk}
\usepackage{array,multirow,graphicx,bigdelim}

\usepackage[linesnumbered,lined,boxed,ruled,vlined]{algorithm2e}

\SetKwInOut{Parameters}{Parameters}
\SetKwComment{Comment}{$\triangleright$\ }{}
\SetAlFnt{\small}
\SetAlCapFnt{\small}
\SetAlCapNameFnt{\small}
\SetAlCapHSkip{0pt}
\IncMargin{-\parindent}

\usepackage{tikz}
\usetikzlibrary{fit,calc}

\colorlet{mygray}{gray!40}

\newcommand*\circled[1]{\tikz[baseline=(char.base)]{
            \node[shape=circle,draw,inner sep=1pt] (char) {#1};}}

\makeatletter
\renewcommand{\paragraph}{%
  \@startsection{paragraph}{4}%
  {\z@}{1.0ex \@plus 1ex \@minus .2ex}{-1em}%
  {\normalfont\normalsize\bfseries}%
}
\let\oldnl\nl
\newcommand{\nonl}{\renewcommand{\nl}{\let\nl\oldnl}}
\makeatother

  {\list{}{\leftmargin=#1\rightmargin=#1}\item[]}%
  {\endlist}

\usepackage[labelfont={normalfont,bf},textfont=it]{caption}
\usepackage{subcaption}


\theoremstyle{definition}
\newtheorem{property}{Property}
\newtheorem{definition}{Definition}

\theoremstyle{remark}
\newtheorem{remark}{Remark}

\allowdisplaybreaks

\newcommand{\Alg}{\textup{\textsc{Alg}}}

\newcommand{\Ber}{\textrm{\textup{Ber}}}

\newcommand{\EF}[1]{\ifstrempty{#1}{\textrm{\textup{EF}}}
{\textrm{\textup{EF{$#1$}}}}}
\newcommand{\EFExistence}{\textup{\textsc{\EF{}-Existence}}}
\newcommand{\EEF}{\textrm{\textup{EEF}}}

\newcommand{\Envygraph}{\textup{\texttt{EnvyGraph}}}
\newcommand{\eps}{{\varepsilon}}
\newcommand{\EQcoloring}{\textup{\textsc{Equitable Coloring}}}

\newcommand{\F}{\mathcal{F}}

\newcommand{\GraphkColorability}{\textup{\textsc{Graph $k$-Colorability}}}
\newcommand{\HEF}[1]{\ifstrempty{#1}{\textrm{\textup{HEF}}}
{\textrm{\textup{HEF-{$#1$}}}}}
\newcommand{\HEFkExistence}{\textup{\textsc{\HEF{k}-Existence}}}
\newcommand{\HEFkVerification}{\textup{\textsc{\HEF{k}-Verification}}}

\newcommand{\HittingSet}{\textup{\textsc{Hitting Set}}}

\newcommand{\kopt}{k^{\textup{opt}}}

\newcommand{\I}{\mathcal{I}}
\newcommand{\kappaopt}{\kappa^{\textup{\texttt{opt}}}}

\newcommand{\Market}{\textup{\texttt{Alg-EF1+PO}}}

\newcommand{\MNW}{\textup{\texttt{MNW}}}

\newcommand{\N}{{\mathbb{N}}}

\newcommand{\NP}{\textrm{\textup{NP}}}
\newcommand{\NPC}{\textrm{\textup{NP-complete}}}
\newcommand{\NPH}{\textrm{\textup{NP-hard}}}
\newcommand{\NPhard}{\text{NP-hard}}
\newcommand{\NW}{\textrm{\textup{NSW}}}

\newcommand{\Partition}{\textrm{\textsc{Partition}}}

\newcommand{\PO}{\textrm{\textup{PO}}}

\newcommand{\RR}{\textup{\texttt{RoundRobin}}}
\newcommand{\reg}{\texttt{\textup{reg}}}

\newcommand{\sEF}[1]{\ifstrempty{#1}{\textrm{\textup{sEF}}}
{\textrm{\textup{sEF{$#1$}}}}}
\newcommand{\uHEF}[1]{\ifstrempty{#1}{\textrm{\textup{uHEF}}}
{\textrm{\textup{uHEF-{$#1$}}}}}

\newcommand{\V}{\mathcal{V}}



\title{Fair Division through Information Withholding}

\author[1]{Hadi Hosseini}
\author[2]{Sujoy Sikdar}
\author[3]{Rohit Vaish}
\author[3]{Jun Wang}
\author[3]{Lirong Xia}

\affil[1]{Rochester Institute of Technology\\
	{\small\texttt{hhvcs@rit.edu}}}
\affil[2]{Washington University in St. Louis\\
	{\small\texttt{sujoy@wustl.edu}}}
\affil[3]{Rensselaer Polytechnic Institute\\ 
	{\small\texttt{\{vaishr2,wangj38,xial\}@rpi.edu}}}

\begin{document}

\maketitle

\begin{abstract}
Envy-freeness up to one good (\EF{1}) is a well-studied fairness notion for indivisible goods that addresses pairwise envy by the removal of at most one good. In the worst case, each pair of agents might require the (hypothetical) removal of a different good, resulting in a weak \emph{aggregate} guarantee. We study allocations that are nearly envy-free in aggregate, and define a novel fairness notion based on \emph{information withholding}. Under this notion, an agent can withhold (or hide) some of the goods in its bundle and reveal the remaining goods to the other agents. We observe that in practice, envy-freeness can be achieved by withholding only a small number of goods overall. We show that finding allocations that withhold an optimal number of goods is computationally hard even for highly restricted classes of valuations. In contrast to the worst-case results, our experiments on synthetic and real-world preference data show that existing algorithms for finding \EF{1} allocations withhold close-to-optimal number of goods.
\end{abstract}

\section{Introduction}
\label{sec:Introduction}

When dividing discrete objects, one often strives for a fairness notion called \emph{envy-freeness} \citep{F67resource}, under which no agent prefers the allocation of another agent to its own. Envy-free outcomes might not exist in general (even with only two agents and a single indivisible good), motivating the need for approximations. Among the many approximations of envy-freeness proposed in the literature \citep{LMM+04approximately,B11combinatorial,NR14minimizing,CKM+16unreasonable}, the notion called \emph{envy-freeness up to one good} (\EF{1}) has received significant attention recently. \EF{1} requires that pairwise envy can be eliminated by the removal of some good in the envied bundle. It is known that an \EF{1} allocation always exists and can be computed in polynomial time~\citep{LMM+04approximately}.

On closer scrutiny, however, we find that \EF{1} is not as strong as one might think. Indeed, an \EF{1} allocation could entail the (hypothetical) removal of \emph{many} goods, because the elimination of envy for different pairs of agents may require the removal of distinct goods. To see this, consider an instance with six goods $g_1,\dots,g_6$ and three agents $a_1,a_2,a_3$ whose (additive) valuations are as shown below:

\begin{table}[ht]
	\label{tab:Motivating_HEF}
	\centering
	\small
	\begin{tabular}{ c|cccccc }
		& $g_1$ & $g_2$ & $g_3$ & $g_4$ & $g_5$ & $g_6$\\ \hline
		$a_1$ & $\circled{1}$ & $\circled{1}$ & $\underline{4}$ & $1$ & $1$ & $\underline{4}$\\
		$a_2$ & $1$ & $\underline{4}$ & $\circled{1}$ & $\circled{1}$ & $\underline{4}$ & $1$\\
		$a_3$ & $\underline{4}$ & $1$ & $1$ & $\underline{4}$ & $\circled{1}$ & $\circled{1}$\\
	\end{tabular}
\end{table}

Observe that the allocation shown via circled goods is \EF{1}, since any pairwise envy can be addressed by removing an underlined good. However, each pair of agents requires the removal of a \emph{different} good (e.g., $a_1$'s envy towards $a_2$ is addressed by removing $g_3$ whereas $a_3$'s envy towards $a_2$ is addressed by removing $g_4$, and so on), resulting in a weak approximation overall (since all goods need to be removed over all pairs of agents).

The above example shows that \EF{1}, on its own, is too \emph{coarse} to distinguish between allocations that remove a \emph{large} number of goods (such as the one with circled entries) and those that remove only a \emph{few} (such as the one with underlined entries, which, in fact, is envy-free). This limitation highlights the need for a fairness notion that (a) can distinguish between allocations in terms of their \emph{aggregate} approximation, and (b) retains the ``up to one good'' style approximation of \EF{1} that has proven to be practically useful~\citep{GP15spliddit}. Our work aims to fill this important gap.

We propose a new fairness notion called \emph{envy-freeness up to $k$ hidden goods} (\HEF{k}) defined as follows: Say there are $n$ agents, $m$ goods, and an allocation $A = (A_1,\dots,A_n)$. Suppose there is a set $S$ of $k$ goods (called the \emph{hidden} set) such that each agent $i$ withholds the goods in $A_i \cap S$ (i.e., the hidden goods owned by $i$) and only discloses the goods in $A_i \setminus S$ to the other agents. Any other agent $h \neq i$ only observes the goods disclosed by $i$ (i.e., those in $A_i \setminus S$), and its valuation for $i$'s bundle is therefore $v_h(A_i \setminus S)$ instead of $v_h(A_i)$. Additionally, agent $h$'s valuation for its own bundle is $v_h(A_h)$ (and not $v_h(A_h \setminus S)$) because it can observe its own hidden goods. If, under the disclosed allocation, no agent prefers the bundle of any other agent (i.e., if $v_h(A_h) \geq v_h(A_i \setminus S)$ for every pair of agents $i,h$), then we say that $A$ is \emph{envy-free up to $k$ hidden goods} (\HEF{k}). In other words, by withholding the information about $S$, allocation $A$ can be made free of envy.

Notice how \HEF{k} addresses the limitations associated with \EF{1}: Like \EF{1}, \HEF{k} is a relaxation of envy-freeness that is defined in terms of the \emph{number of goods}. However, unlike \EF{1}, \HEF{k} offers a \emph{precise quantification} of the extent of information that must be withheld in order to achieve envy-freeness.

Clearly, any allocation can be made envy-free by hiding all the goods (i.e., if $k = m$). The real strength of \HEF{k} lies in $k$ being \emph{small}; indeed, an \HEF{0} allocation is envy-free. As we will demonstrate below, there are natural settings that admit \HEF{k} allocations with a small $k$ (i.e., hide only a small number of goods) even when (exact) envy-freeness is unlikely.

\subsection*{Information Withholding is Meaningful in \mbox{Practice}.}

To understand the usefulness of \HEF{k}, we generated a synthetic dataset where we varied the number of agents $n$ from $5$ to $10$, and the number of goods $m$ from $5$ to $20$ (we ignore the cases where $m < n$). For every fixed $n$ and $m$, we generated $100$ instances with \emph{binary} valuations. Specifically, for every agent $i$ and every good $j$, the valuation $v_{i,j}$ is drawn i.i.d. from $\textup{Bernoulli}(0.7)$. \Cref{subfig:HEFk_motivation_0.7_NOTEF} shows the heatmap of the number of instances out of $100$ that \emph{do not} admit envy-free outcomes. (Thus, a 	`hot' cell indicated by red color is one where \emph{none} of the $100$ instances admits an envy-free allocation.) \Cref{subfig:HEFk_motivation_0.7_maxk} shows the heatmap of the number of goods that must be hidden in the worst-case. That is, the color of each cell denotes the smallest $k$ such that each of the corresponding $100$ instances admits some \HEF{k} allocation.

\begin{figure}[h]
	\centering
	\begin{subfigure}[b]{0.47\linewidth}
		\centering
		\includegraphics[width=0.94\linewidth]{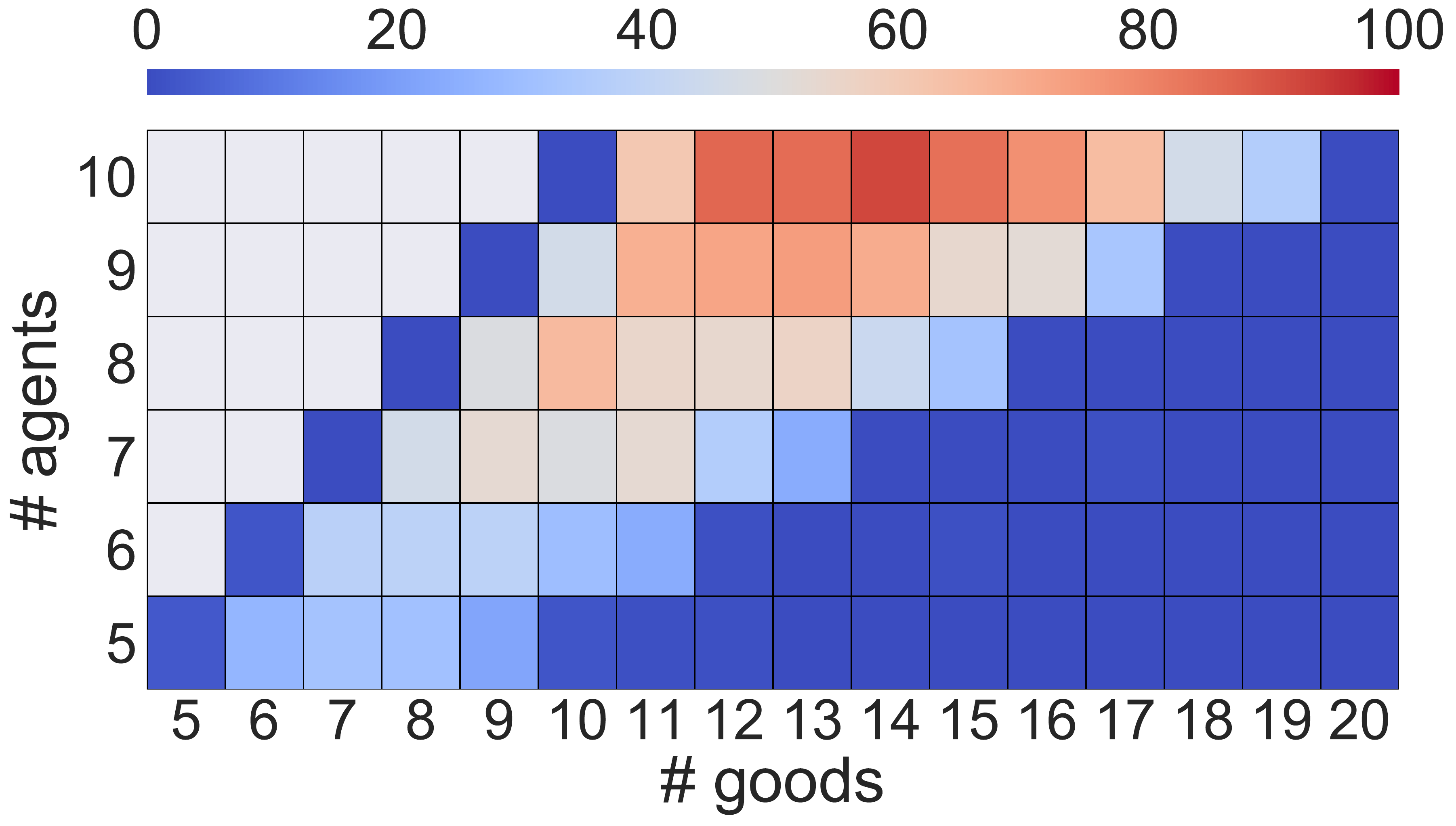}%
		\caption{Heatmap of the fraction of instances that are not envy-free.}
		\label{subfig:HEFk_motivation_0.7_NOTEF}
	\end{subfigure}
	\hfill
	\begin{subfigure}[b]{0.47\linewidth}
		\centering
		\includegraphics[width=0.94\linewidth]{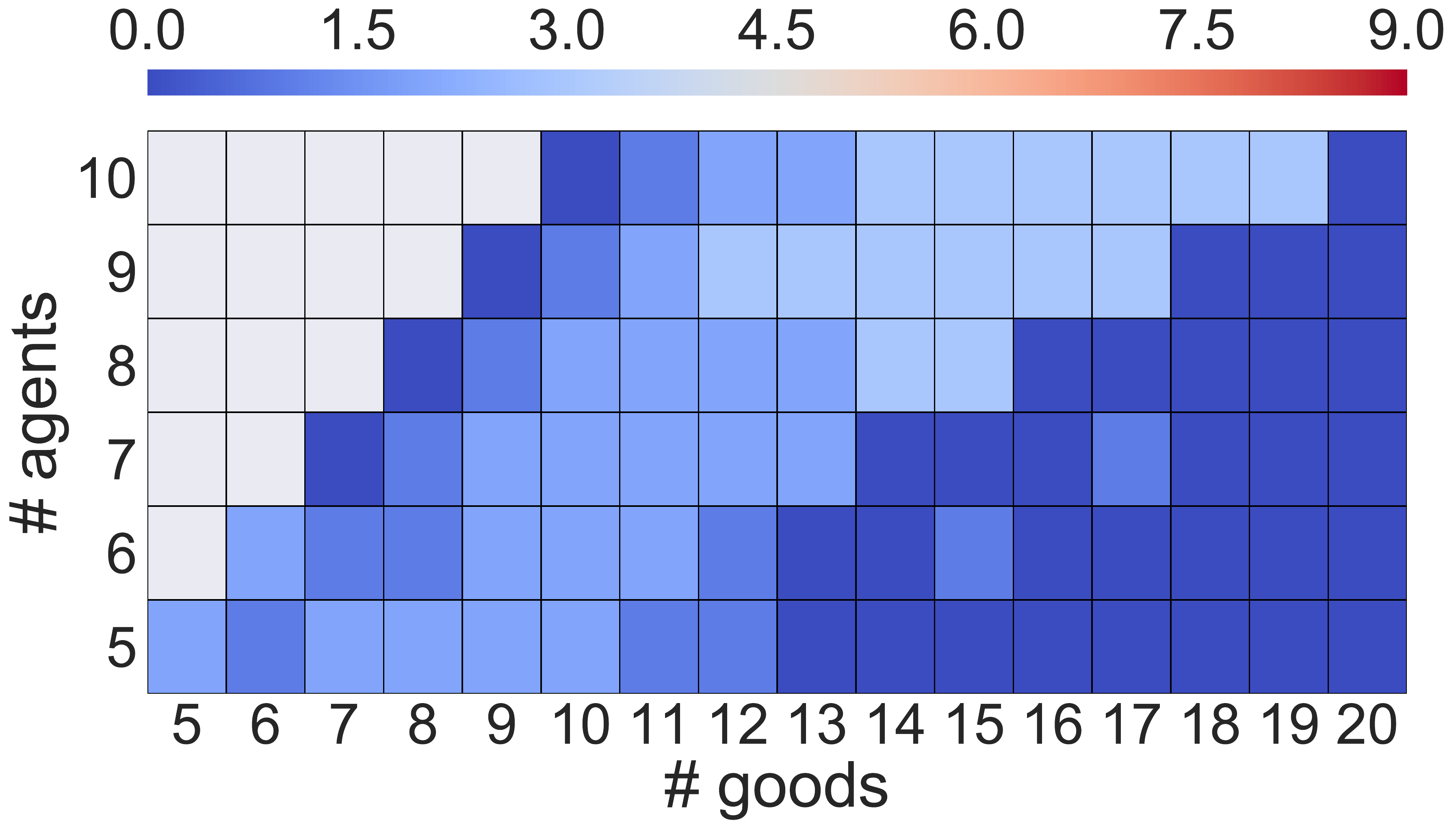}%
		\caption{Heatmap of the number of goods that must be hidden.}
		\label{subfig:HEFk_motivation_0.7_maxk}
	\end{subfigure}
	\caption{In both figures, each cell corresponds to $100$ instances with binary valuations for a fixed number of goods $m$ (on X-axis) and a fixed number of agents $n$ (on Y-axis).}
	\label{fig:HEFk_motivation_0.7}
\end{figure}

It is evident from \Cref{fig:HEFk_motivation_0.7} that even in the regime where envy-free outcomes are unlikely (in particular, the red-colored cells in \Cref{subfig:HEFk_motivation_0.7_NOTEF}), there exist \HEF{k} allocations with $k \leq 3$ (the light blue-colored cells in \Cref{subfig:HEFk_motivation_0.7_maxk}). This observation, along with the foregoing discussion, suggests that fairness through information withholding is a well-motivated approach towards approximate envy-freeness that could provide promising results in practice.

\paragraph{Our Contributions}
We make contributions on three fronts.
\begin{itemize}
	\item On the \emph{conceptual} side, we propose a novel fairness notion called envy-freeness up to $k$ hidden goods (\HEF{k}) as a fine-grained generalization of envy-freeness in terms of aggregate approximation.
	\item Our \emph{theoretical} results (\Cref{sec:Theoretical_Results}) show that computing \HEF{k} allocations is computationally hard even for highly restricted classes of valuations (\Cref{thm:HEFk_Existence_NPcomplete_IdenticalVals,cor:HEFk_Existence_NPcomplete_BinaryVals}). We show a similar result when \HEF{k} is required alongside Pareto optimality (\Cref{prop:HEFkPOExistence_NPcomplete_BinaryVals}). A related technical contribution is an alternative proof of \NPC{}ness of determining the existence of an envy-free allocation for binary valuations (\Cref{prop:EF_Existence_NPcomplete_BinaryValuations}).
	\item Our \emph{experiments} show that \HEF{k} allocations with a small $k$ often exist, even when (exact) envy-free allocations do not (\Cref{fig:HEFk_motivation_0.7}). We also compare several known algorithms for computing \EF{1} allocations on synthetic and real-world preference data, and find that the round-robin algorithm and an algorithm of \citet{BKV18Finding} withhold close-to-optimal number of goods, often hiding no more than three items (\Cref{sec:Experiments}).
\end{itemize}

\section{Related Work}
\label{sec:RelatedWork}
An emerging line of work in the fair division literature considers relaxations of envy-freeness by limiting the information available to the agents. Notably, \citet{ABC+18knowledge} consider a setting where each agent is aware only of its own bundle and has no knowledge about the allocations of the other agents. They propose the notion of \emph{epistemic envy-freeness} (\EEF{}) under which each agent believes that an envy-free allocation of the remaining goods among the other agents is possible. Note that in \EEF{}, each agent might consider a different hypothetical assignment of the remaining goods, and each of these could be significantly different from the \emph{actual} underlying allocation. By contrast, under \HEF{k}, each agent evaluates its valuation with respect to the same (underlying) allocation. \citet{CS17ignorance} study a related model where agents have probabilistic beliefs about the allocations of the other agents, and envy is defined in expectation. \citet{CCL+19maximin} study a setting similar to \citet{ABC+18knowledge} wherein each agent is unaware of the allocations of the other agents, with the guarantee that it does not get the worst bundle.

Another related line of work considers settings where the agents constitute a social network and can only observe the allocations of their neighbors \citep{AKP17fair,BQZ17networked,CEM17distributed,ABC+18knowledge,BCG+18local,BKN18envy}. These works place an informational constraint on the \emph{set of agents}, whereas our model restricts the \emph{set of revealed goods} per agent.

Several other forms of fairness approximations have been proposed recently, such as introducing side payments~\citep{HS19fair}, permitting sharing of some goods~\citep{SE19fair}, or donating a small fraction of goods~\citep{CGH19envy,CKM+20little}.

\section{Preliminaries}
\label{sec:Preliminaries}
\paragraph{Problem instance}
An \emph{instance} $\I = \langle [n], [m], \V \rangle$ of the fair division problem is defined by a set of $n \in \N$ \emph{agents} $[n] = \{1,2,\dots,n\}$, a set of $m \in \N$ \emph{goods} $[m] = \{1,2,\dots,m\}$, and a \emph{valuation profile} $\V = \{v_1,v_2,\dots,v_n\}$ that specifies the preferences of every agent $i \in [n]$ over each subset of the goods in $[m]$ via a \emph{valuation function} $v_i: 2^{[m]} \rightarrow \N \cup \{0\}$. Notice that each agent's valuation for any subset of goods is assumed to be a non-negative integer. We will assume that the valuation functions are \emph{additive}, i.e., for any $i \in [n]$ and $G \subseteq [m]$, $v_i(G) \coloneqq \sum_{j \in G} v_i(\{j\})$, where $v_i(\emptyset) = 0$. We will write $v_{i,j}$ instead of $v_i(\{j\})$ for a singleton good $j \in [m]$. We say that an instance has \emph{binary valuations} if for every $i \in [n]$ and every $j \in [m]$, $v_{i,j} \in \{0,1\}$.

\paragraph{Allocation}
An \emph{allocation} $A \coloneqq (A_1,\dots,A_n)$ refers to an $n$-partition of the set of goods $[m]$, where $A_i \subseteq [m]$ is the \emph{bundle} allocated to agent $i$. Given an allocation $A$, the utility of agent $i \in [n]$ for the bundle $A_i$ is $v_i(A_i) = \sum_{j \in A_i} v_{i,j}$.

\begin{definition}[\textbf{Envy-freeness}]
	An allocation $A$ is \emph{envy-free} (\EF{}) if for every pair of agents $i,h \in [n]$, $v_i(A_i) \geq v_i(A_h)$. An allocation $A$ is \emph{envy-free up to one good} (\EF{1}) if for every pair of agents $i,h \in [n]$ such that $A_h \neq \emptyset$, there exists some good $j \in A_h$ such that $v_i(A_i) \geq v_i(A_h \setminus \{j\})$. An allocation $A$ is \emph{strongly envy-free up to one good} (\sEF{1}) if for every agent $h \in [n]$ such that $A_h \neq \emptyset$, there exists a good $g_h \in A_h$ such that for all $i \in [n]$, $v_i(A_i) \geq v_i(A_h \setminus \{g_h\})$. The notions of \EF{}, \EF{1}, and \sEF{1} are due to \citet{F67resource}, \citet{B11combinatorial}, and \citet{CFS+19group}, respectively.\footnote{A slightly weaker notion than \EF{1} was previously studied by \citet{LMM+04approximately}. However, their algorithm can be shown to compute an \EF{1} allocation.}
\end{definition}

\begin{definition}[\textbf{Envy-freeness with hidden goods}]
	An allocation $A$ is said to be \emph{envy-free up to $k$ hidden goods} (\HEF{k}) if there exists a set $S \subseteq [m]$ of at most $k$ goods such that for every pair of agents $i,h \in [n]$, we have $v_i(A_i) \geq v_i(A_h \setminus S)$. An allocation $A$ is \emph{envy-free up to $k$ uniformly hidden goods} (\uHEF{k}) if there exists a set $S \subseteq [m]$ of at most $k$ goods satisfying $|S \cap A_i| \leq 1$ for every $i \in [n]$ such that for every pair of agents $i,h \in [n]$, we have $v_i(A_i) \geq v_i(A_h \setminus S)$. We say that allocation $A$ \emph{hides} the goods in $S$ and \emph{reveals} the remaining goods. Notice that a \uHEF{k} allocation is also \HEF{k} but the converse is not necessarily true. Indeed, in \Cref{prop:HEFk_vs_uHEF}, we will present an instance that, for some $k \in \N$, admits an \HEF{k} allocation but no \uHEF{k} allocation.
	\label{defn:HEF-k}
\end{definition}

\begin{remark}
	It follows from the definitions that \HEF{0} $\Rightarrow$ \HEF{1} $\Rightarrow$ \HEF{2} $\dots$, and that an allocation satisfies \HEF{0} if and only if it satisfies \EF{}. It is also easy to verify that an allocation is \sEF{1} if and only if it is \uHEF{n}. This is because the unique hidden good for every agent is also the one that is (hypothetically) removed under \sEF{1}. Additionally, as discussed in \Cref{sec:Introduction}, an \EF{1} allocation might not be \uHEF{k} for any $k \leq n$.
	\label{rem:EF_HEF_Relationship}
\end{remark}

We say that allocation $A$ is \emph{\HEF{} with respect to set $S$} if $A$ becomes envy-free after hiding the goods in $S$, i.e., for every pair of agents $i,h \in [n]$, we have $v_i(A_i) \geq v_i(A_h \setminus S)$. We say that $k$ goods \emph{must be hidden} under $A$ if $A$ is \HEF{} with respect to some set $S$ such that $|S|=k$, and there is no set $S'$ with $|S'| < k$ such that $A$ is \HEF{} with respect to $S'$.

\begin{definition}[\textbf{Pareto optimality}]
	An allocation $A$ is Pareto dominated by another allocation $B$ if $v_i(B_i) \geq v_i(A_i)$ for every agent $i \in [n]$ with at least one of the inequalities being strict. A \emph{Pareto optimal} (\PO{}) allocation is one that is not Pareto dominated by any other allocation.
\end{definition}

\begin{definition}[\textbf{\EF{1} algorithms}]
	\label{defn:EF1_algorithms}
	We will now describe four known algorithms for finding \EF{1} allocations that are relevant to our work.
	
	\paragraph{Round-robin algorithm (\RR{}):} Fix a permutation $\sigma$ of the agents. The \RR{} algorithm cycles through the agents according to $\sigma$. In each round, an agent gets its favorite good from the pool of remaining goods.

	\paragraph{Envy-graph algorithm (\Envygraph{}):} This algorithm, proposed by \citet{LMM+04approximately}, works as follows: In each step, one of the remaining goods is assigned to an agent that is not envied by any other agent. The existence of such an agent is guaranteed by resolving cyclic envy relations (if any exists) in a combinatorial structure called the \emph{envy-graph} of an allocation.
	
	\paragraph{Fisher market-based algorithm (\Market{}):} This algorithm, due to \citet{BKV18Finding}, uses local search and price-rise subroutines in a Fisher market associated with the fair division instance, and returns an \EF{1} and \PO{} allocation. The bound on running time of this algorithm is pseudopolynomial, i.e., has a polynomial dependence on $v_{i,j}$ instead of $\log v_{i,j}$.
	
	\paragraph{Maximum Nash Welfare solution (\MNW{}):} The \emph{Nash social welfare} of an allocation $A$ is defined as $\NW(A) \coloneqq \left( \prod_{i \in [n]} v_i(A_i) \right)^{1/n}$. The \MNW{} algorithm computes an allocation with the highest Nash social welfare (called a \emph{Nash optimal} allocation). It is known that a Nash optimal allocation is both \EF{1} and \PO{}~\citep{CKM+16unreasonable}.
\end{definition}

\begin{remark}
	\citet{CFS+19group} observed that \RR{}, \Market{}, and \MNW{} algorithms all satisfy \sEF{1}. It is easy to see that \Envygraph{} algorithm is also \sEF{1}. Among these four algorithms, only \MNW{} and \Market{} are provably also \PO{}.\footnote{It is also known that \RR{} and \Envygraph{} fail to satisfy \PO{}; see, e.g., \citep{CFS17fair}.} The allocations computed by all four algorithms have the property that there exists some agent that is not envied by any other agent. Indeed, \MNW{} and \Market{} are both \PO{} and therefore cannot have cyclic envy relations, and \RR{} and \Envygraph{} algorithms have this property by design. For such an agent (not necessarily the same agent for all four algorithms), no good needs to be removed under \sEF{1}. Therefore, from \Cref{rem:EF_HEF_Relationship}, all these algorithms are also envy-free up to $n-1$ uniformly hidden goods, or $\uHEF{(n-1)}$.
	\label{rem:EF1_algorithms_uHEF_n-1}
\end{remark}

\begin{restatable}{prop}{uHEFExistence}
	\label{prop:uHEF(n-1)}
	Given an instance with additive valuations, a $\uHEF{(n-1)}$ allocation always exists and can be computed in polynomial time, and a $\uHEF{(n-1)}+\PO{}$ allocation always exists and can be computed in pseudopolynomial time.
\end{restatable}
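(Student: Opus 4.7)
The plan is to obtain both claims as essentially direct consequences of \Cref{rem:EF1_algorithms_uHEF_n-1}, which already observes that the four standard \EF{1} algorithms produce allocations witnessing \uHEF{(n-1)}. So there is no new algorithm to design; the work is in cleanly extracting the two running-time guarantees.

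For the first claim, I would run either the round-robin algorithm \RR{} (for an arbitrary agent ordering $\sigma$) or the envy-graph algorithm \Envygraph{}. Both terminate in polynomial time on any additive instance. By \Cref{rem:EF1_algorithms_uHEF_n-1}, the output is \uHEF{(n-1)}, so it suffices to identify the hidden set $S$ explicitly: from the un-envied agent $h^\star$ (guaranteed to exist), pick for each other agent $i\neq h^\star$ a single good $g_i \in A_i$ whose removal from $A_i$ eliminates the envy of every agent towards $i$ (i.e., the witness for \sEF{1}); then take $S = \{g_i : i \neq h^\star\}$. This set has size at most $n-1$ and satisfies $|S \cap A_i| \leq 1$ for every $i$, matching \Cref{defn:HEF-k}.

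For the second claim, I would run the Fisher market-based algorithm \Market{} of \citet{BKV18Finding}, which terminates in pseudopolynomial time and returns an allocation that is \EF{1} and \PO{}. Combining \PO{} with \EF{1} gives the \sEF{1} (equivalently \uHEF{n}) guarantee cited in \Cref{rem:EF1_algorithms_uHEF_n-1}, and the same construction of $S$ as above improves this to \uHEF{(n-1)}. \PO{} of the underlying allocation is preserved (hiding goods is only a change in what agents observe, not in the allocation itself), so the output is \uHEF{(n-1)}+\PO{}.

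The only step requiring genuine justification is the existence of an un-envied agent $h^\star$ — the claim from \Cref{rem:EF1_algorithms_uHEF_n-1} that lets us save one good. This will be the main thing to verify in the proof: for \RR{}, the first agent in $\sigma$ picks a weakly favorite good in each round and is therefore not envied; for \Envygraph{}, the invariant is built into the algorithm (goods are only assigned to non-envied agents, and envy cycles are resolved by swap operations); and for \Market{}, the envy graph of any \PO{} allocation is acyclic (otherwise swapping bundles along an envy cycle would Pareto-dominate it), so it must contain a source vertex, which serves as $h^\star$. Once this is in place, the rest of the argument is bookkeeping.
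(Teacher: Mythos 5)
Your proposal takes essentially the same route as the paper: the paper disposes of this proposition via \Cref{rem:EF1_algorithms_uHEF_n-1}, i.e., the four \EF{1} algorithms are all \sEF{1}, each produces an allocation with an un-envied agent, and dropping that agent's witness good from the \sEF{1} hidden set yields a uniformly hidden set of size at most $n-1$; \RR{}/\Envygraph{} give the polynomial-time claim and \Market{} gives the pseudopolynomial \PO{} claim. One sub-step of your justification is wrong as written, though: for \RR{} it is the \emph{last} agent in $\sigma$, not the first, that is guaranteed to be un-envied. The first agent can certainly be envied (two agents, one good valued $1$ by both and one valued $0$ by both: the first agent takes the valuable good and the second envies it); what is true of the first agent is that it envies nobody. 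The correct argument for the last agent $\sigma(n)$ is that every other agent, at the moment of its own pick in each round, could have taken the good that $\sigma(n)$ ends up with later in that same round, so $v_i(A_i) \geq v_i(A_{\sigma(n)})$. This slip is local and easily repaired, and your \Envygraph{} and \Market{} (acyclic envy graph under \PO{}) arguments are sound, so the overall proof stands.
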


\begin{remark}
	Note that for any $k < n-1$, an \HEF{k} allocation might fail to exist. Indeed, with $n$ agents that have identical and positive valuations for $m = n-1$ goods, some agent will surely miss out and force the allocation to hide all $n-1$ (i.e., $k+1$ or more) goods. Therefore, the bound in \Cref{prop:uHEF(n-1)} for \uHEF{k} (and hence, for \HEF{k}) is tight in terms of $k$.
	\label{rem:HEFk_Tight_Bound_For_k}
\end{remark}

\subsection{Relevant Computational Problems}
\label{subsec:Computational_Problems}

\Cref{defn:HEFkExistence} formalizes the decision problem of checking whether a given instance admits a fair (i.e., \HEF{k}) allocation.

\begin{definition}[\textbf{\HEFkExistence}]
	Given an instance $\I$, does there exist an allocation $A$ and a set $S \subseteq [m]$ of at most $k$ goods such that $A$ is \HEF{} with respect to $S$?
	\label{defn:HEFkExistence}
\end{definition}

Notice that a certificate for \HEFkExistence{} consists of an allocation $A$ as well as a set $S$ of at most $k$ hidden goods.

Another relevant computational question involves checking whether a given allocation $A$ is \HEF{} with respect to some set $S \subseteq [m]$ of at most $k$ goods.

\begin{definition}[\textbf{\HEFkVerification}]
	Given an instance $\I$ and an allocation $A$, does there exist a set $S \subseteq [m]$ of $k$ goods such that $A$ is \HEF{} with respect to $S$?
	\label{defn:HEFkVerification}
\end{definition}

For additive valuations, both \HEFkExistence{} and \HEFkVerification{} are in \NP{}. The next problem pertains to the existence of envy-free allocations.

\begin{definition}[\textbf{\EFExistence}]
	Given an instance $\I$, does there exist an envy-free allocation for $\I$?
	\label{defn:EFExistence}
\end{definition}

\EFExistence{} is known to be \NPC{}~\citep{LMM+04approximately}. From \Cref{rem:EF_HEF_Relationship}, it follows that \HEFkExistence{} is \NPC{} when $k=0$ for additive valuations.

\section{Theoretical Results}
\label{sec:Theoretical_Results}

This section presents our theoretical results concerning the existence and computation of \HEF{k} and \uHEF{k} allocations. We will first show that \uHEF{k} is a strictly more demanding notion than \HEF{k} (\Cref{prop:HEFk_vs_uHEF}).

\begin{restatable}{prop}{HEFvsuHEF}
	\label{prop:HEFk_vs_uHEF}
	There exists an instance $\I$ that, for some fixed $k \in \N$, admits an \HEF{k} allocation but no \uHEF{k} allocation.
\end{restatable}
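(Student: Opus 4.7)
The plan is to present a concrete small instance $\I$ together with a specific value of $k \in \N$ such that $\I$ admits an \HEF{k} allocation but no \uHEF{k} allocation. Two preliminary observations narrow the search: first, \Cref{prop:uHEF(n-1)} guarantees that a \uHEF{(n-1)} allocation always exists, so any separating example must have $k \le n-2$; second, whenever $|S|=1$ the uniform constraint $|S \cap A_i| \le 1$ is automatically satisfied, so \HEF{1} and \uHEF{1} coincide for any instance. Together these force $n \ge 4$ and $k \ge 2$, suggesting the smallest attempt is $n=4$ and $k=2$.

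I would then construct an instance with $n=4$ agents and a small number of goods, tuning the additive valuations so that (i)~some allocation $A^*$ admits a hidden set $S^* \subseteq A_1^*$ with $|S^*|=2$ such that $A^*$ becomes envy-free after hiding $S^*$, witnessing \HEF{2}; and (ii)~for every allocation $A$ and every set $S$ with $|S| \le 2$ and $|S \cap A_j| \le 1$ for all $j$, some pair $(h,i)$ satisfies $v_h(A_h) < v_h(A_i \setminus S)$, ruling out \uHEF{2}. Part~(i) reduces to a finite check of pairwise envy-free inequalities on the specified $A^*$ and $S^*$. Part~(ii) requires enumerating allocations up to symmetry and arguing, in each case, one of two structural failure modes: either (a)~some bundle $A_i$ concentrates two goods both valued highly by a single agent $h$, so that removing at most one good from $A_i$ leaves $v_h(A_i \setminus S) > v_h(A_h)$; or (b)~two distinct envious agents $h_1, h_2$ envy the same bundle $A_i$ but care about different goods within it, so the per-bundle hiding cap cannot address both envies simultaneously, and the remaining budget of one additional hide outside $A_i$ is insufficient to help.

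The main obstacle is engineering the valuations so that the dichotomy in~(ii) holds for \emph{every} allocation: any attempt to spread the concentrated high-value goods across multiple bundles must produce a fresh concentration of envy elsewhere, and any attempt to place high-value goods in the bundle of the envious agent must force a new allocation whose remaining envy pattern still falls into case~(a) or~(b). This typically requires a delicate calibration of the valuations---enough asymmetry to block trivial envy-free or uniformly-hideable allocations, yet enough structural rigidity that the two failure modes always apply. Once such an instance is exhibited together with the witness pair $(A^*, S^*)$, the proposition follows immediately.
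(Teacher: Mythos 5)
Your preliminary observations are sound: \Cref{prop:uHEF(n-1)} does force $k \le n-2$ in any separating example, and \HEF{1} and \uHEF{1} do coincide since a singleton hidden set trivially satisfies the per-bundle cap, so $k \ge 2$ and $n \ge 4$ are indeed necessary. But the proposition is an existence claim, and your proposal stops exactly where the proof has to begin: you never exhibit the instance. Everything after ``I would then construct an instance \dots tuning the additive valuations so that'' is a description of the properties a witness would need, and you yourself flag that ``the main obstacle is engineering the valuations'' --- that engineering \emph{is} the proof, and it is missing. In particular, your part~(ii) requires showing that \emph{every} allocation fails \uHEF{2}, via an enumeration whose termination you have not demonstrated for any concrete valuation matrix; without numbers there is nothing to check.

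For comparison, the paper's construction (five agents, six goods, $k=2$) shows how to make the quantification over all allocations collapse into a short forced chain rather than a case enumeration: two goods $g_1,g_2$ are valued at $10$ by three agents who value everything else at $1$, so in any \uHEF{2} allocation at least one of those three agents owns neither and would envy the owner of a revealed $g_1$ or $g_2$; hence both must be hidden, exhausting the budget of $2$. A third good $g_3$, valued at $2$ by each of the remaining two agents $a_1,a_2$ (who value $g_1,g_2$ at $1$ each), is then necessarily revealed, and whichever of $a_1,a_2$ does not own $g_3$ can only reach utility $2$ by owning \emph{both} $g_1$ and $g_2$ --- putting two hidden goods in one bundle and violating the uniformity cap. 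If you want to complete your write-up, you need either this instance or one of your own with the full valuation table and the corresponding forced argument; note also that the paper uses $n=5$, so your implicit assumption that $n=4$ suffices would itself need a witness.
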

\begin{proof}
	Consider the fair division instance $\I$ with five agents $a_1,\dots,a_5$ and six goods $g_1,\dots,g_6$ shown in \Cref{tab:HEFk_vs_uHEFk}. Observe that the allocation $A = (A_1,\dots,A_5)$ with $A_1 = \{g_1,g_2\}$, $A_2 = \{g_3\}$, $A_3 = \{g_4\}$, $A_4 = \{g_5\}$, $A_5 = \{g_6\}$ satisfies \HEF{2} with respect to the set $S = \{g_1,g_2\}$.
	
	\begin{table}[h]
		\centering
		\small
		\begin{tabular}{ c|cccccc }
			& $g_1$ & $g_2$ & $g_3$ & $g_4$ & $g_5$ & $g_6$\\ \hline
			$a_1$ & $1$ & $1$ & $2$ & $0$ & $0$ & $0$ \\
			$a_2$ & $1$ & $1$ & $2$ & $0$ & $0$ & $0$ \\
			$a_3$ & $10$ & $10$ & $1$ & $1$ & $1$ & $1$ \\
			$a_4$ & $10$ & $10$ & $1$ & $1$ & $1$ & $1$ \\
			$a_5$ & $10$ & $10$ & $1$ & $1$ & $1$ & $1$ \\
		\end{tabular}
		\caption{The instance used in the proof of \Cref{prop:HEFk_vs_uHEF}.}
		\label{tab:HEFk_vs_uHEFk}
	\end{table}
	
	We will show that $\I$ does not admit a \uHEF{2} allocation. Suppose, for contradiction, that there exists an allocation $B$ satisfying \uHEF{2}. Then, $B$ must hide $g_1$ and $g_2$ (otherwise, at least one of $a_3$, $a_4$ or $a_5$ will envy the owner(s) of these goods). Thus, in particular, the good $g_3$ must be revealed by $B$. Assume, without loss of generality, that $g_3$ is \emph{not} assigned to $a_1$ in $B$ (otherwise, a similar argument can be carried out for $a_2$). Then, $B$ must assign both $g_1$ and $g_2$ to $a_1$ (so that $a_1$ does not envy the owner of $g_3$). However, this violates the one-hidden-good-per-agent property of \uHEF{k}, which is a contradiction.
\end{proof}

Recall from \Cref{subsec:Computational_Problems} that \HEFkExistence{} is \NPC{} when $k=0$. This still leaves open the question whether \HEFkExistence{} is \NPC{} for \emph{any} fixed $k \in \N$. Our next result (\Cref{thm:HEFk_Existence_NPcomplete_IdenticalVals}) shows that this is indeed the case, even under the restricted setting of \emph{identical} valuations (i.e., for every $j \in [m]$, $v_{i,j}=v_{h,j}$ for every $i,h \in [n]$).

\begin{restatable}[\textbf{Hardness of \HEFkExistence{}}]{theorem}{HEFkExistenceNPcompleteIdenticalVals}
	\label{thm:HEFk_Existence_NPcomplete_IdenticalVals}
	For any fixed $k \in \N$, \HEFkExistence{} is \NPC{} even for identical valuations.
\end{restatable}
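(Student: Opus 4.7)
Membership in $\NP$ is routine: the certificate is a pair $(A, S)$ consisting of an allocation together with a hidden set $S \subseteq [m]$ with $|S| \leq k$, and the condition $v(A_i) \geq v(A_h \setminus S)$ for every ordered pair $(i,h)$ can be verified in $O(n^2 m)$ time. For the hardness direction, my plan is to reduce from the strongly $\NPH$ problem \textsc{$3$-Partition}, chosen because its item-size restriction $a_i \in (T/4, T/2)$ provides the rigidity needed to absorb the extra flexibility granted by the hidden set.

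Given a \textsc{$3$-Partition} instance with $3m$ items $a_1,\dots,a_{3m}$ summing to $mT$, I would build an \HEFkExistence{} instance with identical additive valuations having $m + k$ agents and $3m + k$ goods: the original items at their values $a_i$, plus $k$ ``blocker'' goods each of value $L \gg mT$. For the forward direction, any valid $3$-partition extends to an \HEF{k} allocation by giving each triple to its own agent, each blocker to one of the remaining $k$ agents, and hiding exactly the $k$ blockers. For the reverse direction, I would first argue that in any \HEF{k} allocation the hidden set $S$ must consist precisely of the $k$ blockers: with $m + k$ agents and only $k$ blockers, at least $m$ agents hold no blocker, and their bundles have value at most $mT < L$, so a revealed blocker would trigger envy from a blocker-free agent. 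With $S$ pinned to the blocker set, the \HEF{} condition forces every blocker-free agent to share a common residual (partition-item) value $r^\ast$ and every blocker-holding agent to have residual at most $r^\ast$.

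The ``clean'' regime $r^\ast = T$ is immediate: a counting argument on the total residual $mT$ then forces each blocker-holding agent to carry residual $0$ and each blocker-free agent to carry a triple summing to $T$, which is a $3$-partition solution. The main obstacle is ruling out \emph{spurious} allocations with $r^\ast < T$, in which blocker-holding agents absorb partition items to slacken the balance. I plan to rule these out by a case analysis that leverages the constraint $a_i \in (T/4, T/2)$: the feasible values of $r^\ast$ are confined to a narrow window corresponding to bundles of one, two, or three items, and for each such candidate value the required residual $mT - (m + k - p) r^\ast$ for the blocker-holders must itself be realizable as a sum of the remaining items under the same size constraint. Pushing this case analysis through uniformly in the fixed parameter $k$---augmenting the construction with a bounded number of ``anchor'' items of value exactly $T$ if needed to eliminate borderline cases---is the delicate bookkeeping step, but the stringent arithmetic imposed by \textsc{$3$-Partition} no-instances makes me optimistic that $r^\ast = T$ is forced as the only feasible common residual.
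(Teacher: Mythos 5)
Your forward direction and your NP-membership argument are fine, and you correctly identify the weak point yourself: ruling out ``spurious'' allocations in which the common residual $r^\ast$ of the blocker-free agents is strictly less than $T$. Unfortunately, that obstacle is fatal to the construction as proposed, not merely delicate bookkeeping. Here is a concrete counterexample for $k=1$. Take the \textsc{$3$-Partition} instance with $m=5$, $T=100$, and items $\{47,47,30\} \cup \{31 \text{ (eight copies)}\} \cup \{32 \text{ (four copies)}\}$; all items lie in $(T/4,T/2)=(25,50)$ and sum to $500=mT$. No triple summing to $100$ can contain a $47$ (the other two items would have to sum to $53$, but the smallest available pair sum is $30+31=61$), so this is a no-instance. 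Yet the reduced fair-division instance admits an \HEF{1} allocation: give one agent $\{47,47\}$, give four agents triples summing to $94$ each (e.g.\ $\{31,31,32\}$), and give the last agent the blocker together with the leftover item of value $30$; hiding the single blocker yields envy-freeness, since every blocker-free agent has value $94 \geq 94$ and $94 \geq 30$, while the blocker-holder is unenvious of everyone. So the reduction maps a no-instance to a yes-instance, and the reverse direction fails. The root cause is exactly the slack you worried about: with $m$ agents sharing total residual $mT$ and only the inequality $r^\ast \geq mT/(m+k)$ available, values $r^\ast$ slightly below $T$ are feasible whenever a few items can be parked on a blocker-holder, and the $(T/4,T/2)$ size restriction does not close this window. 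Your proposed ``anchor items of value exactly $T$'' cannot appear in a \textsc{$3$-Partition} instance and, added as extra goods, do not obviously restore rigidity.

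For contrast, the paper avoids this issue entirely by reducing from \Partition{} with only \emph{two} agents sharing the partition items (plus one agent holding a calibration good of value exactly $T$, and $k$ agents holding the hidden dummy goods). Since the two partition agents together hold goods worth $2T$ and each must reach value at least $T$ to avoid envying the owner of the revealed calibration good, both are forced to hold exactly $T$ --- equality comes from a two-sided squeeze rather than from a counting argument over many agents. If you want to salvage your approach, you would need a similar mechanism that pins every blocker-free agent's value to exactly $T$ from below (e.g.\ a revealed good of value $T$ that no blocker-free agent may fall short of), at which point the total-sum argument does force residual $0$ for the blocker-holders; but as written the proof has a genuine gap.
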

\begin{proof}
	We will show a reduction from \Partition{}, which is known to be \NPC{}~
	\citep{GJ79computers}. An instance of \Partition{} consists of a multiset $X = \{x_1,x_2,\dots,x_n\}$ with $x_i \in \N$ for all $i \in [n]$. The goal is to determine whether there exists $Y \subset X$ such that $\sum_{x_i \in Y} x_i = \sum_{x_i \in X \setminus Y} x_i = T$, where $T \coloneqq \frac{1}{2} \sum_{x_i \in X} x_i$.
	
	We will construct a fair division instance with $k+3$ agents $a_1,\dots,a_{k+3}$ and $n+k+1$ goods. The goods are classified into $n+1$ \emph{main goods} $g_1,\dots,g_{n+1}$ and $k$ \emph{dummy goods} $d_1,\dots,d_k$. The (identical) valuations are defined as follows: Every agent values the goods $g_1,\dots,g_n$ at $x_1,\dots,x_n$ respectively; the good $g_{n+1}$ at $T$, and each dummy good at $4T$.
	
	($\Rightarrow$) Suppose $Y$ is a solution of \Partition{}. Then, an \HEF{k} allocation can be constructed as follows: Assign the main goods corresponding to the set $Y$ to agent $a_1$ and those corresponding to $X \setminus Y$ to agent $a_2$. The good $g_{n+1}$ is assigned to agent $a_3$. Each of the remaining $k$ agents is assigned a unique dummy good. Note that every agent in the set $\{a_1,a_2,a_3\}$ envies every agent in the set $\{a_4,\dots,a_{k+3}\}$, and these are the only pairs of agents with non-zero envy. Therefore, the allocation can be made envy-free by hiding the $k$ dummy goods, i.e., the allocation is \HEF{} with respect to the set $\{d_1,\dots,d_k\}$.
	
	($\Leftarrow$) Now suppose there exists an \HEF{k} allocation $A$. Since there are $k$ dummy goods and $k+3$ agents, there must exist at least three agents that do not receive any dummy good in $A$. Without loss of generality, let these agents be $a_1$, $a_2$ and $a_3$ (otherwise, we can reindex). We claim that all dummy goods must be hidden under $A$. Indeed, agent $a_1$ does not receive any dummy good, and therefore its maximum possible valuation can be $v(g_1 \cup \dots \cup g_{n+1}) = 3T < v(d_j)$ for any dummy good $d_j$. If some dummy good $d_j$ is not hidden, then $a_1$ will envy the owner of $d_j$, contradicting \HEF{k}. Therefore, all dummy goods must be hidden, and since there are $k$ such goods, these are the only ones that can be hidden.
	
	The above observation implies that the good $g_{n+1}$ must be revealed by $A$. Furthermore, $g_{n+1}$ must be assigned to one of $a_1$, $a_2$ or $a_3$ (otherwise, by pigeonhole principle, one of these agents will have valuation at most $\frac{2T}{3}$ and will envy the owner of $g_{n+1}$). If $g_{n+1}$ is assigned to $a_3$, then the remaining main goods $g_1,\dots,g_n$ must be divided between $a_1$ and $a_2$ such that $v(A_1) \geq T$ and $v(A_2) \geq T$. This gives a partition of the set $X$.
\end{proof}

Another commonly used preference restriction is that of \emph{binary} valuations (i.e., for every $i \in [n]$ and $j \in [m]$, $v_{i,j} \in \{0,1\}$). We note that even under this restriction, \HEFkExistence{} remains \NPC{} when $k=0$ (\Cref{cor:HEFk_Existence_NPcomplete_BinaryVals}). This observation follows from a result of \citet{AGM+15fair}, who showed that determining the existence of an envy-free allocation is \NPC{} even for binary valuations (\Cref{prop:EF_Existence_NPcomplete_BinaryValuations}). We provide an alternative proof of this statement in \Cref{subsec:Proof_EF_Existence_NPcomplete_BinaryValuations} in the appendix.

\begin{restatable}[\citealp{AGM+15fair}; Theorem 11]{prop}{EFExistenceNPcompleteBinaryVals}
	\label{prop:EF_Existence_NPcomplete_BinaryValuations}
	\EFExistence{} is \NPC{} even for binary valuations.
\end{restatable}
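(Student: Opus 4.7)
The plan is to reduce from a canonical NP-complete combinatorial problem, such as a restricted version of \ExactCoverByThreeSets{} or \textsc{3-Dimensional Matching}, to \EFExistence{} with binary valuations. Membership in $\NP{}$ is immediate: an allocation $A$ serves as a polynomial-size certificate, and under binary valuations envy-freeness is verified in polynomial time by comparing the cardinalities $|A_i \cap V_i|$ and $|A_h \cap V_i|$ for every pair $(i,h)$, where $V_i \coloneqq \{j \in [m] : v_{i,j} = 1\}$. This combinatorial reformulation of \EF{} under $0/1$ utilities is what the whole reduction hinges on.

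For the hardness direction, I would take an \ExactCoverByThreeSets{} instance with universe $U$ of size $3q$ and a collection $\mathcal{S}$ of $3$-subsets, and construct a fair-division instance with one agent $a_j$ per set $S_j \in \mathcal{S}$, one ``element good'' $g_u$ per element $u \in U$, and a calibrated number of ``auxiliary goods''. The valuations are binary and sparse: agent $a_j$ values each element good whose element lies in $S_j$ at $1$ and all remaining element goods at $0$, with a prescribed pattern on the auxiliary goods used to break symmetries and distinguish ``cover'' from ``non-cover'' agents.

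For the forward direction, I would show that an exact cover $\mathcal{S}^{\star}$ of size $q$ induces an envy-free allocation in the natural way: each agent $a_j$ with $S_j \in \mathcal{S}^{\star}$ receives exactly the three element goods corresponding to $S_j$, and the agents outside the cover receive a balanced share of the auxiliary goods, after which $|A_i \cap V_i| \ge |A_h \cap V_i|$ can be checked pair by pair. For the reverse direction, I would argue that envy-freeness in the combinatorial form above, combined with a pigeonhole on the $3q$ element goods distributed among the agents, forces exactly $q$ agents to receive pairwise disjoint triples of element goods whose union is $U$, thereby producing an exact cover.

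The main obstacle I expect is the pairwise-overlap issue: whenever two sets $S_j$ and $S_h$ share elements, agent $a_j$'s valuation for the bundle assigned to $a_h$ can reach $|S_j \cap S_h|$, which may easily exceed what $a_j$ retains inside its own bundle, breaking envy-freeness for trivial structural reasons unrelated to the cover question. To sidestep this I would either reduce from a restricted variant of \ExactCoverByThreeSets{} (or from \textsc{3-Dimensional Matching}) in which pairwise overlaps of the input sets are provably bounded, or augment the construction with auxiliary goods whose valuations inflate every agent's own-bundle value by just enough to dominate the worst-case overlap. Once the gadget is tuned accordingly, both directions of the reduction follow by direct case analysis on how the element and auxiliary goods are distributed across agents.
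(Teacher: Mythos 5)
Your \NP{}-membership argument and the general shape of the reduction are fine, but the proposal has a genuine gap concentrated in exactly the two places you leave unspecified. First, the auxiliary gadget is asked to do two jobs that pull against each other. Since cover membership is not known in advance, all set-agents must value the auxiliary goods symmetrically; but then whoever receives auxiliary goods becomes a target of envy from every other set-agent, so the auxiliary goods do not merely ``inflate every agent's own-bundle value'' --- they create new envy constraints of the same kind you were trying to neutralize. Any calibration (e.g.\ forcing each non-cover agent's auxiliary utility to sit between the maximum pairwise overlap and $3$) may salvage the forward direction, but it simultaneously enlarges the space of envy-free allocations that have nothing to do with covers. Second, and more seriously, the reverse direction is asserted rather than argued: there is no reason an envy-free allocation must hand out the element goods in triples matching sets of $\mathcal{S}$. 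For instance, one can try to give each agent an equal share of auxiliary goods plus a single element good from its own set, while scattering the remaining element goods among agents that do not value them and keeping every cross-valuation $|A_j \cap S_h|$ at most $1$; such an allocation can be envy-free in instances with no exact cover, and ruling configurations of this kind out is precisely the hard combinatorial content of the proof. Your ``pigeonhole on the $3q$ element goods'' does not supply it. For comparison, the paper reduces from \EQcoloring{} (edge agents and dummy agents; vertex goods and edge goods) and the reverse direction there occupies six separately proven structural properties (no edge agent gets two edge goods, no dummy agent gets an edge good, no edge agent gets a vertex good, etc.), which is a fair indication of how much machinery the ``extract a combinatorial object from an arbitrary \EF{} allocation'' step actually requires. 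Until you pin down the auxiliary gadget and prove the corresponding structural properties for your \ExactCoverByThreeSets{} construction, the reduction is not established.
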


\begin{restatable}{corollary}{HEFkExistenceNPcompleteBinaryVals}
	\label{cor:HEFk_Existence_NPcomplete_BinaryVals}
	For $k=0$, \HEFkExistence{} is \NPC{} even for binary valuations.
\end{restatable}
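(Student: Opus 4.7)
The plan is to observe that this corollary is essentially immediate from \Cref{prop:EF_Existence_NPcomplete_BinaryValuations} together with the equivalence between \HEF{0} and \EF{} noted in \Cref{rem:EF_HEF_Relationship}. Specifically, I would argue that \HEFkExistence{} with $k=0$ coincides with \EFExistence{} as a decision problem, so hardness transfers directly.

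In more detail, first I would recall that by \Cref{defn:HEF-k}, an allocation $A$ is \HEF{0} if there exists a set $S \subseteq [m]$ of size at most $0$ (hence $S = \emptyset$) such that $v_i(A_i) \geq v_i(A_h \setminus S) = v_i(A_h)$ for all pairs $i,h$, which is exactly the definition of envy-freeness. Thus a binary-valuations instance $\I$ is a YES instance of \HEFkExistence{} with $k=0$ if and only if it is a YES instance of \EFExistence{}. This gives a trivial (identity) polynomial-time reduction in both directions between the two problems restricted to binary valuations.

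Next I would invoke \Cref{prop:EF_Existence_NPcomplete_BinaryValuations}, which states that \EFExistence{} is \NPC{} even for binary valuations. Combined with the equivalence above, this implies \NP{}-hardness of \HEFkExistence{} for $k=0$ under binary valuations. For membership in \NP{}, a certificate is an allocation $A$ together with (the empty) hidden set $S$; verifying envy-freeness of $A$ takes polynomial time since additive valuations can be evaluated efficiently, and this was already noted in \Cref{subsec:Computational_Problems} where it is observed that \HEFkExistence{} is in \NP{} for additive valuations.

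There is no real obstacle here since the corollary is purely a syntactic consequence of the preceding proposition and the $k=0$ case of the definition; the only thing to be careful about is making the equivalence between \HEF{0} and \EF{} explicit so that the reduction is unambiguous.
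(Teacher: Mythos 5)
Your proposal is correct and matches the paper's own reasoning: the corollary is obtained exactly as you describe, by combining the equivalence of \HEF{0} and \EF{} from \Cref{rem:EF_HEF_Relationship} with the \NPC{}ness of \EFExistence{} for binary valuations in \Cref{prop:EF_Existence_NPcomplete_BinaryValuations}, with \NP{} membership already noted in \Cref{subsec:Computational_Problems}. No gaps.
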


\Cref{prop:EF_Existence_NPcomplete_BinaryValuations} is also useful in establishing the computational hardness of finding an \HEF{k}+\PO{} allocation. Note that unlike \Cref{cor:HEFk_Existence_NPcomplete_BinaryVals}, \Cref{prop:HEFkPOExistence_NPcomplete_BinaryVals} holds for every fixed $k \in \N$.

\begin{restatable}[\textbf{Hardness of \HEF{k}+\PO{}}]{theorem}{HEFkPOExistenceNPcompleteBinaryVals}
	\label{prop:HEFkPOExistence_NPcomplete_BinaryVals}
	Given any instance $\I$ with binary valuations and any fixed $k \in \N \cup \{0\}$, it is \NPH{} to determine if $\I$ admits an allocation that is envy-free up to $k$ hidden goods $(\HEF{k})$ and Pareto optimal $(\PO{})$.
\end{restatable}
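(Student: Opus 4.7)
The plan is to reduce from $\EFExistence{}$ with binary valuations, which is $\NPC{}$ by \Cref{prop:EF_Existence_NPcomplete_BinaryValuations}. Given a binary instance $\I = \langle [n], [m], \V \rangle$, I would construct $\I'$ by introducing $k$ new ``dummy'' goods $d_1, \dots, d_k$ and $k+1$ new ``dummy'' agents $b_1, \dots, b_{k+1}$ with the following valuations: each original agent $a_i$ values every $d_j$ at $0$ (and retains its original valuation for each $g_j$), while each $b_i$ values every $g_j$ at $0$ and every $d_j$ at $1$. I would then show that $\I$ admits an EF allocation if and only if $\I'$ admits an $\HEF{k} + \PO{}$ allocation.

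For the $(\Leftarrow)$ direction, let $A'$ be $\HEF{k} + \PO{}$ for $\I'$ with hidden set $S$, $|S| \leq k$. Pareto optimality forces every $d_j$ to be held by some $b_l$, because any dummy held by an $a_i$ could be transferred to any $b_l$ as a strict Pareto improvement ($a_i$ is indifferent, $b_l$ gains $1$). With $k$ dummies distributed over $k+1$ dummy agents, pigeonhole guarantees a $b_0$ holding no dummy and therefore utility $0$. The $\HEF{}$ condition for $b_0$ demands $v_{b_0}(A'_{b_i} \setminus S) = 0$ for every $b_i \neq b_0$; combined with $v_{b_0}(d_j) = 1$, this forces $S$ to contain every $d_j$, so $S = \{d_1, \dots, d_k\}$ exactly and no original good is hidden. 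The $\HEF{}$ condition among the $a_i$'s then reduces to envy-freeness on the original goods (the remaining conditions involving $b$'s are trivial: each $b_l$ sees value $0$ everywhere once dummies are hidden, and each $a_i$ sees value $0$ in any $A'_{b_l}$), yielding an EF allocation for $\I$.

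For the $(\Rightarrow)$ direction, suppose $\I$ admits an EF allocation. I would invoke the known fact that for binary additive valuations, EF existence implies $\EF{} + \PO{}$ existence (for instance, a maximum Nash welfare allocation is $\EF{} + \PO{}$ in this setting whenever any EF allocation exists). Taking such an $\EF{} + \PO{}$ allocation $A$ for $\I$, I extend it to $\I'$ by assigning $d_i$ to $b_i$ for each $i \in [k]$ and leaving $b_{k+1}$ empty, and hide $S = \{d_1, \dots, d_k\}$. The resulting allocation is $\PO{}$ (the original part remains $\PO{}$, dummies go to valuing agents, and no Pareto-improving cross-exchange exists) and $\HEF{k}$ (envy among $a$'s vanishes by EF of $A$; envy involving $b$'s vanishes since all dummies are hidden and $b$'s do not value the original goods). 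The base case $k=0$ is analogous: the construction adds only $b_1$ who values nothing, and the equivalence reduces to EF existence coinciding with $\EF{} + \PO{}$ existence for binary valuations.

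The principal obstacle is the $(\Rightarrow)$ direction, specifically bridging the gap from EF existence to $\EF{} + \PO{}$ existence in the binary setting; this is handled by invoking the known structural property of maximum Nash welfare allocations for binary additive valuations.
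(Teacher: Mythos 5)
Your gadget and your reverse direction essentially coincide with the paper's: the sketch in the paper uses the same $k$ new goods approved by all of $k+1$ new agents (and by nobody else), with new agents indifferent to the old goods, and the same \PO{}-plus-pigeonhole argument forcing the hidden set to be exactly the set of new goods. The genuine problem is your forward direction. The ``known fact'' you invoke---that for binary additive valuations \EF{} existence implies \EF{}+\PO{} existence, witnessed by a maximum Nash welfare allocation---is false. Take three agents and three goods with $v_1 = v_2 = (1,1,1)$ and $v_3 = (0,0,0)$. Giving one good to each agent is envy-free (every comparison is $1 \geq 1$ or $0 \geq 0$). But for binary valuations \PO{} forces every good approved by someone to go to an approver, so any \PO{} allocation splits all three goods between agents $1$ and $2$; one of them gets at least two goods and is envied by the other. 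Hence no \EF{}+\PO{} allocation exists, and the \MNW{} allocation (a $2$--$1$ split between agents $1$ and $2$) is not \EF{}. This is not merely a citation gap: on this instance your reduction maps a yes-instance of \EFExistence{} to a no-instance of the \HEF{k}+\PO{} problem, since any \PO{} allocation of the augmented instance must hide all $k$ new goods (by your own pigeonhole argument) \emph{and} at least one original good, i.e., at least $k+1$ goods in total. So the reduction, stated from arbitrary binary \EFExistence{} instances, is incorrect; the $k=0$ case fails on the same example.

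The repair---which is what the paper's sketch implicitly relies on---is to reduce from the particular hard instances constructed in the proof of \Cref{prop:EF_Existence_NPcomplete_BinaryValuations} (\Cref{subsec:Proof_EF_Existence_NPcomplete_BinaryValuations}) rather than from arbitrary binary instances. The properties established there show that \emph{every} \EF{} allocation of those instances assigns each good to an agent who approves it, and is therefore automatically \PO{}; thus \EF{} existence and \EF{}+\PO{} existence coincide on that family, and your forward direction then goes through by extending such an allocation with an arbitrary assignment of the new goods to the new agents and hiding them. Two smaller points in your reverse direction: the claim that ``each $a_i$ sees value $0$ in any $A'_{b_l}$'' itself needs the \PO{} observation that no new agent can hold an original good approved by an original agent, and original goods approved by nobody may legitimately sit with new agents, so the restriction of $A'$ to the original agents must be completed by reassigning such goods arbitrarily to obtain a genuine allocation of $[m]$. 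Both are easy, but should be said.
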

\begin{proof} (Sketch)
	Starting from any instance of \EFExistence{} with binary valuations (\Cref{prop:EF_Existence_NPcomplete_BinaryValuations}), we add to it $k$ new goods and $k+1$ new agents such that all new goods are approved by all new agents (and no one else). Also, the new agents have zero value for the existing goods. In the forward direction, an arbitrary allocation of new goods among the new agents works. In the reverse direction, \PO{} forces each new (respectively, existing) good to be assigned among new (respectively, existing) agents only. The imbalance between new agents and new goods means that all (and only) the new goods must be hidden. Then, the restriction of the \HEF{k} allocation to the existing agents/goods gives the desired \EF{} allocation.
\end{proof}

We will now proceed to analyzing the computational complexity of \HEFkVerification{}. Here, we show a hardness-of-approximation result (\Cref{prop:HEFk_Verification_Hardness_Of_Approximation_BinaryVals}). The inapproximability factor is stated in terms of the aggregate envy, defined as follows: Given any allocation $A$, the \emph{aggregate envy} in $A$ is the sum of all pairwise envy values, i.e., 
\begin{align*}
	E \coloneqq \sum_{h \in [n]} \sum_{i \neq h} \max\{0, v_i(A_h) - v_i(A_i)\}.
\end{align*}

Note that \HEFkVerification{} is stated as a decision problem (\Cref{defn:HEFkVerification}). However, one can consider an approximation version of this problem as follows: A $c$-approximation algorithm for \HEFkVerification{} takes as input a fair division instance and an allocation, and computes a set of goods of size at most $c \cdot \kopt$, where $\kopt$ is the size of the smallest hidden set for the given allocation. Under this definition, \Cref{prop:HEFk_Verification_Hardness_Of_Approximation_BinaryVals} can be interpreted as follows: Given any $\eps>0$, there is no polynomial-time $(1-\eps).\ln E$-approximation algorithm for \HEFkVerification{}, unless P=NP.

\begin{restatable}[\textbf{\HEFkVerification{} inapproximability}]{theorem}{HEFkVerificationHardnessOfApproximationBinaryVals}
	\label{prop:HEFk_Verification_Hardness_Of_Approximation_BinaryVals}
	Given any $\eps > 0$, it is \NPhard{} to approximate \HEFkVerification{} to within $(1-\eps) \cdot \ln E$ even for binary valuations, where $E$ is the aggregate envy in the given allocation.
\end{restatable}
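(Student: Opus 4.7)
The plan is to reduce from \SetCover{}, using its $(1-\eps)\ln N$ inapproximability (with $N$ the universe size), which is \NPH{} under $\P \neq \NP$ by Dinur and Steurer (building on Feige). Since the target lower bound $(1-\eps)\ln E$ matches the classical Set Cover threshold, I would arrange the reduction so that the aggregate envy $E$ in the constructed instance is exactly the Set Cover universe size, and the minimum hidden set coincides with a minimum set cover.

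Concretely, starting from a \SetCover{} instance with universe $U=\{u_1,\dots,u_N\}$ and family $\F=\{F_1,\dots,F_M\}$, I would build a fair division instance with binary valuations as follows. Create $N+1$ agents $a_0,a_1,\dots,a_N$, with $a_i$ (for $i \geq 1$) corresponding to element $u_i$. The goods consist of one \emph{set-good} $g_j$ for each set $F_j$, plus $d_i-1$ private \emph{dummy goods} for each agent $a_i$ ($i \geq 1$), where $d_i \coloneqq |\{j : u_i \in F_j\}|$. Agent $a_i$ (for $i \geq 1$) values $g_j$ at $1$ iff $u_i \in F_j$, values its own dummies at $1$, and values everything else at $0$; agent $a_0$ values only the set-goods (each at $1$). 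The candidate allocation $A$ assigns all set-goods to $a_0$ and each bundle of dummies to its owner.

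I would then verify two invariants. First, the envy analysis: $v_i(A_i)=d_i-1$ and $v_i(A_0)=d_i$, so $a_i$ ($i \geq 1$) envies $a_0$ by exactly one unit; every other ordered pair contributes zero envy because dummies are private and $a_0$ derives no value outside its own bundle. Hence the aggregate envy is precisely $E=N$. Second, the hidden-set/set-cover correspondence: the unit of envy from $a_i$ to $a_0$ is extinguished only by hiding some set-good $g_j \in A_0$ with $u_i \in F_j$ (hiding dummies is useless since they sit in bundles that nobody else values); therefore the minimum hidden set for $A$ has size exactly that of a minimum cover of $U$ by $\F$. Chaining these, any polynomial-time $(1-\eps)\ln E$-approximation for \HEFkVerification{}, applied to this instance, returns a set of set-goods corresponding to a set cover of size at most $(1-\eps)\ln N$ times the optimum, contradicting the Dinur--Steurer hardness of \SetCover{}.

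The main obstacle is not conceptual but a matter of calibration. I need the dummy goods to be (i) genuinely private, so that no stray envy appears among the element-agents and so that the optimum never benefits from hiding a dummy, and (ii) counted so as to make the per-pair envy exactly one, so that $E$ equals $N$ on the nose rather than some constant multiple that would muddy the $(1-\eps)\ln E$ factor. The count $d_i-1$ for $a_i$'s dummies and the ``private good'' trick handle both simultaneously; the rest is routine.
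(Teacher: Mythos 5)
Your reduction is essentially the paper's own proof viewed through the Set Cover/Hitting Set duality: the paper reduces from \HittingSet{} with one dummy agent per set $F_i$ holding $|F_i|-1$ private dummy goods and envying a main agent who holds all element-goods by exactly one unit, which is the mirror image of your element-agents, set-goods, and $d_i-1$ dummies, and both calibrate the aggregate envy to equal the parameter in the Dinur--Steurer bound. The argument is correct and matches the paper's approach.
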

\begin{proof}
	We will show a reduction from \HittingSet{}. An instance of \HittingSet{} consists of a finite set $X = \{x_1,\dots,x_p\}$, a collection $\F = \{F_1,\dots,F_q\}$ of subsets of $X$, and some $k \in \N$. The goal is to determine whether there exists $Y \subseteq X$, $|Y| \leq k$ that intersects every member of $\F$ (i.e., for every $F \in \F$, $Y \cap F \neq  \emptyset$). It is known that given any $\eps > 0$, it is \NPhard{} to approximate \HittingSet{} to within a factor $(1-\eps) \cdot \ln |\F|$ \citep{DS14analytical}.
	
	We will construct a fair division instance with $n = q+1$ agents and $m = p + \sum_{i=1}^q (|F_i|-1)$ goods. The agents are classified into $q$ \emph{dummy agents} $a_1,\dots,a_q$ and one \emph{main agent} $a_{q+1}$. The goods are classified into $p$ \emph{main goods} $g_1,\dots,g_p$ and $q$ distinct sets of dummy goods, where the $i^\text{th}$ set consists of the goods $f^i_{1},\dots,f^i_{|F_i|-1}$.
	
	The valuations are as follows: The main agent approves all the main goods, i.e., for all $j \in [p]$, $v_{q+1}(\{g_j\}) = 1$. Each dummy agent $a_i$ approves the dummy goods in the $i^\text{th}$ set as well as those main goods that intersect with $F_i$, i.e., for every $i \in [q]$, $v_i(\{f^i_j\}) = 1$ for all $j \in [|F_i|-1]$, and $v_i(\{g_j\}) = 1$ whenever $x_j \in F_i$. All other valuations are set to $0$. 
	
	The input allocation $A = (A_1,\dots,A_{q+1})$ is defined as follows: The main agent $a_{q+1}$ is assigned all the main goods, i.e., $A_{q+1} \coloneqq \{g_1,\dots,g_p\}$. For every $i \in [q]$, the dummy agent $a_i$ is assigned the $|F_i|-1$ dummy goods in the $i^\text{th}$ set, i.e., $A_i \coloneqq \{f^i_{1},\dots,f^i_{|F_i|-1}\}$. Note that in the allocation $A$, each dummy agent envies the main agent by one approved good, and these are the only pairs of agents with envy.
	
	($\Rightarrow$) Suppose $Y \subseteq X$, $|Y| \leq k$ is solution of the \HittingSet{} instance. We claim that the allocation $A$ is \HEF{} with respect to the set $S \coloneqq \{g_j : x_j \in Y\}$ with $|S| \leq k$. Indeed, since $S$ is induced by a hitting set, each dummy agent approves at least one good in $S$. Therefore, by hiding the goods in $S$, the envy from the dummy agents can be eliminated.
	
	($\Leftarrow$) Now suppose there exists $S \subseteq [m]$, $|S| \leq k$ such that $A$ is \HEF{} with respect to $S$. Then, for every $i \in [q]$, the set $S$ must contain at least one good that is approved by the dummy agent $a_i$ (otherwise $A$ will not be envy-free after hiding the goods in $S$). It is easy to see that the set $Y \coloneqq \{x_j : g_j \in S\}$ constitutes the desired hitting set of cardinality at most $k$.
	
	Finally, to show the hardness-of-approximation, notice that the aggregate envy in $A$ is $q$ because each dummy agent envies the main agent by one unit of utility. The claim now follows by substituting $|\F| = q = E$ in the inapproximability result of \HittingSet{} stated above.
\end{proof}

Our next result (\Cref{thm:HEFk_Verification_ApproxAlgo}) provides an approximation algorithm that (nearly) matches the hardness-of-approximation result in \Cref{prop:HEFk_Verification_Hardness_Of_Approximation_BinaryVals}. We remark that the algorithm in \Cref{thm:HEFk_Verification_ApproxAlgo} applies to \emph{any} instance with additive and possibly non-binary valuations.

\begin{restatable}[\textbf{Approximation algorithm}]{theorem}{HEFkVerificationApproxAlgo}
	\label{thm:HEFk_Verification_ApproxAlgo}
	There is a polynomial-time algorithm that, given as input any instance of \HEFkVerification{}, finds a set $S \subseteq [m]$ with $|S| \leq \kopt \cdot \ln E + 1$ such that the given allocation is \HEF{} with respect to $S$. Here, $E$ and $\kopt$ denote the aggregate envy and the number of goods that must be hidden under the given allocation, respectively.
\end{restatable}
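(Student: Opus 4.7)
The plan is to adapt the classical greedy $\ln$-approximation for \SetCover{} to a covering formulation of residual envy. For each ordered pair $(i,h)$ with $v_i(A_h) > v_i(A_i)$, there are $e_{i,h} \coloneqq v_i(A_h) - v_i(A_i)$ units of envy that must be eliminated by hiding goods from $A_h$; hiding a good $g \in A_h$ covers $\min\{v_{i,g},\, r_{i,h}\}$ of pair $(i,h)$'s units, where $r_{i,h}$ is the pair's current residual envy. The algorithm maintains a hidden set $H$, initially empty: at each step I would scan every un-hidden good $g$ (with owner $h(g)$), compute its marginal contribution $c(g) \coloneqq \sum_{i \neq h(g)} \min\{v_{i,g},\, r_{i,h(g)}\}$, add a maximizer to $H$, and update the residuals; the loop terminates once every $r_{i,h}=0$. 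Polynomial running time is immediate since at most $m$ iterations occur, each processing $O(nm)$ pair-contributions.

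For the analysis, let $R_t$ denote the aggregate residual envy after $t$ iterations, so $R_0 = E$. The crux is to show that some un-hidden good always attains $c(g) \geq R_t/\kopt$. Fix an optimal hidden set $S^*$ with $|S^*| = \kopt$ and let $S \coloneqq S^* \setminus H_t$. The elementary inequality $\sum_{j} \min(a_j, b) \geq \min(\sum_j a_j, b)$ for $a_j, b \geq 0$ gives, pair-by-pair, that the sum of individual contributions of the goods in $S \cap A_h$ to pair $(i,h)$ is at least their \emph{joint} contribution. Summing over all pairs, and noting that the joint envy-reduction of $S$ on top of $H_t$ equals $R_t$ (because $H_t \cup S^*$ is envy-eliminating), yields $\sum_{g \in S} c(g) \geq R_t$. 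Since $|S| \leq \kopt$, averaging supplies some $g \in S$ with $c(g) \geq R_t/\kopt$, and greedy does at least as well. Hence $R_{t+1} \leq R_t(1-1/\kopt) \leq R_t e^{-1/\kopt}$, so $R_T \leq E \, e^{-T/\kopt}$. Because every $r_{i,h}$ is a non-negative integer, $R_T < 1$ already forces $R_T = 0$; this holds whenever $T > \kopt \ln E$, and the smallest such integer is at most $\kopt \ln E + 1$, giving $|H| \leq \kopt \ln E + 1$.

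The hard part will be the ``averaging over OPT'' step: per-pair caps make envy-reduction \emph{subadditive} rather than linear in the goods, so one cannot naively invoke linearity to lower-bound $\sum_{g \in S^* \setminus H_t} c(g)$ by the joint reduction. The $\min$-inequality above is precisely what is needed to pass from a joint contribution (which is controlled because $H_t \cup S^*$ is envy-eliminating) to a sum of individual marginal contributions (which is what averaging requires). Once this subadditivity bridge is in place, the rest is the standard multiplicative-decrement argument combined with the integrality of residual envy.
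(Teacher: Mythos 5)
Your proof is correct, and it uses the same algorithm as the paper (greedily hide the good with the largest marginal reduction in aggregate residual envy), but the analysis is packaged differently. The paper's route is to prove that the residual envy function $f(S) = \sum_{h}\sum_{i \neq h}\max\{0, v_i(A_h\setminus S) - v_i(A_i)\}$ is supermodular (its Lemma on supermodularity), pass to $g(S) = E - f(S)$, and then invoke the Nemhauser--Wolsey--Fisher bound $g(S_q) \geq (1-e^{-q/p})\max_{|S|\leq p} g(S)$ as a black box, finishing with the same integrality observation. You instead unroll that black box: your inequality $\sum_j \min(a_j,b) \geq \min(\sum_j a_j, b)$, applied pair-by-pair to $S^*\setminus H_t$, is a direct, concrete proof of exactly the step that submodularity buys in the standard NWF argument (that the marginal gains of the optimal set's elements sum to at least the remaining gap), and the rest is the usual multiplicative-decrement calculation. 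Your version is self-contained and arguably more transparent about \emph{why} the structure is submodular (the per-pair caps $r_{i,h}$ are the only source of non-linearity, and the $\min$-inequality handles them); the paper's version isolates supermodularity as a reusable lemma and defers the rate calculation to a citation. Two small points worth making explicit in a final write-up: (i) when $R_t > 0$ the set $S^* \setminus H_t$ is nonempty, so the averaging step is legitimate; and (ii) the joint reduction of $S^*\setminus H_t$ on top of $H_t$ equals $R_t$ because $f$ is monotone non-increasing and non-negative, so $f(H_t \cup S^*) = 0$. Both are immediate, but they are the places where a careless version of this argument could go wrong.
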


The proof of \Cref{thm:HEFk_Verification_ApproxAlgo} is deferred to \Cref{subsec:Proof_HEFk_Verification_ApproxAlgo} in the appendix but a brief idea is as follows: For any set $S \subseteq [m]$, define the \emph{residual envy function} $f : 2^{[m]} \rightarrow \mathbb{R}$ so that $f(S)$ is the aggregate envy in allocation $A$ after hiding the goods in $S$. That is,
\begin{align*}
f(S) \coloneqq \sum_{h \in [n]} \sum_{i \neq h} \max\{0, v_i(A_h \setminus S) - v_i(A_i)\}.
\end{align*}
The relevant observation is that $f$ is \emph{supermodular}. Given this observation, the approximation guarantee in \Cref{thm:HEFk_Verification_ApproxAlgo} can be obtained by the standard greedy algorithm for submodular maximization, or, equivalently, supermodular minimization~ \citep{NWF78analysis}; see Algorithm~\ref{alg:Greedy_HEFk_ApproxAlgo} in \Cref{subsec:Proof_HEFk_Verification_ApproxAlgo}.

\section{Experimental Results}
\label{sec:Experiments}

\begin{table*}
	\centering
	\begin{tabular}{|cccc|}
		\multicolumn{4}{c}{\textbf{Normalized average-case regret}}\\
		\hline
		\footnotesize{\Market{}} & \footnotesize{\RR{}} & \footnotesize{\MNW{}} & \footnotesize{\Envygraph{}}\\
		\includegraphics[width=0.22\textwidth]{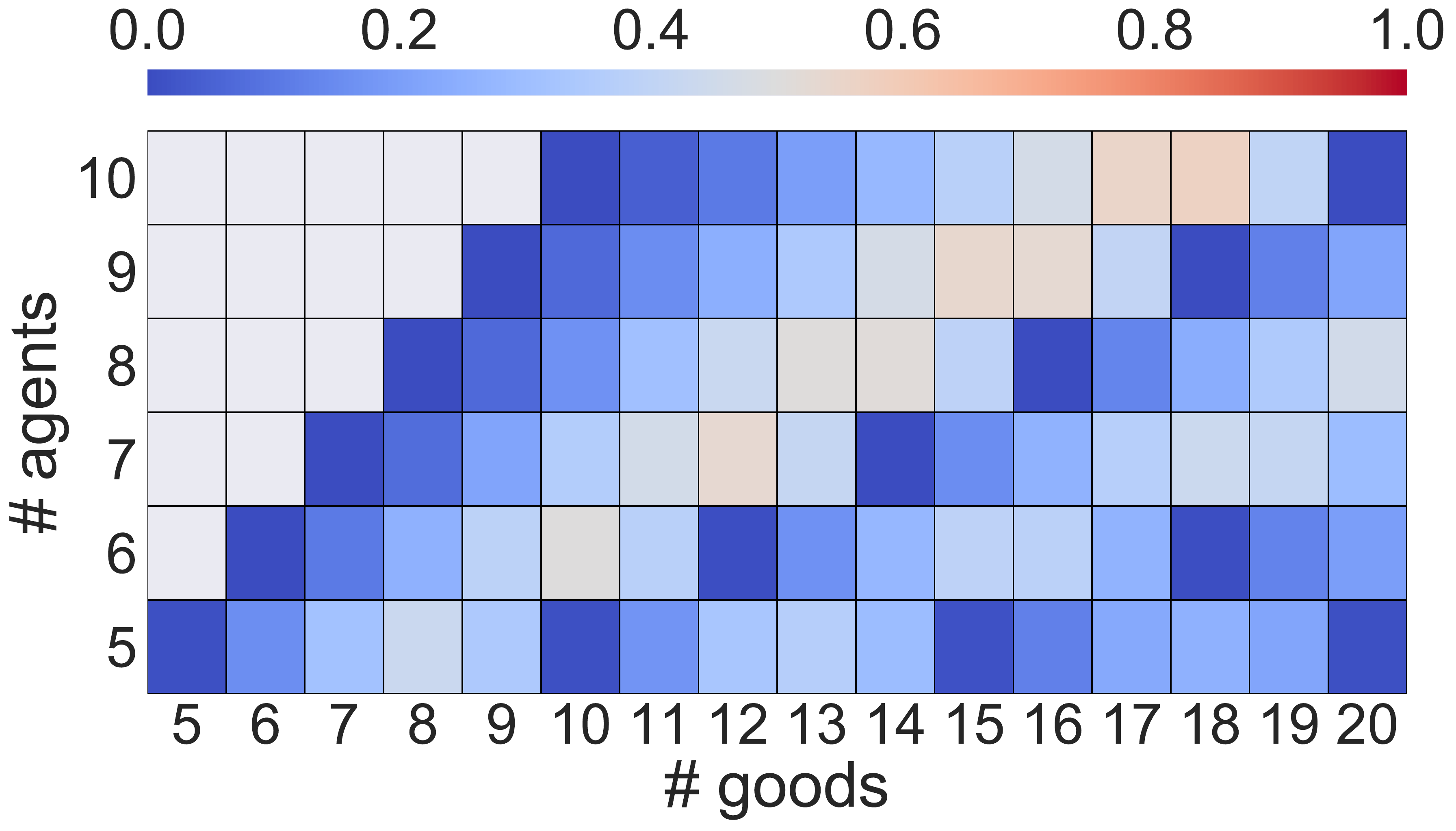} &
		\includegraphics[width=0.22\textwidth]{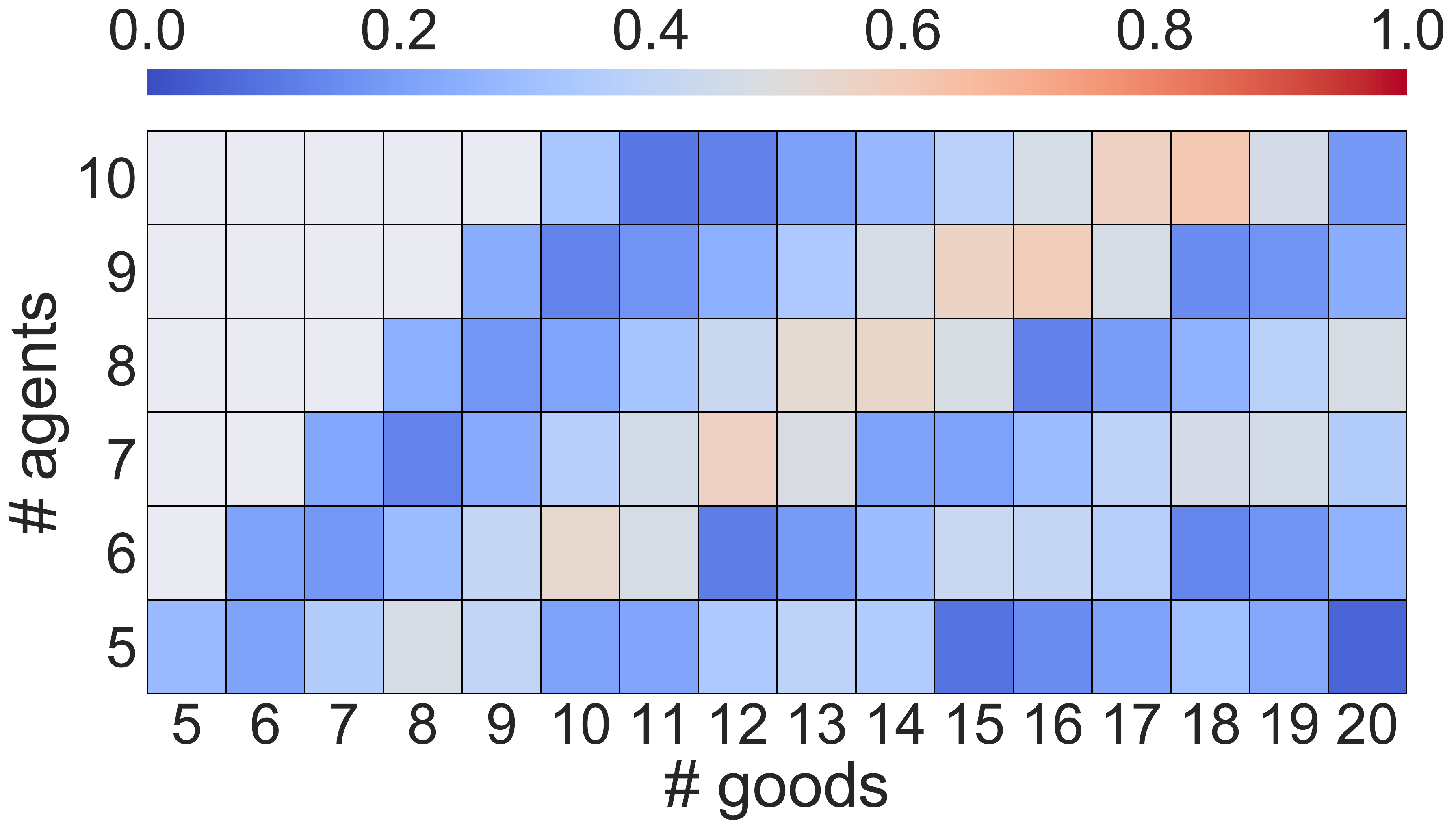} &
		\includegraphics[width=0.22\textwidth]{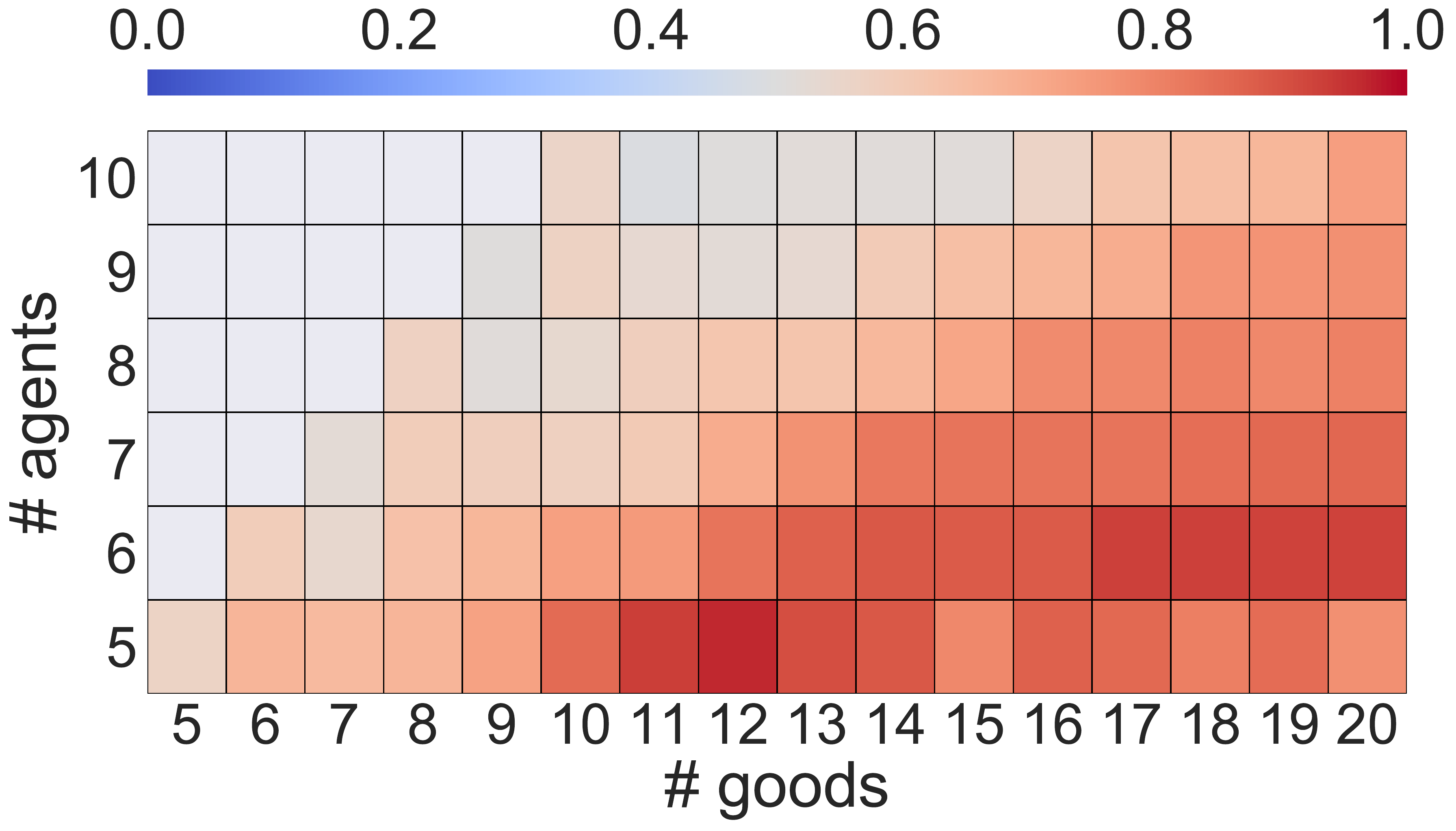} &
		\includegraphics[width=0.22\textwidth]{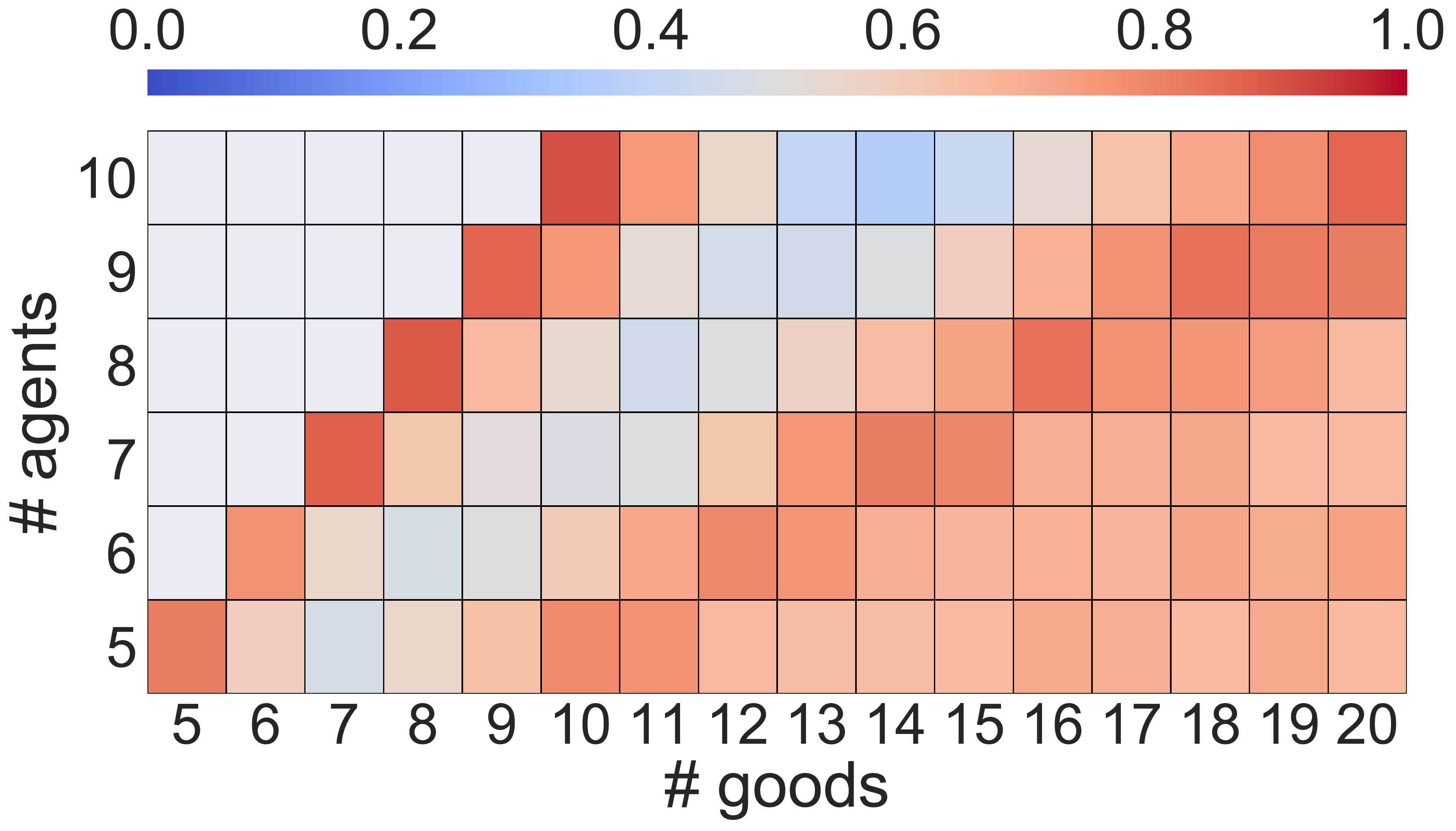} \\
		\hline
		\multicolumn{4}{c}{}\\
		\multicolumn{4}{c}{\textbf{Number of goods that must be hidden on average} (averaged over non-\EF{} instances only)}\\
		\hline
		\footnotesize{\Market{}} & \footnotesize{\RR{}} & \footnotesize{\MNW{}} & \footnotesize{\Envygraph{}}\\
		\includegraphics[width=0.22\textwidth]{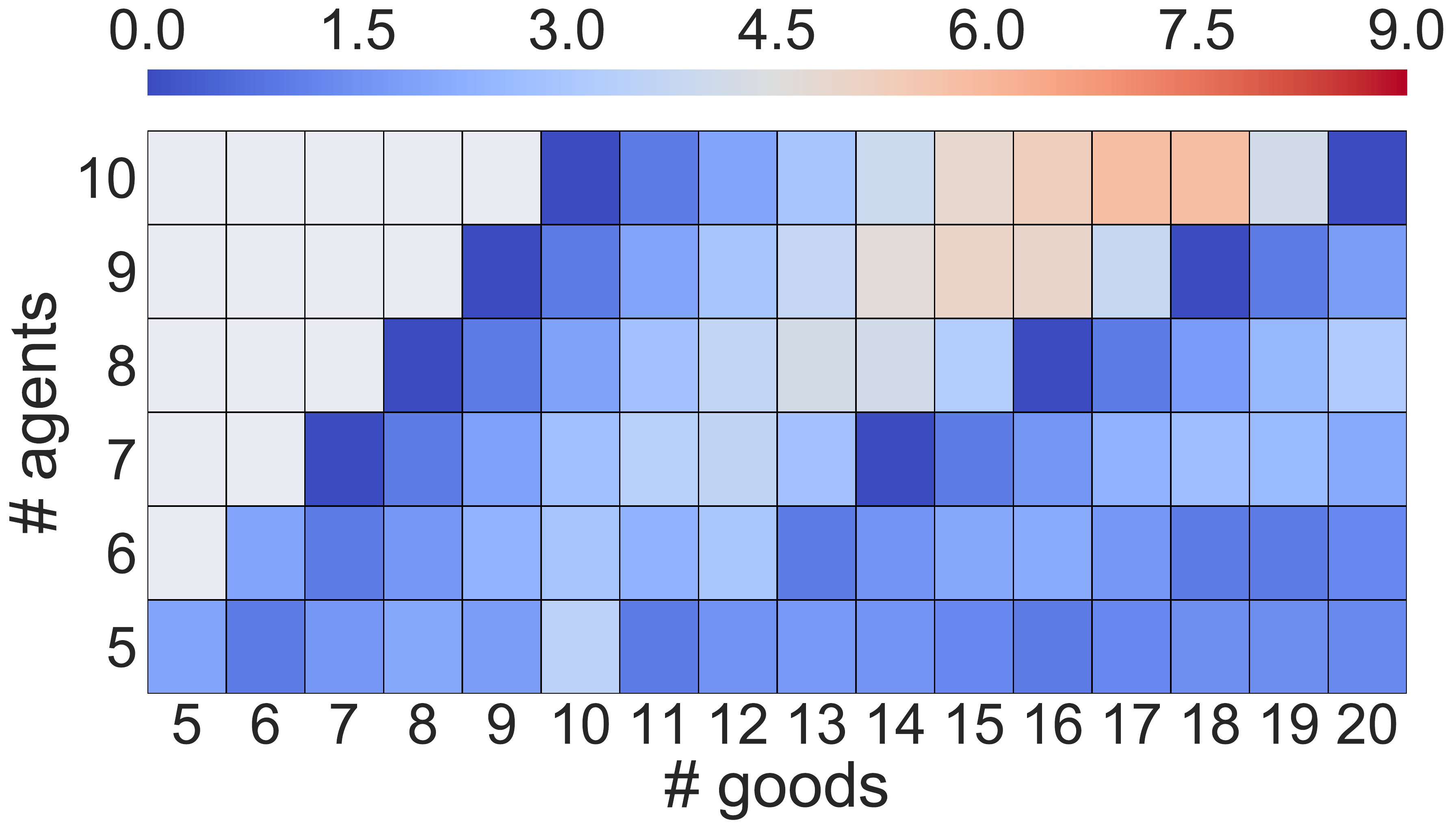} & \includegraphics[width=0.22\textwidth]{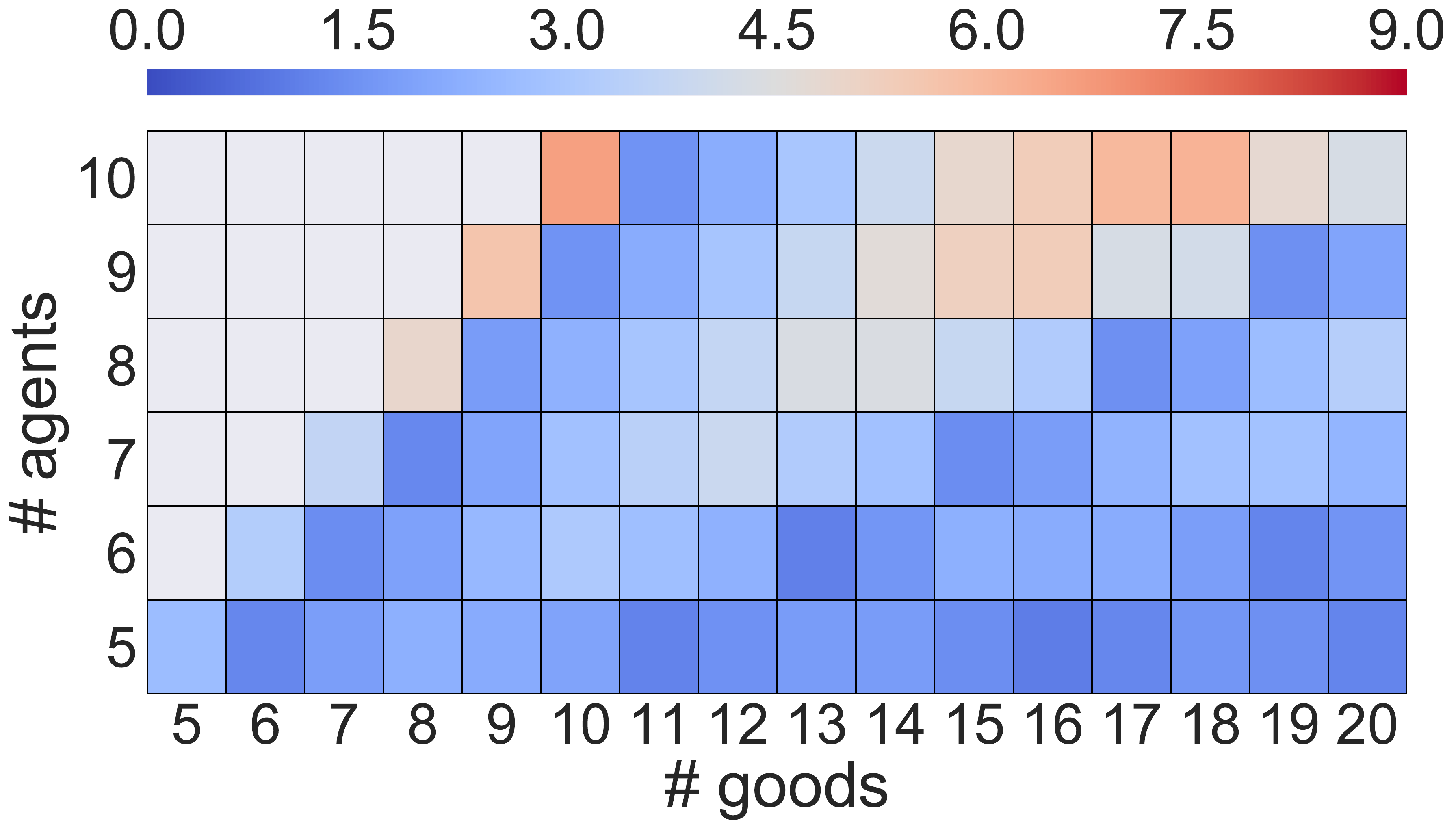} &
		\includegraphics[width=0.22\textwidth]{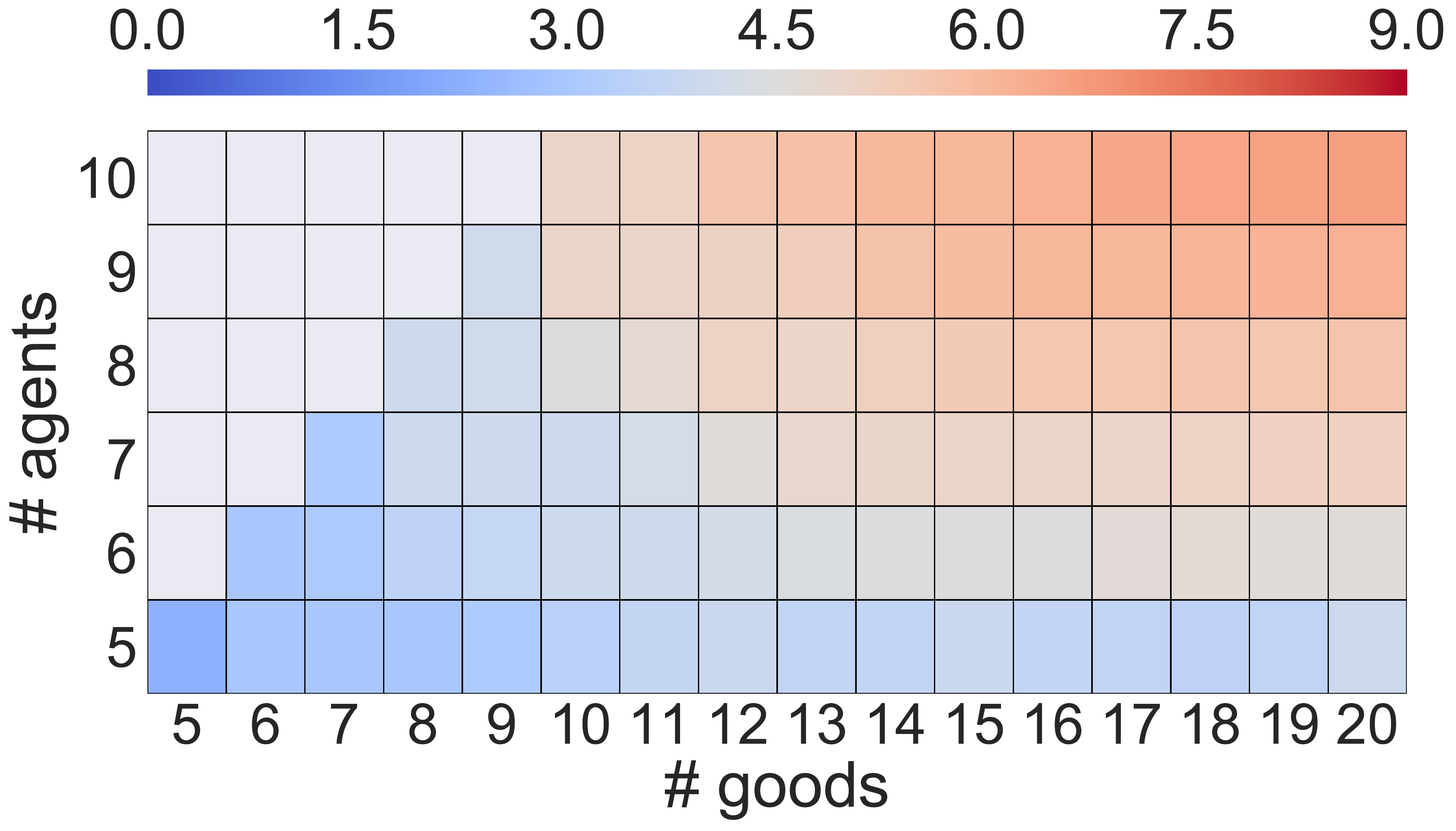} &
		\includegraphics[width=0.22\textwidth]{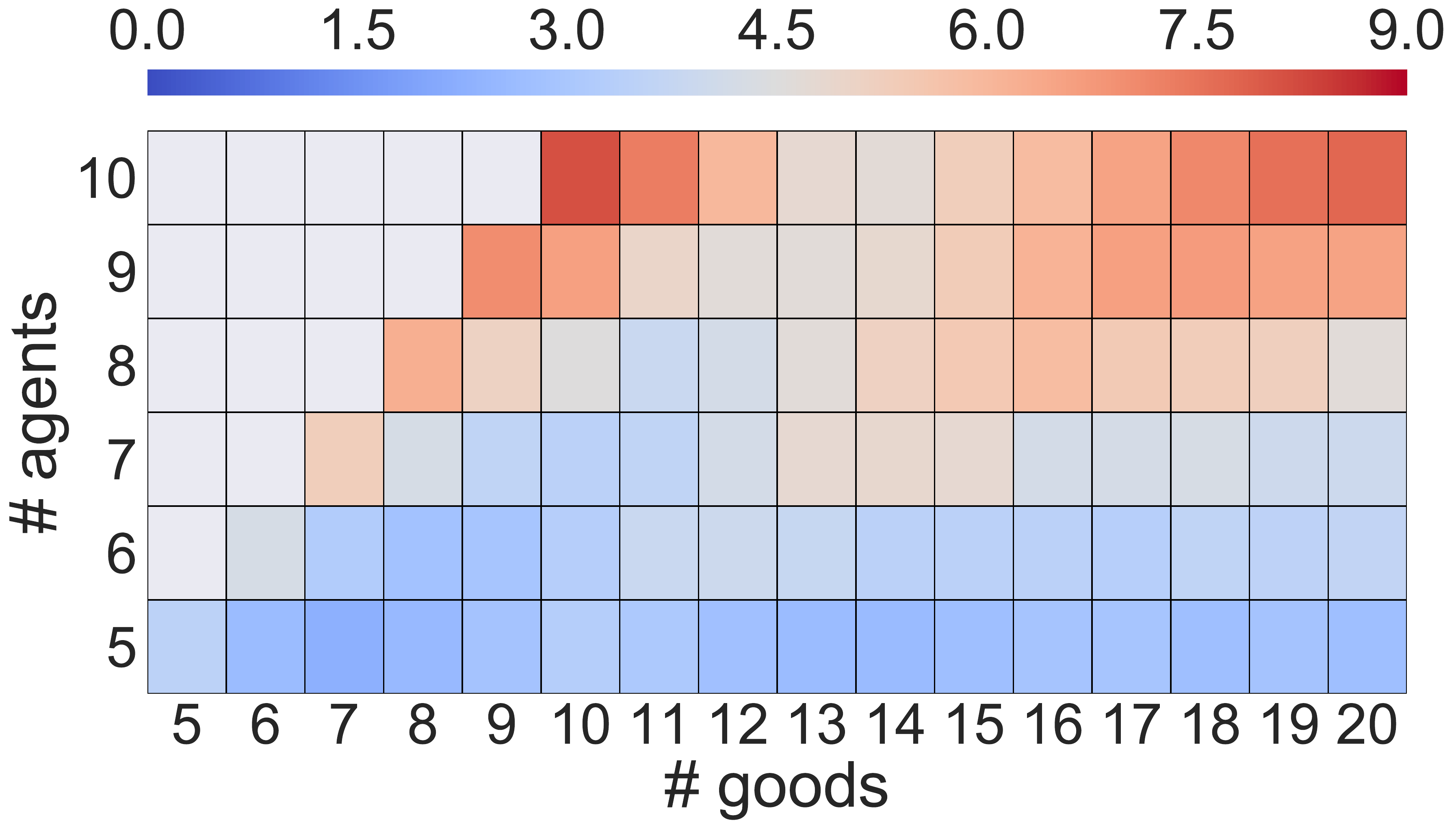} \\
		\hline
	\end{tabular}
	\caption{Results for synthetic data.}
	\label{tab:Expt_BinaryVals_bias_0.7}
\end{table*}

We have seen that the worst-case computational results for \HEF{k}, even in highly restricted settings, are largely negative (\Cref{sec:Theoretical_Results}). In this section, we will examine whether the known algorithms for computing approximately envy-free allocations---in particular, the four \EF{1} algorithms described in \Cref{defn:EF1_algorithms} in \Cref{sec:Preliminaries}---can provide meaningful approximations to \HEF{k} in practice. Recall from \Cref{rem:EF1_algorithms_uHEF_n-1} that all four discussed algorithms---\RR{}, \MNW{}, \Market{}, and \Envygraph{}---satisfy $\uHEF{(n-1)}$.

We evaluate each algorithm in terms of (a) its \emph{regret} (defined below), and (b) the \emph{number of goods that the algorithm must hide}. Given an instance $\I$ and an allocation $A$, let $\kappa(A,\I)$ denote the number of goods that must be hidden under $A$. The \emph{regret} of allocation $A$ is the number of extra goods that must be hidden under $A$ compared to the optimal. That is, $\reg(A,\I) \coloneqq \kappa(A,\I) - \min_{B} \kappa(B,\I)$. Similarly, given an algorithm \Alg{}, the regret of \Alg{} is given by $\reg(\Alg(\I),\I)$, where $\Alg(\I)$ is the allocation returned by \Alg{} for the input instance $\I$. Note that the regret can be large due to the suboptimality of an algorithm, but also due to the size of the instance. To negate the effect of the latter, we normalize the regret value by $n-1$, which is the worst-case upper bound on the number of hidden goods for all four algorithms of interest.

\subsection{Experiments on Synthetic Data}
\label{subsec:Expt_Synthetic}
The setup for synthetic experiments is similar to that used in \Cref{fig:HEFk_motivation_0.7}. Specifically, the number of agents, $n$, is varied from $5$ to $10$, and the number of goods, $m$, is varied from $5$ to $20$ (we ignore the cases where $m < n$). For every fixed $n$ and $m$, we generated $100$ instances with \emph{binary} valuations drawn i.i.d. from Bernoulli distribution with parameter $0.7$ (i.e., $v_{i,j} \sim \Ber(0.7)$). \Cref{tab:Expt_BinaryVals_bias_0.7} shows the heatmaps of the normalized regret (averaged over $100$ instances) and the number of goods that must be hidden (averaged over non-\EF{} instances, i.e., whenever $k \geq 1$) for all four algorithms.\footnote{Additional results for $v_{i,j} \sim \Ber(0.7)$, and $v_{i,j} \sim \Ber(0.5)$ can be found in \Cref{subsec:Additional_Experiments} in the appendix.}

It is clear that \Market{} and \RR{} algorithms have a superior performance than \MNW{} and \Envygraph{}. In particular, both \Market{} and \RR{} have small normalized regret, suggesting that they hide close-to-optimal number of goods. Additionally, the number of hidden goods itself is small for these algorithms (in most cases, no more than \emph{three} goods need to be hidden), suggesting that the worst-case bound of $n-1$ is unlikely to arise in practice. Overall, our experiments suggest that \Market{} and \RR{} can achieve useful approximations to \HEF{k} in practice, especially in comparison to \MNW{} and \Envygraph{}.\footnote{In \Cref{subsec:MNW_Large_Regret} in the appendix, we provide two families of instances where the normalized worst-case regret of \MNW{} is large.}

\subsection{Experiments on Real-World Data}
\label{subsec:Expt_Spliddit}
For experiments with real-world data, we use the data from the popular fair division website \emph{Spliddit} \citep{GP15spliddit}. The Spliddit data has $2212$ instances in total, where the number of agents $n$ varies between $3$ and $10$, and the number of goods $m \geq n$ varies between $3$ and $93$. Unlike the synthetic data, the distribution of instances here is rather uneven (see \Cref{fig:Spliddit_data_distribution} in \Cref{subsec:Additional_Experiments} in the appendix); in fact, $1821$ of the $2212$ instances have $n=3$ agents and $m=6$ goods. Therefore, instead of using heatmaps, we compare the algorithms in terms of their normalized regret (averaged over the entire dataset) and the cumulative distribution function of the hidden goods (see \Cref{fig:Results_Spliddit}).

\Cref{fig:Results_Spliddit} presents an interesting twist: \MNW{} is now the best performing algorithm, closely followed by \RR{} and \Market{}. For any fixed $k$, the fraction of instances for which these three algorithms compute an \HEF{k} allocation is also nearly identical. As can be observed, these algorithms almost never need to hide more than \emph{three} goods. By contrast, \Envygraph{} has the largest regret and significantly worse cumulative performance. Therefore, once again, \Market{} and \RR{} algorithms perform competitively with the optimal solution, making them attractive options for achieving fair outcomes without withholding too much information.

\begin{figure}
	\centering
	\begin{subfigure}[b]{0.46\linewidth}
		\centering
		\includegraphics[width=\linewidth]{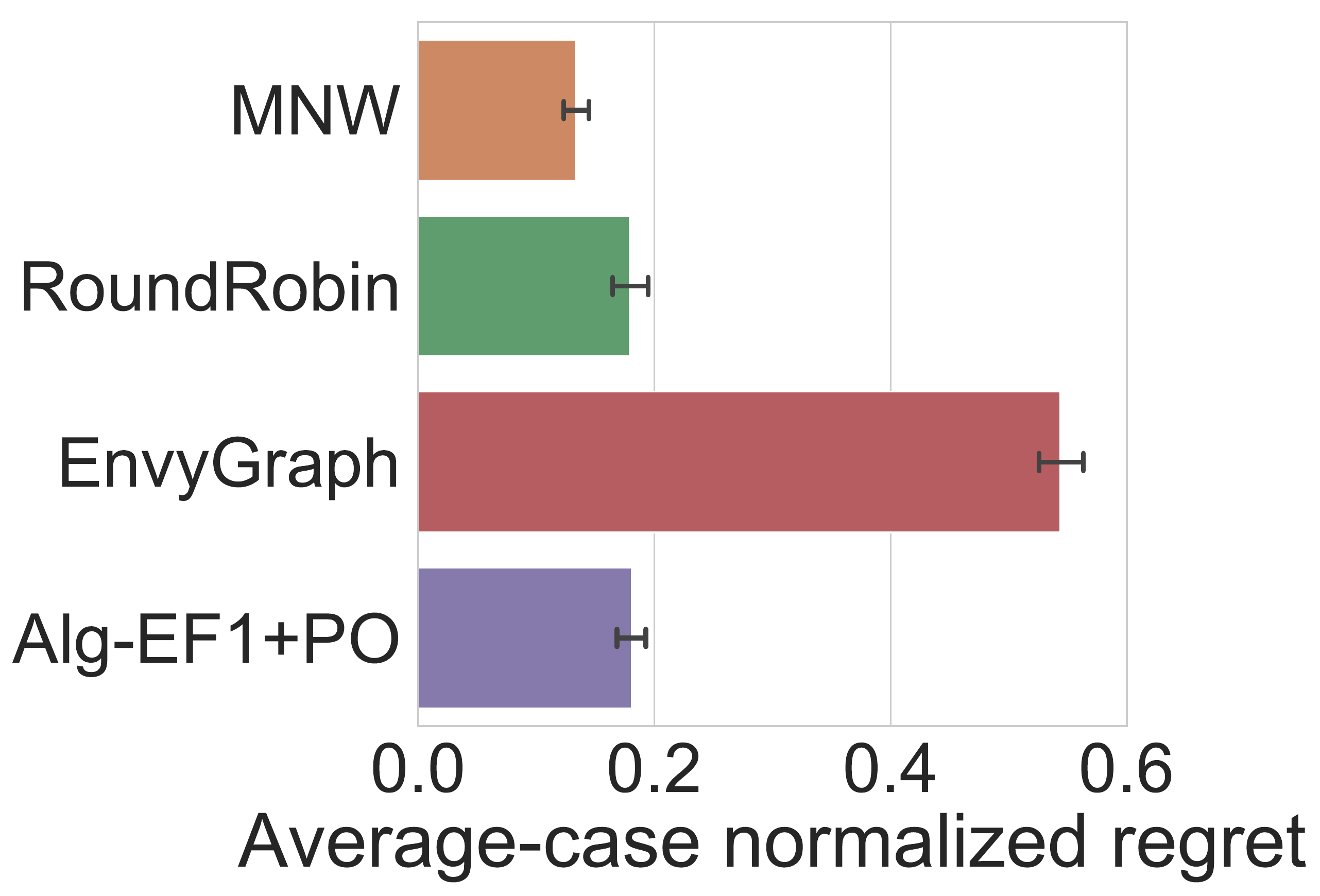}
	\end{subfigure}
	~
	\begin{subfigure}[b]{0.51\linewidth}
		\centering
		\includegraphics[width=\linewidth]{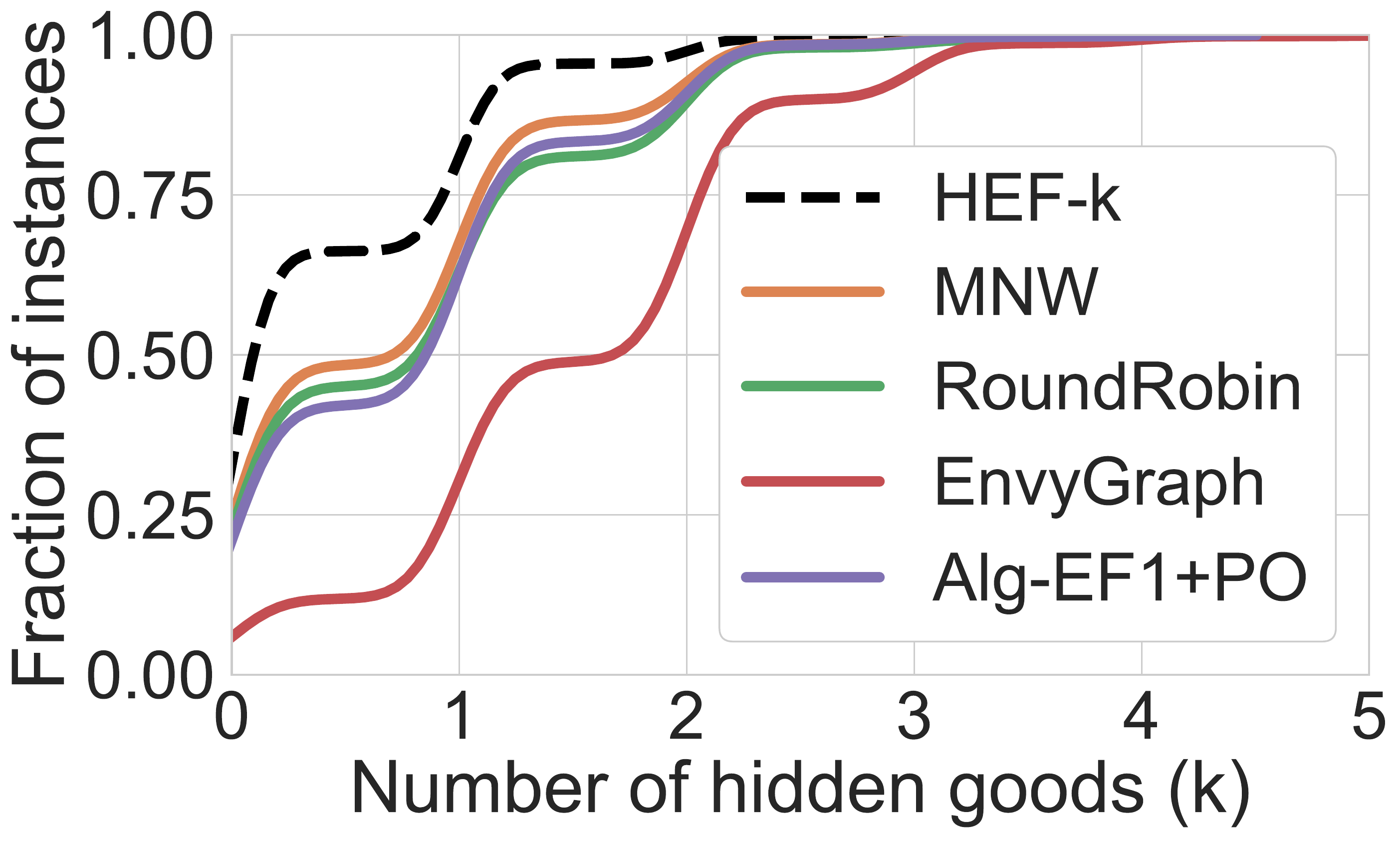}
	\end{subfigure}
	\caption{Results for Spliddit data.}
	\label{fig:Results_Spliddit}
\end{figure}

\section{Future Work}
Analyzing the asymptotic behavior of \HEF{k} allocations, as has been done for envy-free allocations~\citep{DGK+14computational,MS19when}, is an interesting direction for future work. It would also be interesting to explore the connections with other recently proposed relaxations that involve discarding goods~\citep{CGH19envy,CKM+20little} or sharing a small subset of goods~\citep{SE19fair}.

\section*{Acknowledgments}
We thank the anonymous conference reviewers for their helpful comments. We are grateful to Ariel Procaccia and Nisarg Shah for sharing with us the data from Spliddit, and to Haris Aziz for bringing to our attention the proof of \EFExistence{} for binary valuations in \citep{AGM+15fair}. RV thanks Rupert Freeman, Nick Gravin, and Neeldhara Misra for very helpful discussions and several useful suggestions for improving the presentation of the paper. Thanks also to Erel Segal-Halevi for many helpful comments on \Cref{subsec:Proof_EF_Existence_NPcomplete_BinaryValuations}. LX acknowledges NSF \#1453542 and \#1716333, and HH acknowledges NSF \#1850076 for support.

\bibliographystyle{plainnat}
\bibliography{References}

\begin{thebibliography}{28}
\providecommand{\natexlab}[1]{#1}
\providecommand{\url}[1]{\texttt{#1}}
\expandafter\ifx\csname urlstyle\endcsname\relax
  \providecommand{\doi}[1]{doi: #1}\else
  \providecommand{\doi}{doi: \begingroup \urlstyle{rm}\Url}\fi

\bibitem[Abebe et~al.(2017)Abebe, Kleinberg, and Parkes]{AKP17fair}
Rediet Abebe, Jon Kleinberg, and David~C Parkes.
\newblock {Fair Division via Social Comparison}.
\newblock In \emph{Proceedings of the 16th Conference on Autonomous Agents and
  Multiagent Systems}, pages 281--289, 2017.

\bibitem[Aziz et~al.(2015)Aziz, Gaspers, Mackenzie, and Walsh]{AGM+15fair}
Haris Aziz, Serge Gaspers, Simon Mackenzie, and Toby Walsh.
\newblock {Fair Assignment of Indivisible Objects under Ordinal Preferences}.
\newblock \emph{Artificial Intelligence}, 227:\penalty0 71--92, 2015.

\bibitem[Aziz et~al.(2018)Aziz, Bouveret, Caragiannis, Giagkousi, and
  Lang]{ABC+18knowledge}
Haris Aziz, Sylvain Bouveret, Ioannis Caragiannis, Ira Giagkousi, and
  J{\'e}r{\^o}me Lang.
\newblock {Knowledge, Fairness, and Social Constraints}.
\newblock In \emph{Thirty-Second AAAI Conference on Artificial Intelligence},
  pages 4638--4645, 2018.

\bibitem[Barman et~al.(2018)Barman, Krishnamurthy, and Vaish]{BKV18Finding}
Siddharth Barman, Sanath~Kumar Krishnamurthy, and Rohit Vaish.
\newblock {Finding Fair and Efficient Allocations}.
\newblock In \emph{Proceedings of the 2018 ACM Conference on Economics and
  Computation}, pages 557--574, 2018.

\bibitem[Bei et~al.(2017)Bei, Qiao, and Zhang]{BQZ17networked}
Xiaohui Bei, Youming Qiao, and Shengyu Zhang.
\newblock {Networked Fairness in Cake Cutting}.
\newblock In \emph{Proceedings of the 26th International Joint Conference on
  Artificial Intelligence}, pages 3632--3638, 2017.

\bibitem[Beynier et~al.(2018)Beynier, Chevaleyre, Gourv{\`e}s, Lesca, Maudet,
  and Wilczynski]{BCG+18local}
Aur{\'e}lie Beynier, Yann Chevaleyre, Laurent Gourv{\`e}s, Julien Lesca,
  Nicolas Maudet, and Ana{\"e}lle Wilczynski.
\newblock {Local Envy-Freeness in House Allocation Problems}.
\newblock In \emph{Proceedings of the 17th International Conference on
  Autonomous Agents and Multiagent Systems}, pages 292--300, 2018.

\bibitem[Bredereck et~al.(2018)Bredereck, Kaczmarczyk, and
  Niedermeier]{BKN18envy}
Robert Bredereck, Andrzej Kaczmarczyk, and Rolf Niedermeier.
\newblock {Envy-Free Allocations Respecting Social Networks}.
\newblock In \emph{Proceedings of the 17th International Conference on
  Autonomous Agents and Multiagent Systems}, pages 283--291, 2018.

\bibitem[Budish(2011)]{B11combinatorial}
Eric Budish.
\newblock {The Combinatorial Assignment Problem: Approximate Competitive
  Equilibrium from Equal Incomes}.
\newblock \emph{Journal of Political Economy}, 119\penalty0 (6):\penalty0
  1061--1103, 2011.

\bibitem[Caragiannis et~al.(2016)Caragiannis, Kurokawa, Moulin, Procaccia,
  Shah, and Wang]{CKM+16unreasonable}
Ioannis Caragiannis, David Kurokawa, Herv{\'e} Moulin, Ariel~D Procaccia,
  Nisarg Shah, and Junxing Wang.
\newblock {The Unreasonable Fairness of Maximum Nash Welfare}.
\newblock In \emph{Proceedings of the 2016 ACM Conference on Economics and
  Computation}, pages 305--322, 2016.

\bibitem[Caragiannis et~al.(2019)Caragiannis, Gravin, and Huang]{CGH19envy}
Ioannis Caragiannis, Nick Gravin, and Xin Huang.
\newblock {Envy-Freeness Up to Any Item with High Nash Welfare: The Virtue of
  Donating Items}.
\newblock In \emph{Proceedings of the 2019 ACM Conference on Economics and
  Computation}, pages 527--545. ACM, 2019.

\bibitem[Chan et~al.(2019)Chan, Chen, Li, and Wu]{CCL+19maximin}
Hau Chan, Jing Chen, Bo~Li, and Xiaowei Wu.
\newblock {Maximin-Aware Allocations of Indivisible Goods}.
\newblock In \emph{Proceedings of the Twenty-Eighth International Joint
  Conference on Artificial Intelligence}, pages 137--143, 2019.

\bibitem[Chaudhury et~al.(2020)Chaudhury, Kavitha, Mehlhorn, and
  Sgouritsa]{CKM+20little}
Bhaskar~Ray Chaudhury, Telikepalli Kavitha, Kurt Mehlhorn, and Alkmini
  Sgouritsa.
\newblock {A Little Charity Guarantees Almost Envy-Freeness}.
\newblock In \emph{Proceedings of the Fourteenth Annual ACM-SIAM Symposium on
  Discrete Algorithms}, pages 2658--2672. SIAM, 2020.

\bibitem[Chen and Shah(2017)]{CS17ignorance}
Yiling Chen and Nisarg Shah.
\newblock {Ignorance is Often Bliss: Envy with Incomplete Information}.
\newblock Technical report, 2017.

\bibitem[Chevaleyre et~al.(2017)Chevaleyre, Endriss, and
  Maudet]{CEM17distributed}
Yann Chevaleyre, Ulle Endriss, and Nicolas Maudet.
\newblock {Distributed Fair Allocation of Indivisible Goods}.
\newblock \emph{Artificial Intelligence}, 242:\penalty0 1--22, 2017.

\bibitem[Conitzer et~al.(2017)Conitzer, Freeman, and Shah]{CFS17fair}
Vincent Conitzer, Rupert Freeman, and Nisarg Shah.
\newblock {Fair Public Decision Making}.
\newblock In \emph{Proceedings of the 2017 ACM Conference on Economics and
  Computation}, pages 629--646. ACM, 2017.

\bibitem[Conitzer et~al.(2019)Conitzer, Freeman, Shah, and
  Vaughan]{CFS+19group}
Vincent Conitzer, Rupert Freeman, Nisarg Shah, and Jennifer~Wortman Vaughan.
\newblock {Group Fairness for Indivisible Good Allocation}.
\newblock In \emph{Thirty-Third AAAI Conference on Artificial Intelligence},
  pages 1853--1860, 2019.

\bibitem[Dickerson et~al.(2014)Dickerson, Goldman, Karp, Procaccia, and
  Sandholm]{DGK+14computational}
John~P Dickerson, Jonathan Goldman, Jeremy Karp, Ariel~D Procaccia, and Tuomas
  Sandholm.
\newblock {The Computational Rise and Fall of Fairness}.
\newblock In \emph{Proceedings of the Twenty-Eighth AAAI Conference on
  Artificial Intelligence}, pages 1405--1411, 2014.

\bibitem[Dinur and Steurer(2014)]{DS14analytical}
Irit Dinur and David Steurer.
\newblock {Analytical Approach to Parallel Repetition}.
\newblock In \emph{Proceedings of the Forty-Sixth Annual ACM Symposium on
  Theory of Computing}, pages 624--633, 2014.

\bibitem[Foley(1967)]{F67resource}
Duncan Foley.
\newblock {Resource Allocation and the Public Sector}.
\newblock \emph{Yale Economic Essays}, pages 45--98, 1967.

\bibitem[Garey and Johnson(1979)]{GJ79computers}
Michael~R. Garey and David~S. Johnson.
\newblock \emph{{Computers and Intractability: A Guide to the Theory of
  {NP}-Completeness}}.
\newblock W. H. Freeman \& Co., 1979.

\bibitem[Goldman and Procaccia(2014)]{GP15spliddit}
Jonathan Goldman and Ariel~D Procaccia.
\newblock {Spliddit: Unleashing Fair Division Algorithms}.
\newblock \emph{ACM SIGecom Exchanges}, 13\penalty0 (2):\penalty0 41--46, 2014.

\bibitem[Halpern and Shah(2019)]{HS19fair}
Daniel Halpern and Nisarg Shah.
\newblock {Fair Division with Subsidy}.
\newblock In \emph{International Symposium on Algorithmic Game Theory}, pages
  374--389. Springer, 2019.

\bibitem[Krause and Golovin(2014)]{KG14submodular}
Andreas Krause and Daniel Golovin.
\newblock {Submodular Function Maximization}.
\newblock In \emph{Tractability: Practical Approaches to Hard Problems}, pages
  71--104. Cambridge University Press, 2014.

\bibitem[Lipton et~al.(2004)Lipton, Markakis, Mossel, and
  Saberi]{LMM+04approximately}
Richard~J Lipton, Evangelos Markakis, Elchanan Mossel, and Amin Saberi.
\newblock {On Approximately Fair Allocations of Indivisible Goods}.
\newblock In \emph{Proceedings of the 5th ACM Conference on Electronic
  Commerce}, pages 125--131, 2004.

\bibitem[Manurangsi and Suksompong(2019)]{MS19when}
Pasin Manurangsi and Warut Suksompong.
\newblock {When Do Envy-Free Allocations Exist?}
\newblock In \emph{Proceedings of the AAAI Conference on Artificial
  Intelligence}, volume~33, pages 2109--2116, 2019.

\bibitem[Nemhauser et~al.(1978)Nemhauser, Wolsey, and Fisher]{NWF78analysis}
George~L Nemhauser, Laurence~A Wolsey, and Marshall~L Fisher.
\newblock {An Analysis of Approximations for Maximizing Submodular Set
  Functions--I}.
\newblock \emph{Mathematical Programming}, 14\penalty0 (1):\penalty0 265--294,
  1978.

\bibitem[Nguyen and Rothe(2014)]{NR14minimizing}
Trung~Thanh Nguyen and J{\"o}rg Rothe.
\newblock {Minimizing Envy and Maximizing Average Nash Social Welfare in the
  Allocation of Indivisible Goods}.
\newblock \emph{Discrete Applied Mathematics}, 179:\penalty0 54--68, 2014.

\bibitem[Sandomirskiy and Segal-Halevi(2019)]{SE19fair}
Fedor Sandomirskiy and Erel Segal-Halevi.
\newblock {Fair Division with Minimal Sharing}.
\newblock \emph{arXiv preprint arXiv:1908.01669}, 2019.

\end{thebibliography}

\clearpage
\newpage
\section{Appendix}
 \label{sec:Appendix}

\subsection{Proof of Proposition~\ref{prop:EF_Existence_NPcomplete_BinaryValuations}}
\label{subsec:Proof_EF_Existence_NPcomplete_BinaryValuations}

Recall the statement of \Cref{prop:EF_Existence_NPcomplete_BinaryValuations}.
\EFExistenceNPcompleteBinaryVals*

Our proof of \Cref{prop:EF_Existence_NPcomplete_BinaryValuations} uses a reduction from \EQcoloring{}, which is defined below.

\begin{definition}[\EQcoloring]
Given a  graph $G$ and a number $\ell \in \N$, does there exist a proper $\ell$-coloring of $G$ such that all color classes are of equal size?
\label{defn:EquitableColoring}
\end{definition}

The standard definition of \EQcoloring{} requires the color classes to differ in size by at most one. We overload the term to refer to the version where all color classes are of the same size. \EQcoloring{} can be shown to be \NPC{} by a straightforward reduction from \GraphkColorability{}~\citep{GJ79computers}. In addition, we can assume that $\ell \geq 3$ without loss of generality.

\begin{proof} (of \Cref{prop:EF_Existence_NPcomplete_BinaryValuations})
We will show a reduction from \EQcoloring{}. Recall from \Cref{defn:EquitableColoring} that an instance of \EQcoloring{} consists of a graph $G = (V,E)$ and a number $\ell \in \N$. The goal is to determine if $G$ admits a proper $\ell$-coloring wherein the color classes are of the same size. For simplicity, we will write $n \coloneqq |V|$ and $m \coloneqq |E|$.\footnote{Not be confused with the number of agents, $n$, and the number of goods, $m$, as defined in \Cref{sec:Preliminaries}.} Note that we can assume, without loss of generality, that $G$ is connected.\footnote{Given any graph $G$, we can construct another connected graph $G' = (V',E')$ as follows: Let $V' \coloneqq V \cup \{x_1,\dots,x_\ell\} \cup \{y_1,\dots,y_{\frac{n}{\ell}+1}\}$. There is an edge between every pair of vertices in $\{x_1,\dots,x_\ell\}$ so as to induce an $\ell$-clique. In addition, each $y_i$ is connected to every vertex in $\{x_1,\dots,x_\ell\}$ as well as to every vertex in $V$. It is easy to see that $G$ admits an equitable $\ell$-coloring if and only if $G'$ admits an equitable $(\ell+1)$-coloring. Indeed, the $x_i$'s consume $\ell$ colors, and, as a result, all $y_i$'s must have the same color. This, in turn, leaves exactly $\ell$ colors for the vertices in $V$. Furthermore, there are $\frac{n}{\ell}+1$ vertices in each color class, implying that the coloring is equitable.} Since a connected graph with $n$ vertices has at least $n-1$ edges, we have that
\begin{align}
m \geq n-1.
\label{eqn:Connected_Graph_Inequality}
\end{align}

In addition, we will also assume that each vertex in $G$ has degree at least two. Indeed, for any vertex $v$ with degree at most one, we can add $\ell$ new vertices $v_1, v_2,\dots,v_{\ell}$ that are connected as follows: The vertices $v_1,\dots,v_{\ell}$ constitute an $\ell$-clique (that is, for every pair of distinct $i,j \in [\ell]$, $v_i$ is connected to $v_j$), and $v$ is connected to each vertex in $\{v_2,\dots,v_{\ell-1}\}$ but not $v_1$. Call the new graph $G'$. It is easy to see that $G$ has an equitable $\ell$-coloring if and only if $G'$ does.

We will construct a fair division instance with $m+n$ goods and $m+\ell$ agents. The agents are classified into $m$ \emph{edge} agents $a_1, \dots, a_m$ and $\ell$ \emph{dummy} agents $d_1,\dots,d_\ell$. The goods are classified into $n$ \emph{vertex} goods $v_1,\dots,v_n$ and $m$ \emph{edge} goods $e_1,\dots,e_m$. Note that we use the same notation for the vertices (edges) and the corresponding vertex (edge) goods.

The preferences of the agents are defined as follows: For every edge $e = (v_i,v_j)$, an edge agent $a_e$ approves all the edge goods and exactly two vertex goods $v_i$ and $v_j$. Each dummy agent approves all the vertex goods and has zero value for the edge goods.

($\Rightarrow$) Suppose $G$ admits an equitable coloring with each color class of size $\frac{n}{\ell}$. Then, an envy-free allocation $A$ can be constructed as follows: Assign each edge good $e$ to the edge agent $a_e$ and each vertex good $v$ to the dummy agent $d_i$ if vertex $v$ has color $i$. Notice that all goods are allocated under $A$. Also note that no two edge agents envy each other since each of them gets exactly one edge good. Furthermore, due to the proper coloring condition, for any edge $e = (v_i,v_j)$ in $G$, the corresponding vertex goods $v_i$ and $v_j$ are assigned to distinct dummy agents. Hence, no edge agent envies a dummy agent. The dummy agents have zero value for the edge goods and therefore do not envy the edge agents. Finally, since all color classes are of the same size, each dummy agent gets exactly $\frac{n}{\ell}$ approved goods, and therefore does not envy any other dummy agent. Overall, the allocation is envy-free.

($\Leftarrow$) Now suppose there exists an envy-free allocation $A$. We will show that $A$ satisfies \Cref{property:Edge_agent_cannot_get_two_edge_goods,property:Dummy_agent_gets_at_least_one_vertex_good,property:Dummy_agents_cannot_get_any_edge_good,property:Edge_agent_gets_exactly_one_edge_good,property:Edge_agents_cannot_get_any_vertex_good,property:Dummy_agents_cannot_get_adjacent_vertices} that will help us infer an equitable coloring of $G$.

\begin{property}
No edge agent can get two or more edge goods under $A$.
\label{property:Edge_agent_cannot_get_two_edge_goods}
\end{property}
\begin{proof} (of
\Cref{property:Edge_agent_cannot_get_two_edge_goods})
Suppose, for contradiction, that some edge agent $a_e$ gets two or more edge goods. Then, any other edge agent $a_{e'}$ has a utility of at least $2$ for the bundle of $a_e$. For $A$ to be envy-free, $a_{e'}$ must have a utility of at least $2$ for its own bundle. For binary valuations, this means that $a_{e'}$ must be assigned two or more goods that it approves. Therefore, we need at least $2m$ goods to satisfy the edge agents. The total number of available goods is $m + n$, which, using \Cref{eqn:Connected_Graph_Inequality}, evaluates to at most $2m+1$. This leaves at most one good to be allocated among $\ell$ dummy agents. Since $\ell \geq 3$, some dummy agent is bound to be envious, contradicting the envy-freeness of $A$. 
\end{proof}

\begin{property}
Every dummy agent gets at least one vertex good under $A$.
\label{property:Dummy_agent_gets_at_least_one_vertex_good}
\end{property}
\begin{proof} (of
\Cref{property:Dummy_agent_gets_at_least_one_vertex_good})
Fix a vertex good $v$ and a dummy agent $d$. Then, either $v$ is assigned to $d$, or $d$ gets some other (approved) good to prevent it from envying the owner of $v$. Since the only goods approved by the dummy agents are the vertex goods, the claim follows.
\end{proof}

\begin{property}
No dummy agent can get an edge good under $A$.\label{property:Dummy_agents_cannot_get_any_edge_good}
\end{property}
\begin{proof} (of
\Cref{property:Dummy_agents_cannot_get_any_edge_good})
Suppose, for contradiction, that a dummy agent $d$ gets an edge good $e$ under $A$. From \Cref{property:Dummy_agent_gets_at_least_one_vertex_good}, we know that $d$ also gets some vertex good, say $v_0$. By assumption, the graph $G$ has minimum degree two, so there must exist some edge $e_1 = (v_0,v_1)$ adjacent to the vertex $v_0$. Notice that the edge agent $a_{e_1}$ has a utility of (at least) $2$ for the bundle of $d$. Therefore, for $A$ to be envy-free, $a_{e_1}$ must get at least two goods that it approves. \Cref{property:Edge_agent_cannot_get_two_edge_goods} limits the number of edge goods assigned to any edge agent to at most one. Therefore, in addition to some edge good, $a_{e_1}$ must also get the vertex good $v_1$. Once again using the bound on minimum degree of $G$, we get that there must exist some edge $e_2 = (v_1,v_2)$ adjacent to the vertex $v_1$. A similar argument shows that the vertex good $v_2$ must be assigned to the edge agent $a_{e_2}$. Continuing in this manner, we will eventually encounter an edge $e_i = (v_{i-1},v_i)$ such that $v_{i-1}$ is already assigned to $a_{e_{i-1}}$ and $v_{i}$ is already assigned to either $d$ or some other edge agent. This would imply that $a_{e_i}$ is envious of some other agent under $A$---a contradiction.
\end{proof}

\begin{property}
Every edge agent gets exactly one edge good under $A$.
\label{property:Edge_agent_gets_exactly_one_edge_good}
\end{property}
\begin{proof} (of
\Cref{property:Edge_agent_gets_exactly_one_edge_good})
Follows from \Cref{property:Dummy_agents_cannot_get_any_edge_good,property:Edge_agent_cannot_get_two_edge_goods}.
\end{proof}

\begin{property}
No edge agent can get a vertex good under $A$.
\label{property:Edge_agents_cannot_get_any_vertex_good}
\end{property}
\begin{proof} (of
\Cref{property:Edge_agents_cannot_get_any_vertex_good})
Suppose, for contradiction, that some edge agent $a_{e_0}$ is assigned a vertex good $v_0$. 
%
%
Let $e_1 = (v_0,v_1)$ be an edge incident to the vertex $v_0$ in $G$ (such an edge must exist due to the bound on minimum degree). From \Cref{property:Edge_agent_gets_exactly_one_edge_good}, we know that each edge agent gets exactly one edge good. Thus, the edge agent $a_{e_1}$ has a utility of (at least) $2$ for the bundle of the agent $a_{e_0}$. For $A$ to be envy-free, $a_{e_1}$ must receive two or more goods that it approves, only one of which can be an edge good. Therefore, agent $a_{e_1}$ must also receive the vertex good $v_1$. Now let $e_2 = (v_1,v_2)$ be another edge incident to the vertex $v_1$ in $G$ (again, such an edge exists because $v_1$ has degree at least two). A similar argument implies that the vertex good $v_2$ must be assigned to the edge agent $a_{e_2}$. Continuing in this manner, let $e_i = (v_{i-1},v_i)$ denote the first edge in the sequence for which one of the following mutually exclusive conditions is true:

\begin{enumerate}
	\item Either, the vertex good $v_i$ is assigned to an agent different from $a_{e_i}$, or
	
	\item the vertex good $v_i$ is assigned to $a_{e_i}$ and $v_i = v_0$ (thus $a_{e_i}=a_{e_0}$).
\end{enumerate}

Notice that due to the finiteness of the graph $G$, there must exist an edge $e_i$ satisfying one of the aforementioned conditions. We will now argue that each of these conditions leads to a contradiction.

\begin{enumerate}
	\item First, suppose that the vertex good $v_i$ is assigned to an agent different from $a_{e_i}$. Then, the edge agent $a_{e_i}$ has a utility of (at least) $2$ for the bundle of edge agent $a_{e_{i-1}}$ and a utility of $1$ for its own bundle, contradicting the envy-freeness of $A$.
	
	\item Next, suppose that $v_i = v_0$. That is, there exists a cycle $C = \{(v_0,v_1),(v_1,v_2),\dots,(v_{i-1},v_0)\}$ in the graph $G$ such that for every $j \in \{0,1,\dots,i-1\}$, the vertex good $v_j$ is assigned to the edge agent $a_{e_j}$.
	
Recall from \Cref{property:Dummy_agent_gets_at_least_one_vertex_good} that each dummy agent gets at least one vertex good. Since all vertex goods corresponding to the vertices in $C$ are assigned to the edge agents, there must exist at least one vertex outside the cycle $C$. Furthermore, since $G$ is connected, there must exist a path from this vertex to a vertex in $C$, say $v_1$. Thus, there must exist a vertex $v'_1 \in V$ such that $(v_1,v'_1) \in E$ and $v'_1 \notin C$. Let $e'_1 \coloneqq (v_1,v'_1)$. Then, the edge agent $a_{e'_1}$ has a utility of $2$ for the bundle of agent $a_{e_1}$ (recall that $a_{e_1}$ receives an edge good and the vertex good $v_1$), and must therefore be assigned the vertex good $v'_1$. Since the vertex $v'_1$ has degree at least two, there must exist another edge $e'_2 = (v'_1,v'_2)$ in $G$. By a similar argument as before, the edge agent $a_{e'_2}$ must be assigned the vertex good $v'_2$. Continuing in this manner, we will encounter an edge, say $e'_i = (v'_{i-1},v'_i)$ such that the vertex good $v'_i$ is assigned to an agent different from $a_{e'_i}$. This means that the edge agent $a_{e'_i}$ has a utility of (at least) $2$ for the bundle of the edge agent $a_{e'_{i-1}}$ and a utility of $1$ for its own bundle, contradicting the envy-freeness of $A$.
\end{enumerate}

This completes the proof of \Cref{property:Edge_agents_cannot_get_any_vertex_good}.
\end{proof}

\begin{property}
For any edge $e = (v_i,v_j)$, no dummy agent is assigned both vertex goods $v_i$ and $v_j$ under $A$.
\label{property:Dummy_agents_cannot_get_adjacent_vertices}
\end{property}
\begin{proof} (of
\Cref{property:Dummy_agents_cannot_get_adjacent_vertices})
Suppose, for contradiction, that for some edge $e = (v_i,v_j)$, a dummy agent $d$ is assigned both $v_i$ and $v_j$. \Cref{property:Edge_agent_gets_exactly_one_edge_good} implies that the utility of $a_e$ for its own bundle is exactly $1$. However, the utility of $a_e$ for the bundle of $d$ is $2$, contradicting the envy-freeness of $A$.
\end{proof}

It follows from \Cref{property:Edge_agents_cannot_get_any_vertex_good} that all vertex goods must be allocated among the dummy agents. Now consider the following coloring of the graph $G$: For each vertex $v$, the color of $v$ is the index of the dummy agent that gets the vertex good $v$. \Cref{property:Dummy_agents_cannot_get_adjacent_vertices} implies that the coloring is proper. Furthermore, due to envy-freeness of $A$, agents with identical valuations must have equal utilities. Therefore, each dummy agent gets the same number of vertex goods, implying that the coloring is equitable. This completes the proof of \Cref{prop:EF_Existence_NPcomplete_BinaryValuations}.
\end{proof}

\subsection{Proof of Theorem~\ref{thm:HEFk_Verification_ApproxAlgo}}
\label{subsec:Proof_HEFk_Verification_ApproxAlgo}

Recall the statement of \Cref{thm:HEFk_Verification_ApproxAlgo}.

\HEFkVerificationApproxAlgo*

Recall from \Cref{sec:Theoretical_Results} that given any allocation $A$, the \emph{residual envy function} $f : 2^{[m]} \rightarrow \mathbb{R}$ is defined as follows:
\begin{align*}
	f(S) \coloneqq \sum_{h \in [n]} \sum_{i \neq h} \max\{0, v_i(A_h \setminus S) - v_i(A_i)\}.
\end{align*}

Here, $f(S)$ is the aggregate envy in $A$ after hiding the goods in $S \subseteq [m]$. We will show in \Cref{lem:Supermodularity} that $f$ is \emph{supermodular}, i.e., for any pair of sets $S,T \subseteq [m]$ such that $S \subseteq T$ and any good $j \notin T$, $f(S) - f(S \cup \{j\}) \geq f(T) - f(T \cup \{j\})$. The proof of \Cref{thm:HEFk_Verification_ApproxAlgo} will then follow from the standard greedy algorithm for submodular maximization, or, equivalently, supermodular minimization~ \citep{NWF78analysis}.

\begin{restatable}{lemma}{Supermodularity}
 \label{lem:Supermodularity}
 The residual envy function $f$ is supermodular.
\end{restatable}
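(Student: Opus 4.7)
The plan is to reduce supermodularity of $f$ to a monotonicity statement about a single pair $(i,h)$ of agents, and then to a one-variable observation about the piecewise-linear function $\max\{0,\cdot\}$.

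Since $f$ is a sum over pairs $(i,h)$ with $i \neq h$, it suffices to show that for each fixed pair, the per-pair function
\begin{align*}
E_{ih}(S) \coloneqq \max\bigl\{0,\ v_i(A_h \setminus S) - v_i(A_i)\bigr\}
\end{align*}
satisfies $E_{ih}(S) - E_{ih}(S \cup \{j\}) \geq E_{ih}(T) - E_{ih}(T \cup \{j\})$ whenever $S \subseteq T$ and $j \notin T$; supermodularity of $f$ then follows by summing. I would split into two cases based on whether $j \in A_h$. If $j \notin A_h$, then additivity of $v_i$ gives $v_i(A_h \setminus (S \cup \{j\})) = v_i(A_h \setminus S)$ and likewise for $T$, so both marginals are zero and the inequality is trivial. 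All the content lies in the case $j \in A_h$.

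In the case $j \in A_h$, set $\delta \coloneqq v_{i,j} \geq 0$ and, using additivity, write $\alpha_S \coloneqq v_i(A_h \setminus S) - v_i(A_i)$ and $\alpha_T \coloneqq v_i(A_h \setminus T) - v_i(A_i)$. Since $S \subseteq T$ and valuations are non-negative, $v_i(A_h \setminus T) \leq v_i(A_h \setminus S)$, so $\alpha_T \leq \alpha_S$. Moreover, $v_i(A_h \setminus (S \cup \{j\})) = v_i(A_h \setminus S) - \delta$ and similarly for $T$. The desired inequality therefore reduces to showing that the single-variable function
\begin{align*}
\phi(\alpha) \coloneqq \max\{0,\alpha\} - \max\{0,\alpha - \delta\}
\end{align*}
satisfies $\phi(\alpha_S) \geq \phi(\alpha_T)$, i.e., that $\phi$ is non-decreasing in $\alpha$.

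The monotonicity of $\phi$ is immediate from a three-case analysis: $\phi(\alpha) = 0$ for $\alpha \leq 0$, $\phi(\alpha) = \alpha$ for $0 \leq \alpha \leq \delta$, and $\phi(\alpha) = \delta$ for $\alpha \geq \delta$, so $\phi$ is continuous, non-decreasing, and bounded in $[0,\delta]$. Combining this with $\alpha_S \geq \alpha_T$ yields the per-pair inequality, and summing over all pairs $(i,h)$ completes the proof. There is no real obstacle here: the only thing to be careful about is handling the $\max\{0,\cdot\}$ kink correctly, which is why the case split on the sign and magnitude of $\alpha$ relative to $\delta$ is the heart of the argument.
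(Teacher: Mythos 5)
Your proof is correct, and it shares the paper's overall skeleton---both arguments reduce supermodularity of $f$ to the corresponding inequality for each per-pair term $\max\{0, v_i(A_h \setminus S) - v_i(A_i)\}$ and then sum---but the way you establish the per-pair inequality is genuinely different and noticeably cleaner. The paper first localizes to the owner $r$ of the good $j$, introduces the auxiliary sets $E_r(\cdot)$ of envious agents and $N_j$ of agents valuing $j$ positively, and then proves the per-pair marginal inequality by contradiction, using monotonicity of $f_{r,i}$ together with the bound $f_{r,i}(T) - f_{r,i}(T \cup \{j\}) \leq v_{i,j}$. You instead observe that, by additivity, hiding $j$ shifts the argument of $\max\{0,\cdot\}$ down by exactly $\delta = v_{i,j}$, so the marginal decrease equals $\phi(\alpha) = \max\{0,\alpha\} - \max\{0,\alpha-\delta\}$ evaluated at $\alpha_S$ versus $\alpha_T$; since $\phi$ is non-decreasing (being $0$, then $\alpha$, then $\delta$ across the three regimes) and $\alpha_S \geq \alpha_T$, the inequality is immediate. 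Your one-variable reduction isolates exactly where the kink of $\max\{0,\cdot\}$ matters and avoids the case analysis over envy sets entirely; the paper's route buys nothing extra here beyond introducing notation ($E_h(\cdot)$, $N_j$) that is reused informally elsewhere. The only points worth stating explicitly in a final write-up are that $j \in A_h \setminus S$ and $j \in A_h \setminus T$ (which follow from $j \notin T \supseteq S$), so that the additivity step $v_i(A_h \setminus (S \cup \{j\})) = v_i(A_h \setminus S) - \delta$ is licensed, and that $\delta \geq 0$ by non-negativity of valuations---both of which you implicitly use and both of which hold.
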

\begin{proof}
We will start with the necessary notation. For any agent $h \in [n]$ and any other agent $i \in [n] \setminus \{h\}$, define $f_{h,i}(S) \coloneqq \max\{0, v_i(A_h \setminus S) - v_i(A_i)\}$ as the envy of $i$ towards $h$ after hiding the goods in $S$. Also, let $f_h(S) \coloneqq \sum_{i \neq h} f_{h,i}(S)$ denote the total (aggregate) envy towards $h$. We therefore have $f(S) = \sum_{h \in [n]} f_h(S) = \sum_{h \in [n]} \sum_{i \neq h} f_{h,i}(S)$. 

Notice that $f_{h,i}$ is a monotone non-increasing set function, i.e., for any $S \subseteq T$, we have $f_{h,i}(S) \geq f_{h,i}(T)$. Also notice that for any $T \subseteq [m]$ and any $j \in [m] \setminus T$, we have that $f_{h,i}(T) - f_{h,i}(T \cup \{j\}) \leq v_{i,j}$.

For any set of goods $S \subseteq [m]$ and any agent $h \in [n]$, define $E_h(S) \coloneqq \{i \in [n] : f_{h,i}(S) > 0\}$ as the set of agents that envy agent $h$ even after the goods in $S$ are hidden. Notice that if $S \subseteq T$, then $E_h(T) \subseteq E_h(S)$. Thus, if for some agent $i$ we have that $i \notin E_h(S)$, then $i \notin E_h(T)$, and therefore $f_{h,i}(S) = f_{h,i}(T) = 0$.

Define $N_j \coloneqq \{i \in [n] : v_{i,j} > 0\}$ as the set of agents that have a strictly positive valuation for the good $j$.

We will now prove that $f$ is supermodular, i.e., for any $S \subseteq T$ and any good $j \notin T$, $f(S) - f(S \cup \{j\}) \geq f(T) - f(T \cup \{j\})$. Let $r \in [n]$ be the owner of good $j$ under $A$, i.e., $j \in A_r$. Notice that if $i \notin N_j$ (i.e., $v_{i,j}=0$), then additivity of valuations implies $v_i(A_r \setminus S) = v_i(A_r \setminus S \cup \{j\})$. Thus,
\begin{align}
    f(S) - f(S \cup \{j\}) & = f_r(S) - f_r(S \cup \{j\}) \nonumber \\
    & = \sum_{i \neq r} f_{r,i}(S) - f_{r,i}(S \cup \{j\}) \nonumber \\
     & = \sum_{i \in E_r(S)} f_{r,i}(S) - f_{r,i}(S \cup \{j\}) \nonumber \\
     & = \sum_{i \in E_r(S) \cap N_j} f_{r,i}(S) - f_{r,i}(S \cup \{j\}),\label{eqn:supermodular_temp1}
\end{align}
where the first equality uses the fact that for any $h \neq r$, we have $f_h(S) = f_h(S \cup \{j\})$, the third equality uses the fact that if $i \notin E_r(S)$, then $f_{r,i}(S) = f_{r,i}(S \cup \{j\}) = 0$, and the fourth equality uses the fact that $v_i(A_r \setminus S) = v_i(A_r \setminus S \cup \{j\})$ whenever $i \notin N_j$. By a similar reasoning for the set $T$, we get that
\begin{align}
f(T) &- f(T \cup \{j\}) = \sum_{i \in E_r(T) \cap N_j} f_{r,i}(T) - f_{r,i}(T \cup \{j\}).
\label{eqn:supermodular_temp2}
\end{align}

Recall that $E_r(T) \subseteq E_r(S)$. Therefore, \Cref{eqn:supermodular_temp1} can be rewritten as
\begin{align}
    f(S) - f(S \cup \{j\}) & = \sum_{i \in E_r(T) \cap N_j} f_{r,i}(S) - f_{r,i}(S \cup \{j\}) + \nonumber \\
     & \qquad \sum_{i \in E_r(S) \setminus E_r(T) \cap N_j} f_{r,i}(S) - f_{r,i}(S \cup \{j\}) \nonumber \\
     & \geq \sum_{i \in E_r(T) \cap N_j} f_{r,i}(S) - f_{r,i}(S \cup \{j\}),
\label{eqn:supermodular_temp3}
\end{align}
where the inequality follows from the use of the monotonicity of $f_{r,i}$ for all $i \in E_r(S) \setminus E_r(T) \cap N_j$.

Therefore, from \Cref{eqn:supermodular_temp2,eqn:supermodular_temp3}, it suffices to show that for every $i \in E_r(T) \cap N_j$, $f_{r,i}(S) - f_{r,i}(S \cup \{j\}) \geq f_{r,i}(T) - f_{r,i}(T \cup \{j\})$. We will prove this by contradiction.

Suppose, for contradiction, that for some $i \in E_r(T) \cap N_j$, we have $f_{r,i}(S) - f_{r,i}(S \cup \{j\}) < f_{r,i}(T) - f_{r,i}(T \cup \{j\})$. Then, we must have $i \in E_r(S \cup \{j\})$, since otherwise we get $i \notin E_r(T \cup \{j\})$ and therefore $f_{r,i}(S \cup \{j\}) = f_{r,i}(T \cup \{j\}) = 0$. This would imply that $f_{r,i}(S) < f_{r,i}(T)$, which contradicts the monotonicity of $f_{r,i}$. Hence, for any $i \in E_r(T) \cap N_j$, we also have that $i \in E_r(S \cup \{j\})$.

Notice that for any $i \in E_r(S \cup \{j\}) \cap N_j$, we have $f_{r,i}(S) - f_{r,i}(S \cup \{j\}) = v_{i,j}$ by the additivity of valuations. However, this would require that $f_{r,i}(T) - f_{r,i}(T \cup \{j\}) > v_{i,j}$, which is a contradiction. Therefore, the function $f$ must be supermodular.
\end{proof}

We are now ready to prove \Cref{thm:HEFk_Verification_ApproxAlgo}.

\begin{proof} (of \Cref{thm:HEFk_Verification_ApproxAlgo})
Note that allocation $A$ is \HEF{} with respect to a set $S$ if and only if $f(S) \leq 0$. For integral valuations, $f(S) \leq 0$ if and only if $f(S) < 1$. Therefore, it suffices to compute a set $S$ in polynomial time such that $|S| \leq \kopt \cdot \ln E + 1$ and $f(S) < 1$.

Consider the greedy algorithm described in Algorithm~\ref{alg:Greedy_HEFk_ApproxAlgo}. 
\begin{algorithm}[t]
 \DontPrintSemicolon
 \KwIn{An instance $\langle [n], [m], \V \rangle$ and an allocation $A$.}
 \KwOut{A set $S \subseteq [m]$.}
 \BlankLine
 Initialize $S = \emptyset$.\;
 \While{$f(S) \geq 1$}{
 	Set $j' \leftarrow \arg\max_{j \in [m] \setminus S} f(S) - f(S \cup \{j\})$\;
 	 \Comment*[r]{tiebreak lexicographically}
 	Update $S \leftarrow S \cup \{j'\}$
 }
 \KwRet $S$
 \caption{Greedy Approximation Algorithm for \HEFkVerification{}}
\label{alg:Greedy_HEFk_ApproxAlgo}
\end{algorithm}
At each step, the algorithm adds to the current set the good that provides the largest reduction in the residual envy. This process is continued as long as $f(S) \geq 1$. Since there are $m$ goods, it is clear that the algorithm terminates in at most $m$ steps. Furthermore, from the above observation, it follows that the allocation $A$ is \HEF{} with respect to the set $S$ returned by the algorithm. Therefore, all that remains to be shown is a bound on $|S|$.

Observe that $f(\emptyset) = E$. Recall from the proof of \Cref{lem:Supermodularity} that $f$ is a sum of monotone non-increasing set functions, and is therefore itself monotone non-increasing. Define another set function $g: 2^{[m]} \rightarrow \mathbb{R}$ as follows:
\begin{align*}
	g(S) \coloneqq E - f(S).
\end{align*}

Notice that $g$ is a non-negative, monotone non-decreasing, and integer-valued submodular function with $g(\emptyset) = 0$. Therefore, our goal is to find a set $S$ such that $g(S) > E - 1$.

We will now use the result of \citet{NWF78analysis} for submodular maximization stated below as \Cref{prop:Submodular_Greedy_Approx}. In particular, let $p \coloneqq \kopt$ be the size of the optimal hidden set (i.e., the number of goods that must be hidden under $A$). Then, 
$$\max\limits_{S : |S|\leq p} g(S) = E - \min\limits_{S : |S|\leq \kopt} f(S) = E.$$

From the bound in \Cref{prop:Submodular_Greedy_Approx}, we have that
\begin{alignat*}{2}
    & \qquad & (1 - e^{-q/p}) \max\limits_{S : |S|\leq p} g(S) & > E - 1 \\
    & \Longleftrightarrow & (1 - e^{-q/\kopt}) \cdot E & > E - 1 \\
    & \Longleftrightarrow & 1 - e^{-q/\kopt} & > 1 - 1/E \\
    & \Longleftrightarrow & \ln \frac{1}{E} & > -q/\kopt \\
    & \Longleftrightarrow & q & > \kopt \ln E.
\end{alignat*}

Thus, after $q > \kopt \ln E$ steps, any set $S$ constructed by the algorithm satisfies $g(S) > E - 1$, or, equivalently, $f(S) < 1$, giving us the desired bound $|S| \leq \kopt \ln E + 1$. This completes the proof of \Cref{thm:HEFk_Verification_ApproxAlgo}.
\end{proof}

\begin{restatable}[\citet{NWF78analysis}, \citet{KG14submodular}]{prop}{SubmodularGreedyApprox}
 \label{prop:Submodular_Greedy_Approx}
 Let $g: 2^{[m]} \rightarrow \mathbb{R}_{\geq 0}$ be a monotone non-decreasing submodular function, and let $\{S_i\}_{i \geq 0}$ be the sequence of sets constructed in Algorithm~\ref{alg:Greedy_HEFk_ApproxAlgo}. Then, for any positive integers $p$ and $q$, we have that
 \begin{align*}
     g(S_q) \geq (1 - e^{-q/p}) \max\limits_{S : |S|\leq p} g(S).
 \end{align*}
\end{restatable}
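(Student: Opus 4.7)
The plan is to prove the classical Nemhauser--Wolsey--Fisher greedy bound for monotone submodular maximization under a cardinality constraint. Let $S^\star \subseteq [m]$ be a maximizer of $g$ subject to $|S^\star| \leq p$, and write $\mathrm{OPT} \coloneqq g(S^\star)$; I need to show $g(S_q) \geq (1 - e^{-q/p})\,\mathrm{OPT}$. For brevity, let $\Delta_j(S) \coloneqq g(S \cup \{j\}) - g(S)$ denote the marginal gain of adding $j$ to $S$.

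First, I will establish a ``covering'' inequality: for every $S \subseteq [m]$,
\begin{equation*}
\mathrm{OPT} \;\leq\; g(S) + \sum_{j \in S^\star \setminus S} \Delta_j(S).
\end{equation*}
The argument is to use monotonicity to bound $\mathrm{OPT} = g(S^\star) \leq g(S \cup S^\star)$, then telescope $g(S \cup S^\star) - g(S)$ along any enumeration $j_1, \ldots, j_r$ of $S^\star \setminus S$, and bound each telescoped increment $g(S \cup \{j_1, \ldots, j_t\}) - g(S \cup \{j_1, \ldots, j_{t-1}\})$ above by $\Delta_{j_t}(S)$ using submodularity of $g$.

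Next, I will instantiate the covering inequality with $S = S_i$. Since $|S^\star \setminus S_i| \leq p$, some $j \in S^\star \setminus S_i$ achieves $\Delta_j(S_i) \geq (\mathrm{OPT} - g(S_i))/p$ by averaging; the greedy rule in Algorithm~\ref{alg:Greedy_HEFk_ApproxAlgo} picks $j' \in [m] \setminus S_i$ maximizing $\Delta_{j'}(S_i)$ over \emph{all} candidate elements, so in particular
\begin{equation*}
g(S_{i+1}) - g(S_i) \;\geq\; \tfrac{1}{p}\bigl(\mathrm{OPT} - g(S_i)\bigr).
\end{equation*}
The corner case $S^\star \subseteq S_i$ is handled directly by monotonicity, which gives $g(S_i) \geq \mathrm{OPT}$ and makes the bound trivial.

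Finally, setting $\delta_i \coloneqq \mathrm{OPT} - g(S_i)$, the previous display rearranges to $\delta_{i+1} \leq (1 - 1/p)\,\delta_i$, so by induction $\delta_q \leq (1 - 1/p)^q\,\delta_0 \leq (1 - 1/p)^q\,\mathrm{OPT}$. The elementary inequality $1 - x \leq e^{-x}$ applied with $x = 1/p$ then yields $\delta_q \leq e^{-q/p}\,\mathrm{OPT}$, which is equivalent to the claimed bound. The main conceptual step is the covering inequality; once that is in hand, the remainder is a one-step linear recurrence and the standard exponential estimate, both entirely routine.
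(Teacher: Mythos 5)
Your proof is correct: the covering inequality via monotonicity, telescoping, and diminishing returns, followed by the averaging argument and the linear recurrence $\delta_{i+1} \leq (1-1/p)\delta_i$, is the standard Nemhauser--Wolsey--Fisher argument, and your handling of the corner cases ($S^\star \subseteq S_i$, and the trivial case where the greedy increment bound is non-positive) is adequate. The paper itself does not prove this proposition---it imports it as a black-box result from \citet{NWF78analysis} and \citet{KG14submodular}---so there is no in-paper argument to compare against; your write-up simply supplies the classical proof that the citation points to.
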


\subsection{\MNW{} can have large regret in the worst-case}
\label{subsec:MNW_Large_Regret}

This section presents two results concerning the worst-case regret of \MNW{} solution. In \Cref{prop:NSW_vs_HEFk}, we will provide a family of instances for which the normalized regret of \MNW{} approaches $1$ (i.e., the maximum possible value). In \Cref{prop:NSW_vs_HEFk_Binary_Vals}, we will show a slightly weaker limit ($\nicefrac{1}{2}$ instead of $1$) that holds even for the restricted domain of binary valuations. We will use $\kappaopt(\I) \coloneqq \min_{A} \kappa(A,\I)$ to denote the smallest number of goods that must be hidden under any allocation in the instance $\I$. 

\begin{restatable}{prop}{NSWRegret}
 \label{prop:NSW_vs_HEFk}
There exists a family of instances for which the normalized regret of any Nash optimal allocation approaches $1$ in the limit.
\end{restatable}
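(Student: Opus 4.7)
The plan is to exhibit an explicit family of instances $\{\I_n\}_{n \geq n_0}$ in which the Nash optimal allocation hides the worst-case number $n-1$ of goods, while another (non-MNW) allocation hides only $o(n)$. Since \MNW{} is $\uHEF{(n-1)}$ by \Cref{rem:EF1_algorithms_uHEF_n-1}, the bound $\kappa(A_{\MNW},\I_n) \leq n-1$ is automatic; the construction's purpose is to make this bound tight while simultaneously forcing $\kappaopt(\I_n)$ to be small. The normalized regret then becomes $(n-1-o(n))/(n-1)$, which tends to $1$.

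The key design goal is to engineer $\I_n$ so that maximizing the product of utilities necessarily concentrates $n-1$ ``individually coveted'' goods into a single agent's bundle, with each of the other $n-1$ agents envying a distinct such good. In such a configuration, each envy relation can only be dissolved by hiding its own specific good, giving $\kappa(A_{\MNW}) = n-1$. To keep $\kappaopt$ small, the same instance must admit an alternative allocation that redistributes the coveted goods to the agents that covet them, thereby eliminating almost all envy. The proof will then split into four steps: (i) define $\I_n$ by specifying its agents, goods, and valuation profile; (ii) show, via a local-swap argument on \NSW{}, that every Nash optimal allocation has the concentrated structure; (iii) verify that in this structure each envy relation requires its own distinct hidden good, so $\kappa(A_{\MNW}) = n-1$; and (iv) exhibit an alternative allocation with $\kappa = O(1)$.

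The main obstacle is step (ii): typically, moving a coveted good $g$ from agent $1$ to the envious agent $i$ both generates a large log-increase in $v_i(A_i)$ and dissolves the envy, tempting \MNW{} to make the swap. To prevent this, the instance must be calibrated so that the marginal log-loss to agent $1$ from giving up $g$ exceeds the marginal log-gain to agent $i$; this requires agent $1$ to have a comparatively small total utility (so each of its goods matters a lot on the log scale), yet no alternative allocation can leave any agent with zero utility (which would crash \NSW{} to $0$ and make the ``bad'' allocation optimal trivially, undermining the construction). A standard device for balancing these competing requirements is to introduce auxiliary ``filler'' goods that only certain agents value, giving \MNW{} enough slack to keep all utilities positive while still pinning the coveted goods with agent $1$. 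Once such a calibration is in place, the remaining work---routing the case analysis for local swaps and counting how many goods must be hidden in the two allocations---should be straightforward bookkeeping.
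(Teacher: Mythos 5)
There is a genuine gap: your proposal is a plan, not a proof. You never define the instance $\I_n$, and the step you yourself identify as the main obstacle---showing via local swaps that the Nash optimal allocation has your intended ``concentrated'' structure---is left entirely unresolved (``once such a calibration is in place\ldots''). Worse, the structural target you commit to works against the Nash product: forcing \MNW{} to pile $n-1$ individually coveted goods onto a single agent while each of the other $n-1$ agents covets a distinct one of them is exactly the kind of concentration that the product objective punishes most severely, and it is far from clear that any calibration of ``filler'' goods makes the log-marginal comparison come out your way for all $n-1$ swaps simultaneously. So the burden of proof sits precisely on the part you have not done.

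The paper's construction sidesteps this difficulty entirely by using a \emph{chain} rather than a hub: agent $a_1$ values only $g_1$; each $a_i$ ($i\geq 2$) values $g_{i-1}$ at $10$ and $g_i$ at $1$, and nothing else. With $n$ agents and $n$ goods, the only allocation giving every agent positive utility is $A_i=\{g_i\}$, so it is the unique Nash optimal allocation---no local-swap or log-marginal analysis is needed, because every competitor has Nash welfare zero. Under $A$, each $a_{i+1}$ envies $a_i$ via the distinct good $g_i$, forcing $\kappa(A,\I)=n-1$, while the shifted allocation ($g_{i-1}$ to $a_i$, $g_n$ to $a_1$) needs to hide only $g_1$, and indeed $\kappaopt(\I)=1$. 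If you want to salvage your approach, the lesson is to pin down the \MNW{} allocation through the \emph{support structure} of the valuations (which agents can get positive utility at all) rather than through a delicate calibration of marginal log-gains.
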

\begin{proof}
Consider the fair division instance $\I$ with five agents $a_1,\dots,a_5$ and five goods $g_1,\dots,g_5$ shown in \Cref{tab:Instance_NSW_vs_HEFk}. 
\begin{table}
\centering
    \begin{tabular}{ c|cccccc }
	& $g_1$ & $g_2$ & $g_3$ & $g_4$ & $g_5$\\ \hline
  $a_1$ & $1$ & $0$ & $0$ & $0$ & $0$\\
  $a_2$ & $10$ & $1$ & $0$ & $0$ & $0$\\
  $a_3$ & $0$ & $10$ & $1$ & $0$ & $0$\\
  $a_4$ & $0$ & $0$ & $10$ & $1$ & $0$\\
  $a_5$ & $0$ & $0$ & $0$ & $10$ & $1$\\
	\end{tabular}
	\caption{The instance used in proof of \Cref{prop:NSW_vs_HEFk}.}
	\label{tab:Instance_NSW_vs_HEFk}
\end{table}
The unique Nash optimal allocation (say $A$) for this instance assigns $g_i$ to $a_i$ for every $i \in [5]$. Thus, the goods $g_1$, $g_2$, $g_3$, $g_4$ must be hidden under $A$, i.e., $\kappa(A,\I) = 4$. On the other hand, an allocation (say $B$) that assigns $g_5$ to $a_1$, and $g_{i-1}$ to $a_i$ for every $i \in \{2,\dots,5\}$ only needs to hide the good $g_1$. Indeed, $\kappaopt(\I) = 1$ since any allocation must hide $g_1$ to avoid envy from $a_1$ or $a_2$. The desired family of instances is the natural extension of the above example to $n$ agents and $n$ goods. In the limit, the normalized regret of the Nash optimal allocation is $\lim_{n \rightarrow \infty} \frac{(n-1) - 1}{n-1} = 1$.
\end{proof}

\begin{restatable}{prop}{NSWvsHEFBinary}
 \label{prop:NSW_vs_HEFk_Binary_Vals}
There exists a family of instances with binary valuations for which the normalized regret of any Nash optimal allocation approaches $\nicefrac{1}{2}$ in the limit.
\end{restatable}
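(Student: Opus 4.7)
My plan is to adapt the construction of \Cref{prop:NSW_vs_HEFk} to the binary setting, where the restricted expressive power of $0/1$ valuations is expected to weaken the achievable limit from $1$ to $\nicefrac{1}{2}$. At a high level, I aim for a family $\{\I_n\}$ in which every Nash-optimal allocation must hide roughly $n-1$ goods, while some other (non-Nash-optimal) allocation hides only about $n/2$, so that the normalized regret tends to $\nicefrac{1}{2}$.

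First, I would construct $\I_n$ with $n$ agents and $\Theta(n)$ binary-valued goods arranged in a chain structure: each agent $a_i$ approves a small private ``bonus'' group of goods together with certain goods it shares with one or both of its chain neighbours. The bonus goods serve as the leverage that pins MNW down to the canonical assignment --- any swap that moves $a_i$ off its canonical bundle loses $a_i$ at least one approved good, and since binary utilities differ by unit amounts, the loss cannot be offset elsewhere in the Nash product. I would then verify that the Nash-optimal allocation is essentially unique (modulo symmetries that leave utilities invariant), so that the quantifier ``any Nash optimal allocation'' is meaningful and all such allocations share the same $\kappa$ value.

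Next, I would analyse envy under the canonical MNW allocation $A$. The intended effect of the chain structure is that, for every $i$, agent $a_{i+1}$ strictly envies $a_i$ because $a_{i+1}$ values $a_i$'s bundle (which contains a shared good $g_i$ that $a_{i+1}$ also approves) strictly more than its own. Eliminating this envy on each of the $n-1$ chain links requires hiding at least one good per link, and because the links involve disjoint goods, the $n-1$ hidden goods must be distinct; this gives $\kappa(A,\I_n) = n-1$. In parallel, I would exhibit an alternative allocation $B$ obtained by partitioning the chain into consecutive pairs and performing an internal swap within each pair, so that no envy remains between the two agents of any pair. Only the envies across pair boundaries survive, and each requires one hidden good, which bounds $\kappaopt(\I_n) \leq \lceil n/2 \rceil$. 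Combining the two bounds yields
\[
\frac{\reg(A,\I_n)}{n-1} \;\geq\; \frac{(n-1) - \lceil n/2 \rceil}{n-1} \;\xrightarrow[n \to \infty]{}\; \tfrac{1}{2}.
\]

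The principal obstacle is twofold. On the one hand, the binary instance must simultaneously (i) pin down a \emph{unique} canonical MNW pattern using bonus goods and (ii) produce \emph{strict} (not merely weak) envy along every chain link; calibrating the approval sets so that both conditions hold is delicate, since in binary valuations even a single extra approval can make a shifted allocation Nash-optimal and destroy uniqueness. On the other hand, I must verify that the pair-shifted allocation $B$ genuinely eliminates all within-pair envy while keeping cross-pair envy confined to a single hidden good per boundary --- in particular, that envy does not cascade across multiple pair boundaries and that no unrelated agent's envy appears as a side effect of the swap.
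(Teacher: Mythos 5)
There is a genuine gap: what you have written is a construction \emph{plan}, not a proof, and the two ``principal obstacles'' you defer are precisely the mathematical content of the claim. No concrete instance is exhibited, and realizing one is harder than the sketch suggests. With binary valuations, Nash optimality severely constrains when envy can occur: if $a_{i+1}$ envies $a_i$ under an \MNW{} allocation, then every good in $A_i$ approved by $a_{i+1}$ must also be approved by $a_i$ (otherwise transferring it strictly increases the Nash product), and the stability of the product under transferring a shared good forces $u_i \leq u_{i+1}+1$; combined with $v_{i+1}(A_i) \geq u_{i+1}+1$ this means envy on a link happens only in the knife-edge case $v_{i+1}(A_i) = u_{i+1}+1 = u_i$ with $A_i$ entirely approved by $a_{i+1}$. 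You would need all $n-1$ of these knife-edge conditions to hold simultaneously in \emph{every} Nash optimal allocation, and your ``bonus goods'' device, which works in the unbounded-valuation setting of \Cref{prop:NSW_vs_HEFk}, loses exactly the leverage you need once values are $0/1$. Separately, even if such an instance existed, your argument only upper-bounds $\kappaopt(\I_n)$ by $\lceil n/2\rceil$, which yields a \emph{lower} bound of $\nicefrac{1}{2}$ on the limiting normalized regret; to conclude that the regret \emph{approaches} $\nicefrac{1}{2}$ you would also need $\kappaopt(\I_n) \geq n/2 - o(n)$, i.e., that no allocation whatsoever hides substantially fewer goods, and nothing in the proposal addresses that direction.

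The paper's proof takes a structurally different and much more tractable route: rather than making the optimum expensive and \MNW{} maximally bad, it makes the optimum free. The instance has $t$ groups of two ordinary agents plus one special agent who approves everything; an envy-free allocation exists (so $\kappaopt = 0$ and no lower bound on $\kappaopt$ is needed), while every Nash optimal allocation strictly prefers a $3$--$2$ split of a group's five goods between its two members over ceding the fifth good to the special agent, creating one unit of envy in all but a constant number of the $t \approx (n-1)/2$ groups. Both $\kappa(\MNW)$ and $\kappaopt$ are thus computed exactly, and the ratio $(t-3)/(2t) \to \nicefrac{1}{2}$ falls out. If you want to salvage your approach, the minimal fix is to replace the chain by a gadget in which the optimal allocation is envy-free, since certifying a nontrivial lower bound on $\kappaopt$ over all allocations is otherwise a substantial additional argument.
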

\begin{proof}
Fix some $t \in \N$. Consider an instance $\I_n$ with $2t+1$ agents, consisting of $t$ groups of \emph{ordinary agents} $\{a_{i},b_{i}\}_{i \in [t]}$ and one \emph{special agent} $s$. The goods are also classified into $t$ groups, with group $i$ comprising of five goods $g_{i,1},\dots,g_{i,5}$. For each $i \in [t]$, both $a_i$ and $b_i$ approve all five goods in group $i$ and have zero value for all the other goods. The special agent $s$ approves all the goods.

The above instance admits an envy-free allocation $A$ in which $s$ gets one good from each group, and the other goods are allocated evenly among the group members. That is, for each $i \in [t]$, $a_{i}$ gets $\{g_{i,1},g_{i,2}\}$, $b_{i}$ gets $\{g_{i,3},g_{i,4}\}$, and $s$ gets $g_{i,5}$. Thus, $\kappaopt(\I_n) = 0$.

Let $B$ denote any Nash optimal allocation. It is easy to see that $B$ is of one of the following two canonical forms:
\begin{itemize}
    \item Either $s$ gets two goods from two different groups and the rest of the goods are assigned `evenly,' i.e., for each $i \in [t-2]$, $a_i$ gets $\{g_{i,1},g_{i,2},g_{i,3}\}$ and $b_i$ gets $\{g_{i,4},g_{i,5}\}$, and for $i \in \{t-1,t\}$, $a_i$ gets $\{g_{i,1},g_{i,2}\}$, $b_i$ gets $\{g_{i,3},g_{i,4}\}$ and $s$ gets $g_{i,5}$,
    \item or, $s$ gets three goods from three different groups and the other goods are assigned `evenly,' i.e., for each $i \in [t-3]$, $a_i$ gets $\{g_{i,1},g_{i,2},g_{i,3}\}$ and $b_i$ gets $\{g_{i,4},g_{i,5}\}$, and for $i \in \{t-2,t-1,t\}$, $a_i$ gets $\{g_{i,1},g_{i,2}\}$, $b_i$ gets $\{g_{i,3},g_{i,4}\}$ and $s$ gets $g_{i,5}$.
\end{itemize}

Either way, $B$ must hide at least $t-3$ goods (one good in each of the groups $1,\dots,t-3$ to avoid envy from $b_i$). Thus, $\reg(B,\I_n) = \kappa(B,\I_n) = t-3$. 

The desired family of instances can be obtained by choosing an arbitrarily large $t$. In the limit, the normalized regret of $B$ is $\lim_{t \rightarrow \infty} \frac{t-3}{2t} = \frac{1}{2}$.
\end{proof}

\subsection{Additional Experimental Results}
\label{subsec:Additional_Experiments}

\Cref{tab:Expt_BinaryVals_bias_0.7_Part2} presents additional results for the synthetic data used in \Cref{subsec:Expt_Synthetic} (i.e., binary valuations with $v_{i,j} \sim \Ber(0.7)$ i.i.d.). This time, we compare the algorithms in terms of their (a) normalized worst-case regret (over the $100$ instances), (b) the frequency with which the algorithms compute envy-free outcomes, and (c) the worst-case number of goods that must be hidden by each algorithm. The trend is similar to that in \Cref{subsec:Expt_Synthetic}, with \Market{} and \RR{} outperforming \MNW{} and \Envygraph{}. \Cref{tab:Expt_BinaryVals_bias_0.5} presents similar results for Bernoulli parameter $0.5$. Finally, \Cref{fig:Spliddit_data_distribution} illustrates the distribution of the Spliddit data. As can be seen, a large fraction of instances have between $3$ and $6$ agents and between $3$ and $15$ goods, with a sharp spike at $n=3$ and $m=6$.

\begin{table*}
\centering
 \begin{tabular}{|cccc|}
 \multicolumn{4}{c}{}\\
 \multicolumn{4}{c}{\textbf{Normalized worst-case regret}}\\
 \hline
 \footnotesize{\Market{}} & \footnotesize{\RR{}} & \footnotesize{\MNW{}} & \footnotesize{\Envygraph{}}\\
 \includegraphics[width=0.22\textwidth]{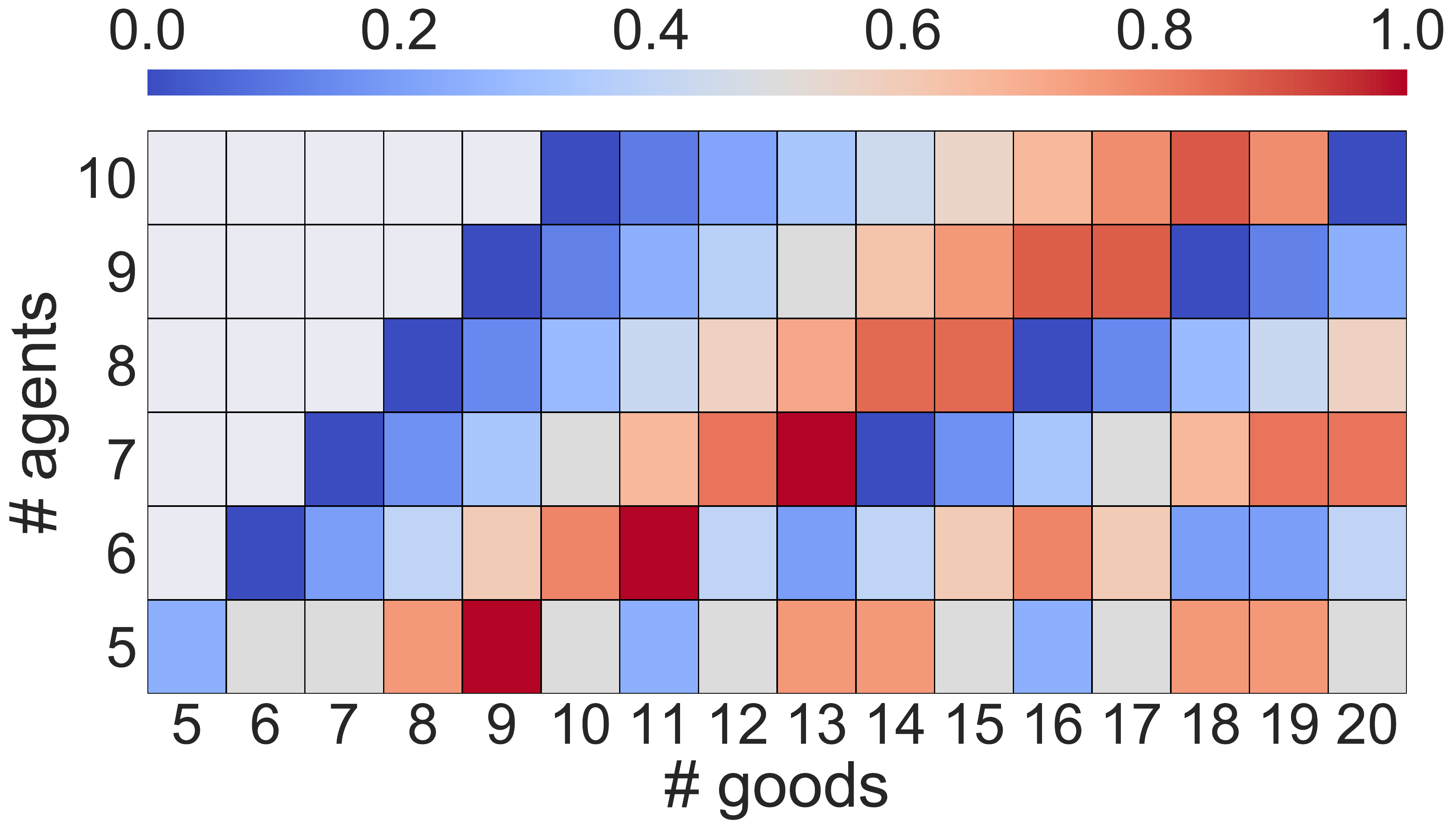} & \includegraphics[width=0.22\textwidth]{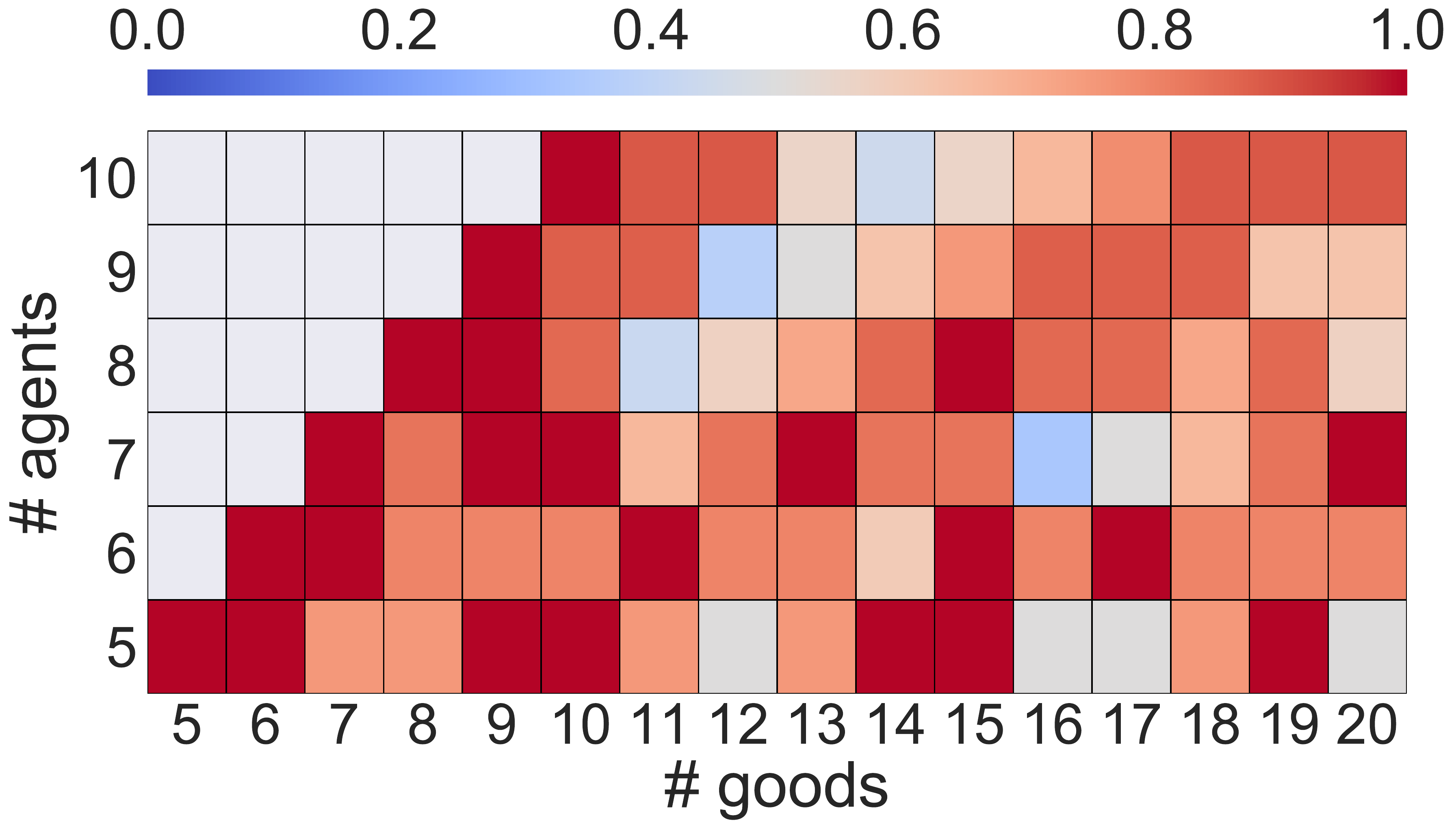} &
 \includegraphics[width=0.22\textwidth]{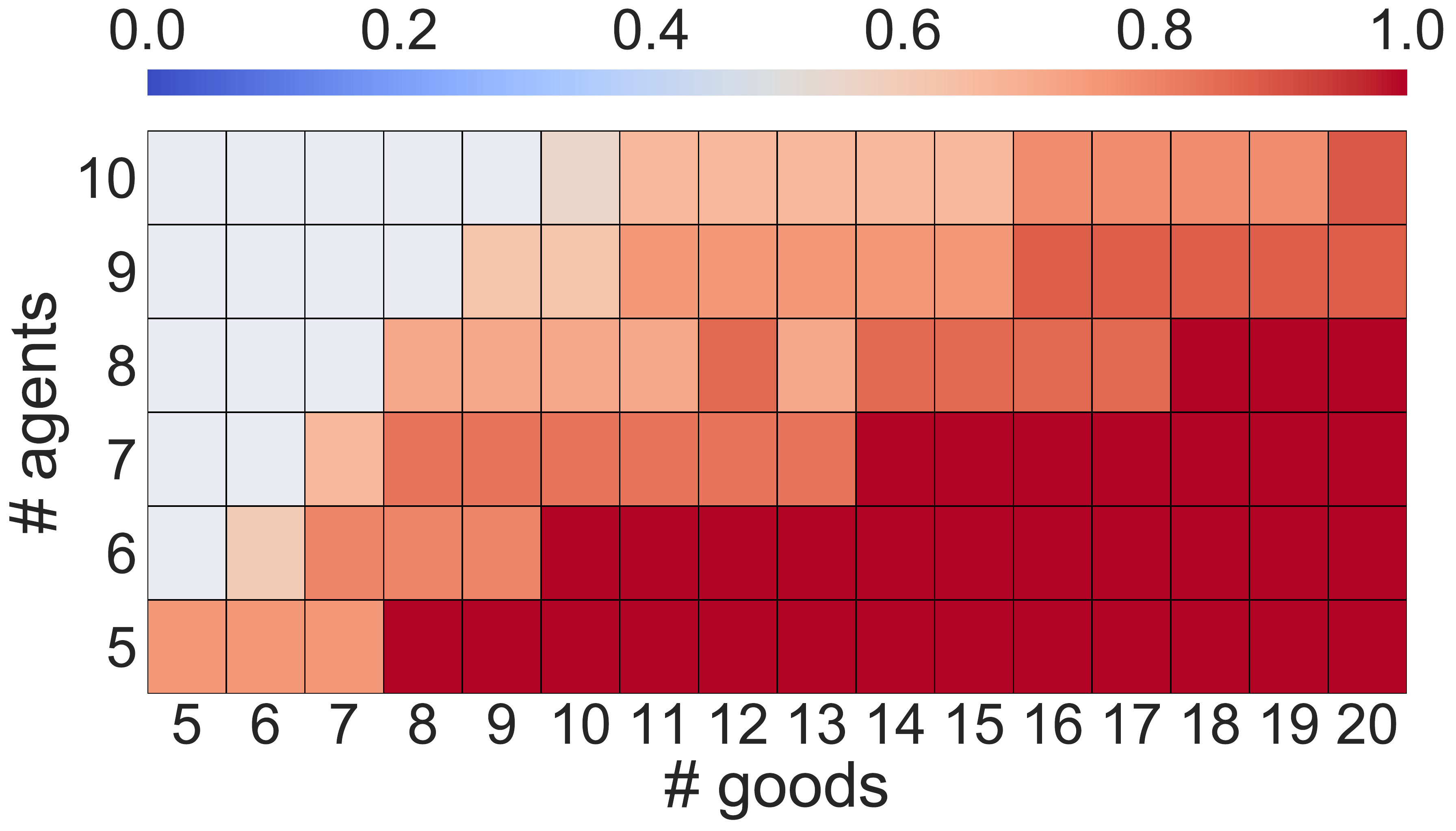} &
 \includegraphics[width=0.22\textwidth]{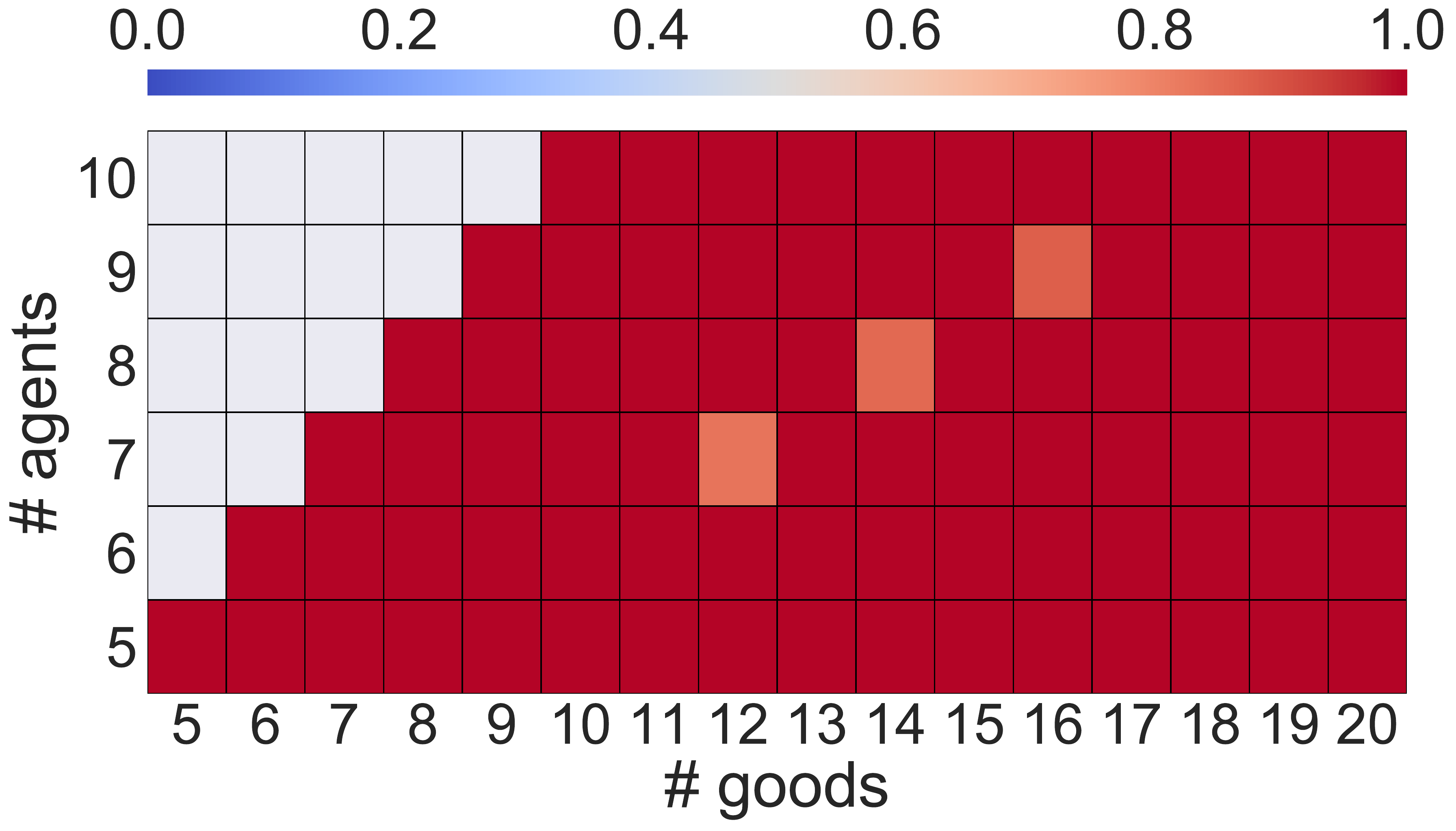} \\
 \hline
 \multicolumn{4}{c}{}\\
 \multicolumn{4}{c}{\textbf{Frequency of envy-freeness}}\\
 \hline
 \footnotesize{\Market{}} & \footnotesize{\RR{}} & \footnotesize{\MNW{}} & \footnotesize{\Envygraph{}}\\
 \includegraphics[width=0.22\textwidth]{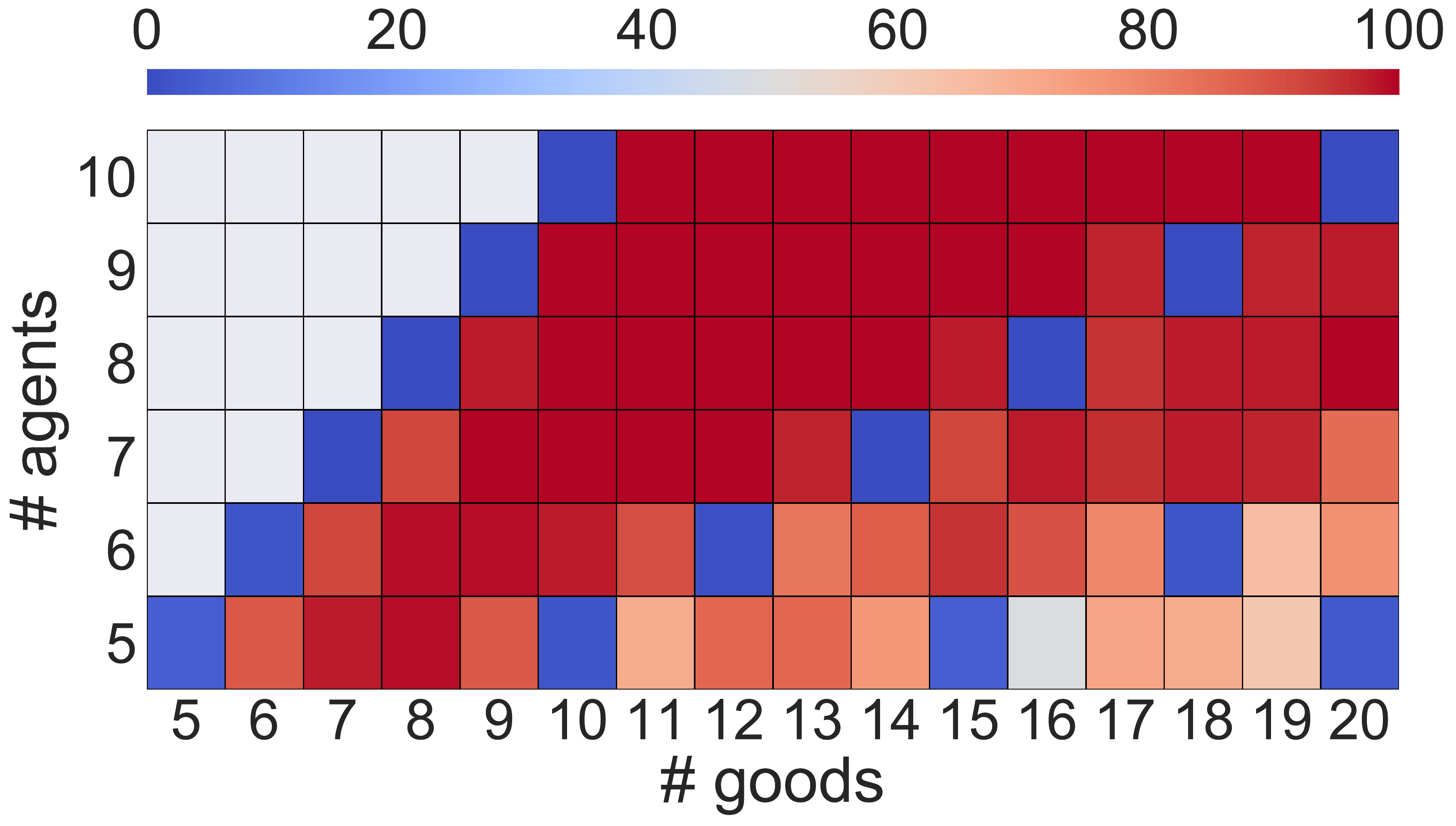} & \includegraphics[width=0.22\textwidth]{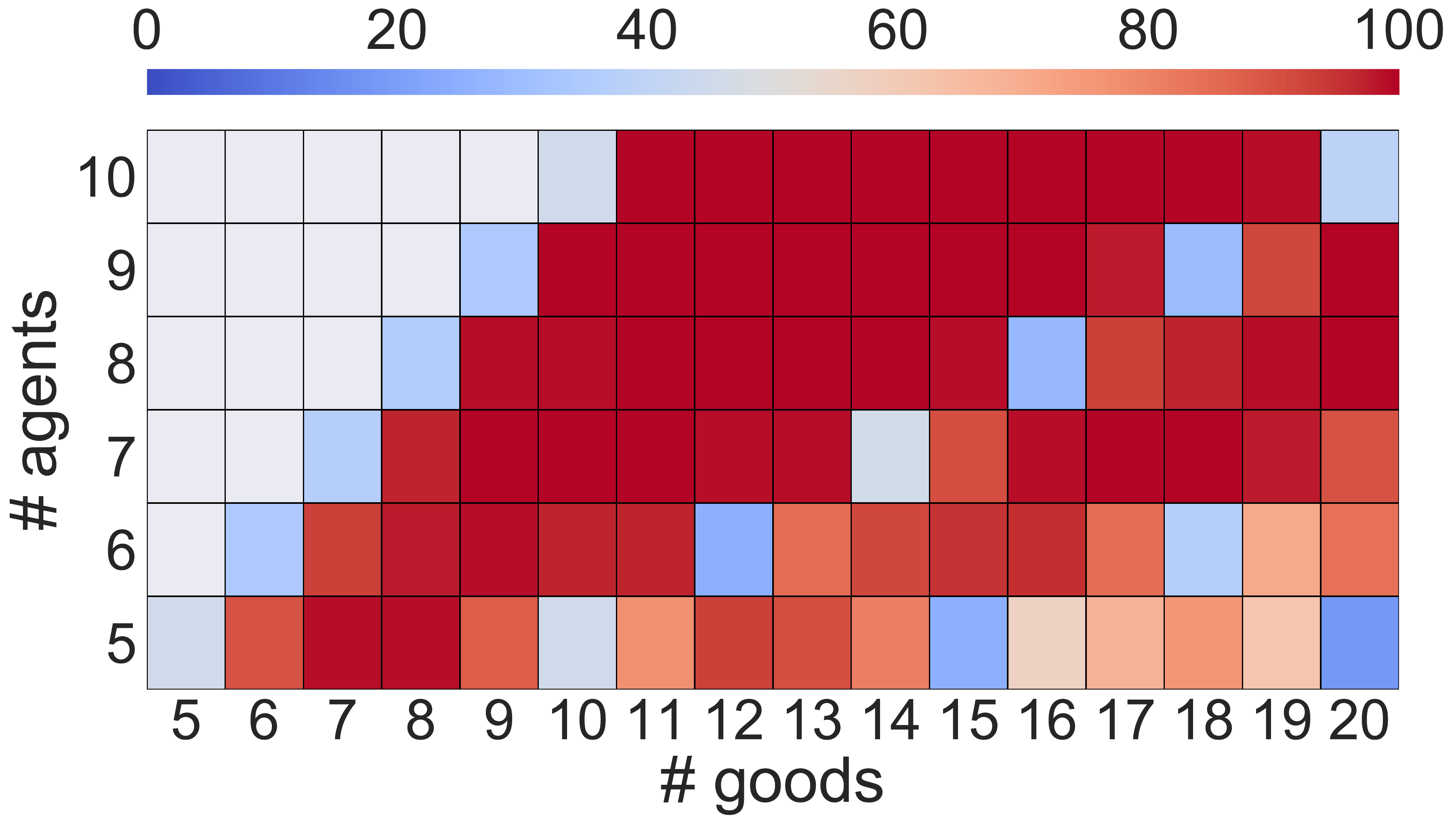} &
 \includegraphics[width=0.22\textwidth]{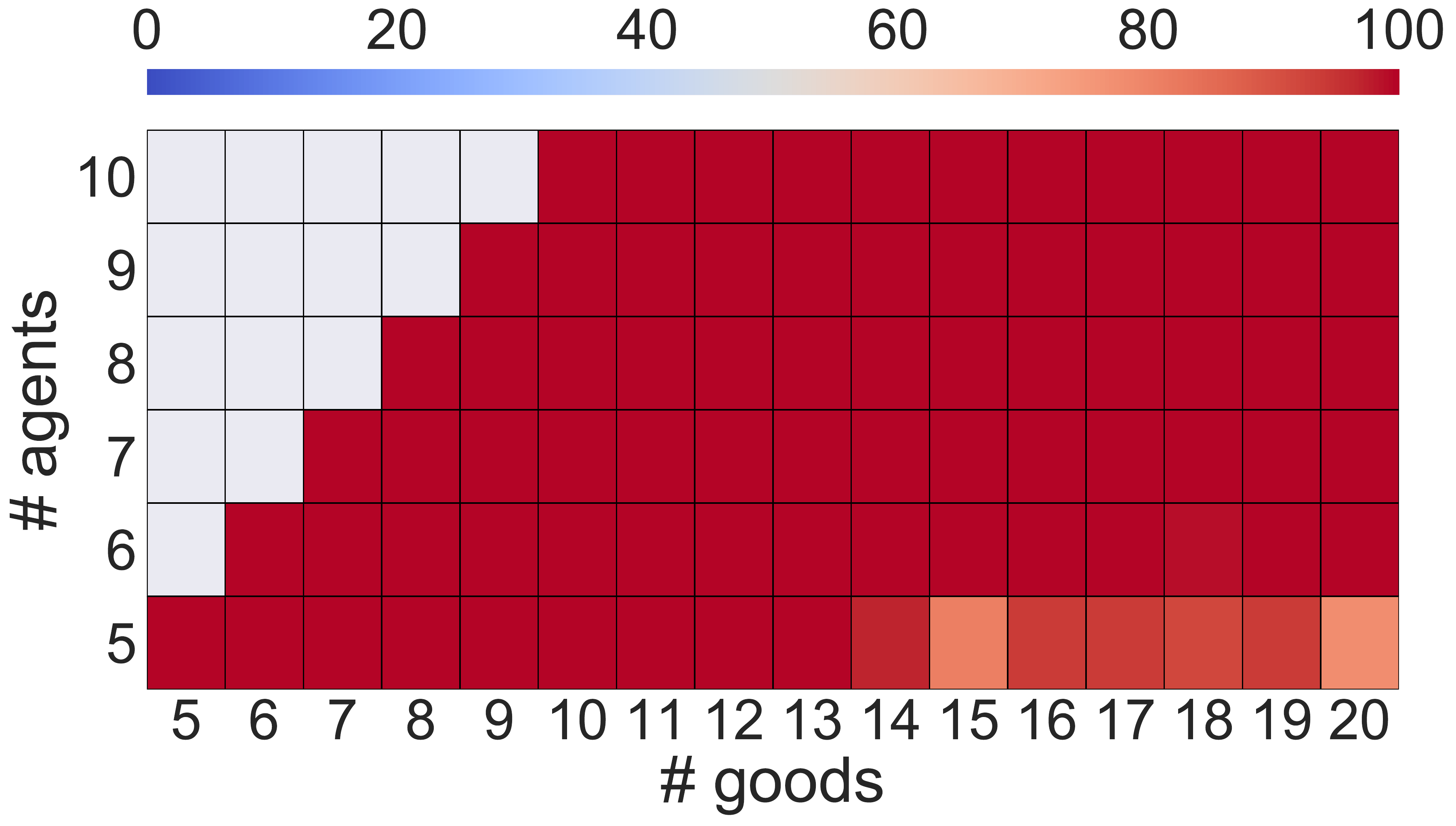} &
 \includegraphics[width=0.22\textwidth]{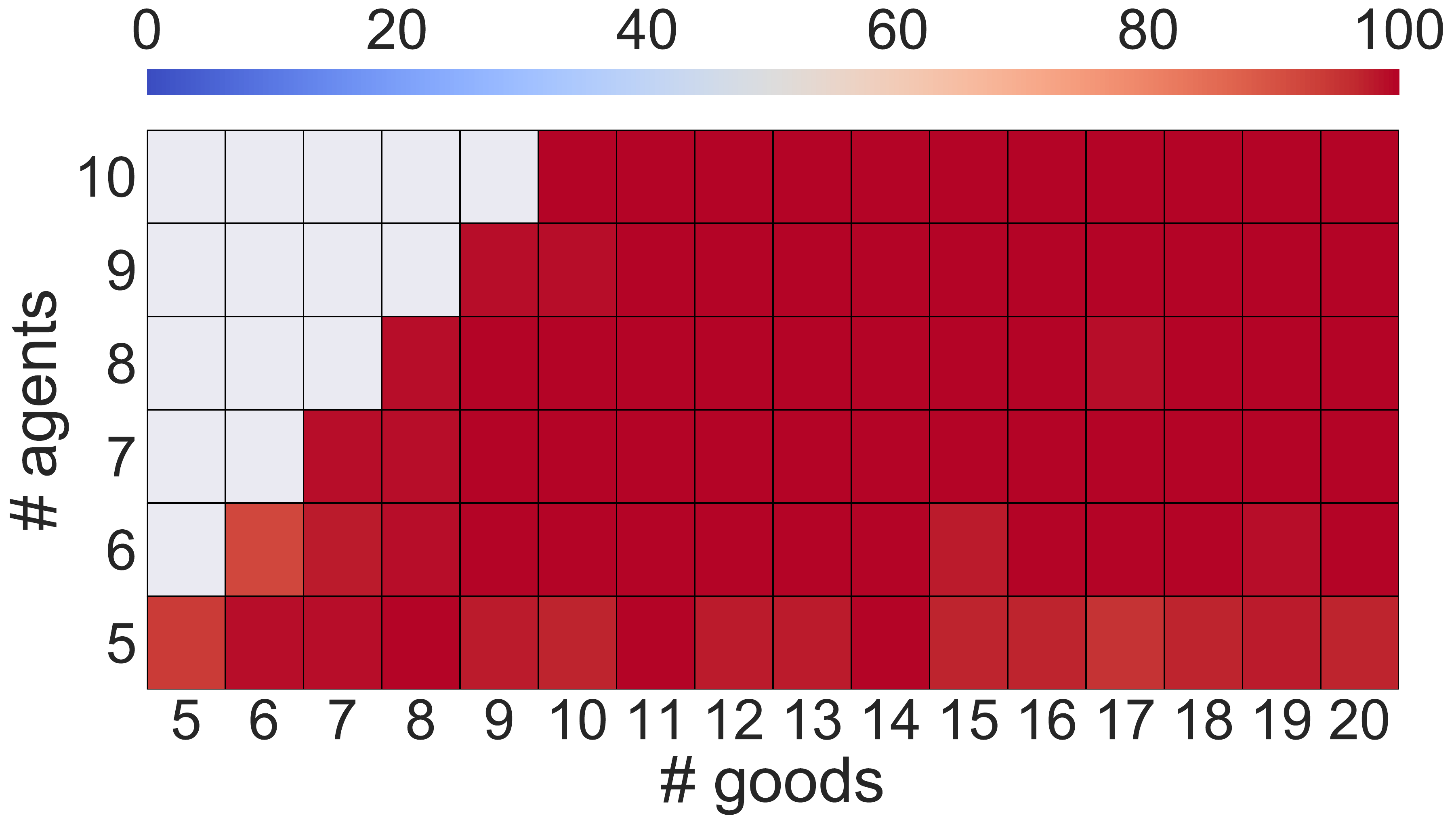} \\
 \hline
 \multicolumn{4}{c}{}\\
 \multicolumn{4}{c}{\textbf{Number of goods that must be hidden in the worst-case} (max over all $100$ instances)}\\
 \hline
 \footnotesize{\Market{}} & \footnotesize{\RR{}} & \footnotesize{\MNW{}} & \footnotesize{\Envygraph{}}\\
 \includegraphics[width=0.22\textwidth]{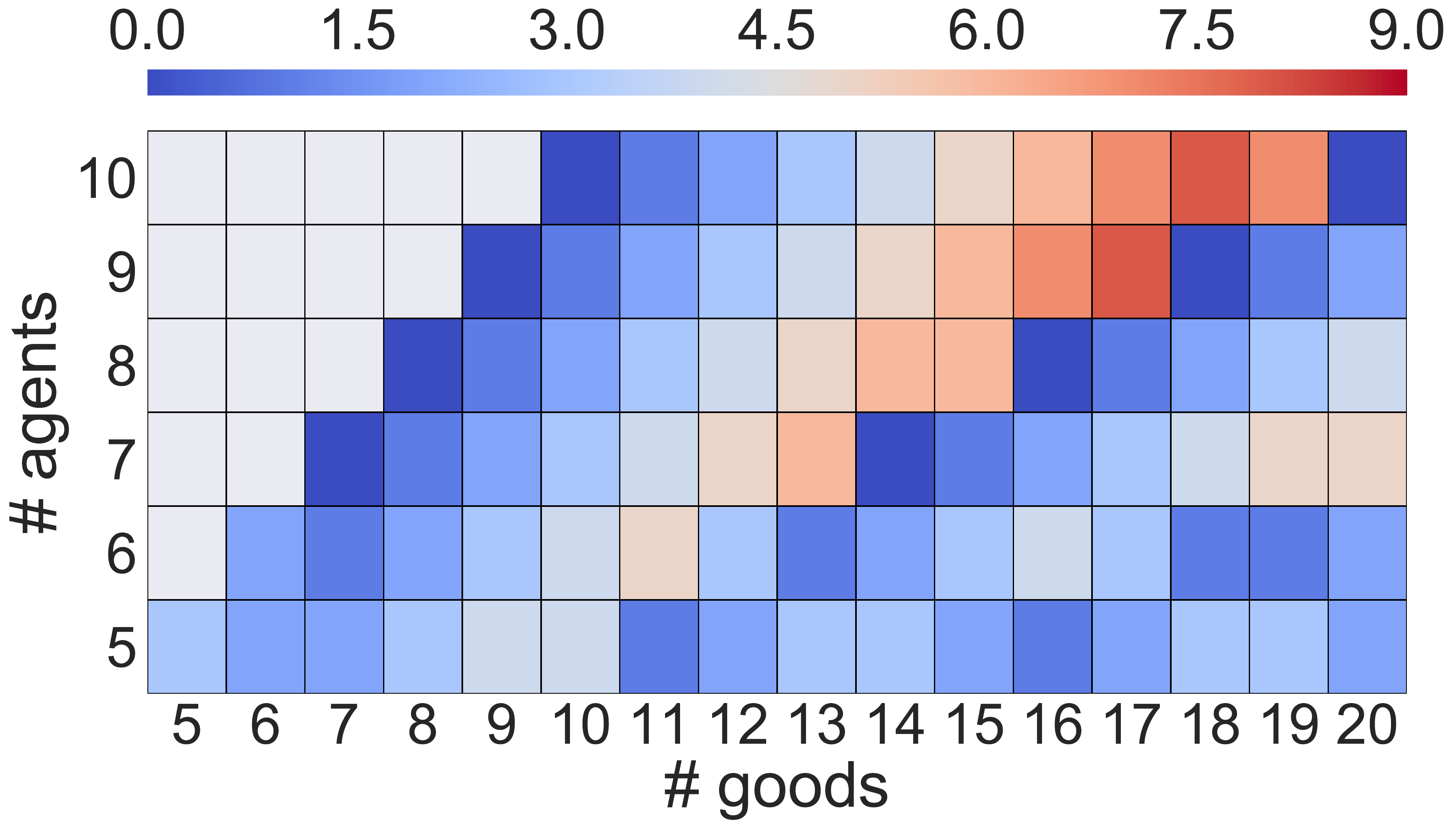} & \includegraphics[width=0.22\textwidth]{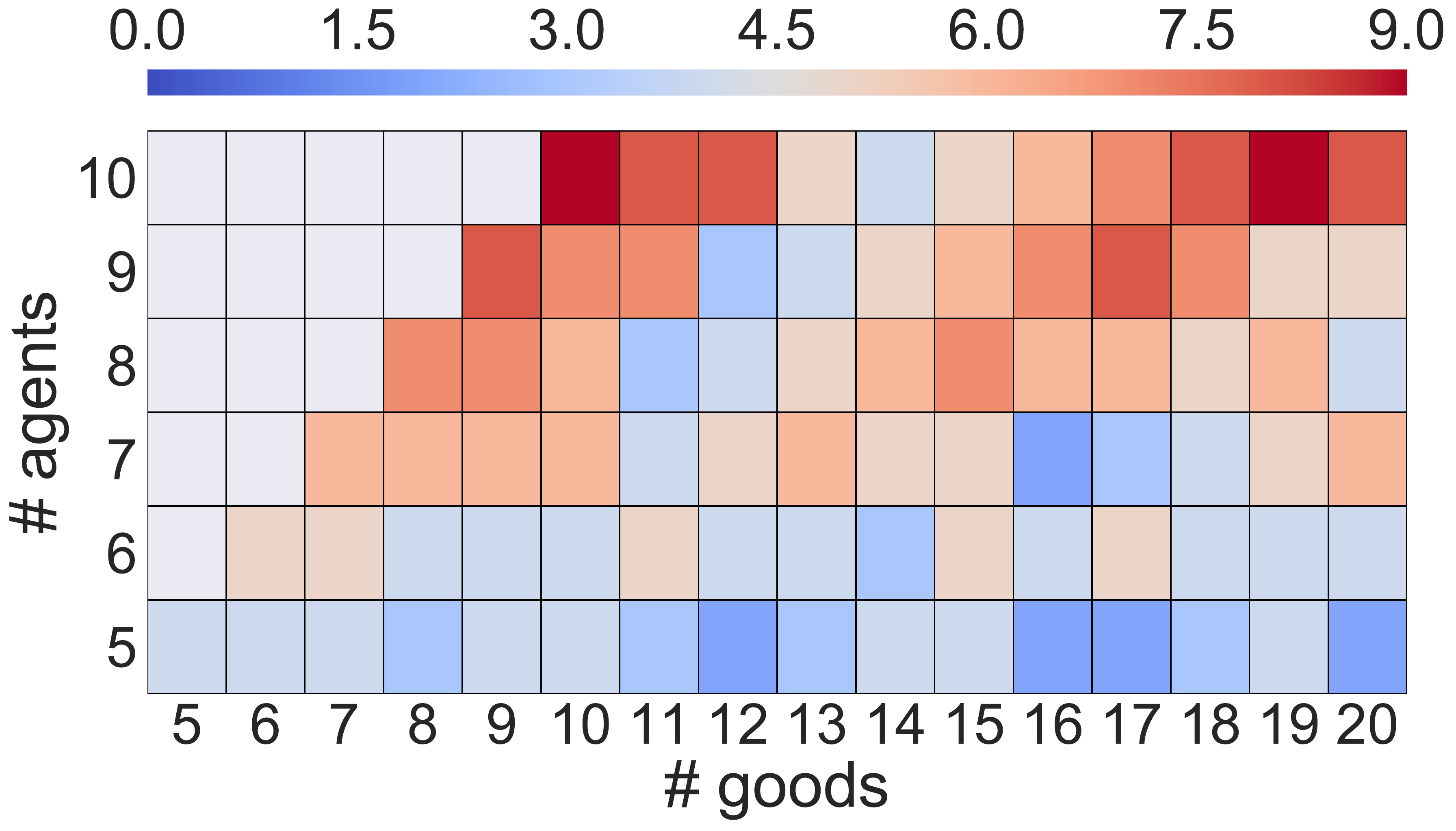} &
 \includegraphics[width=0.223\textwidth]{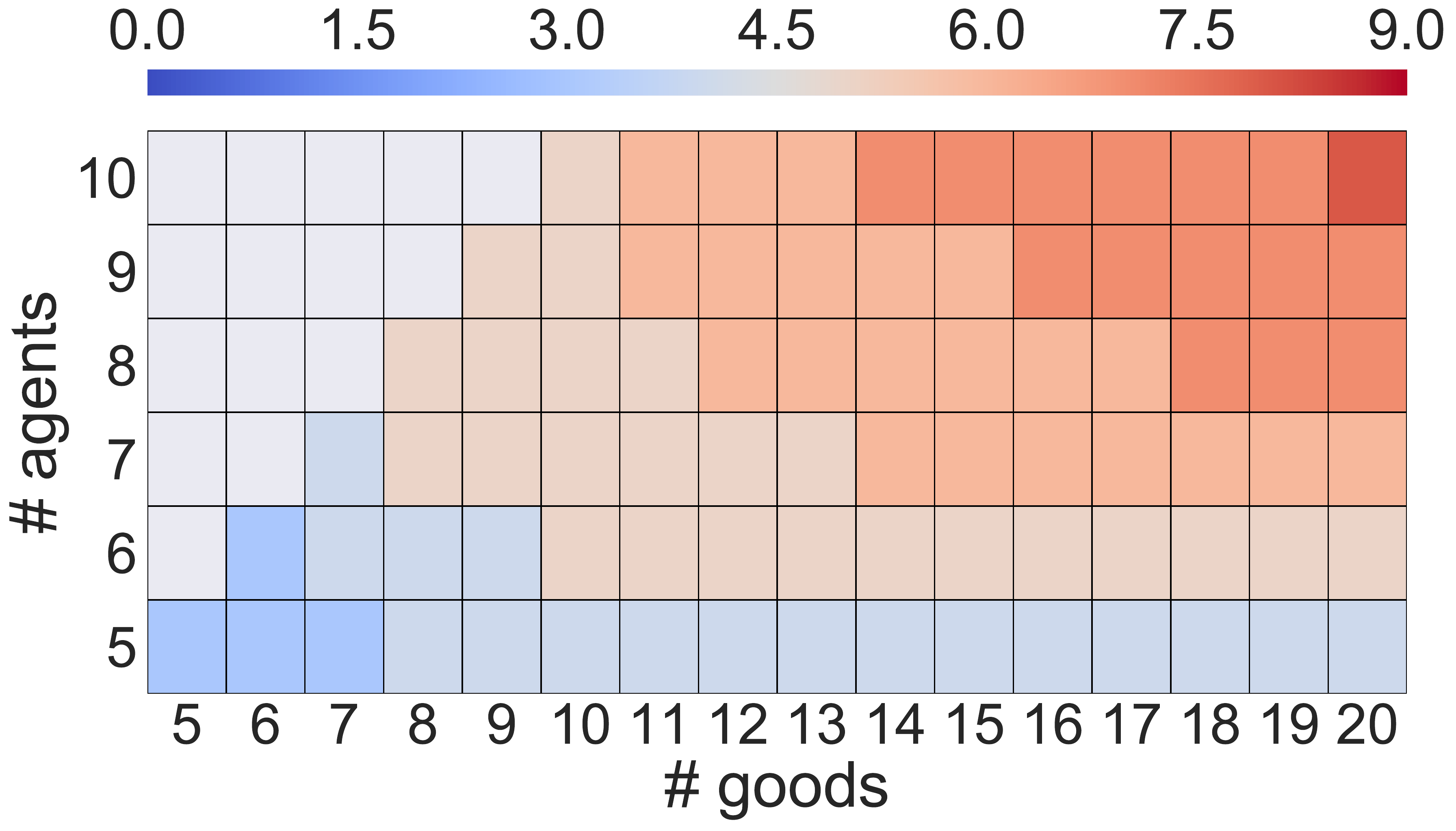} &
 \includegraphics[width=0.22\textwidth]{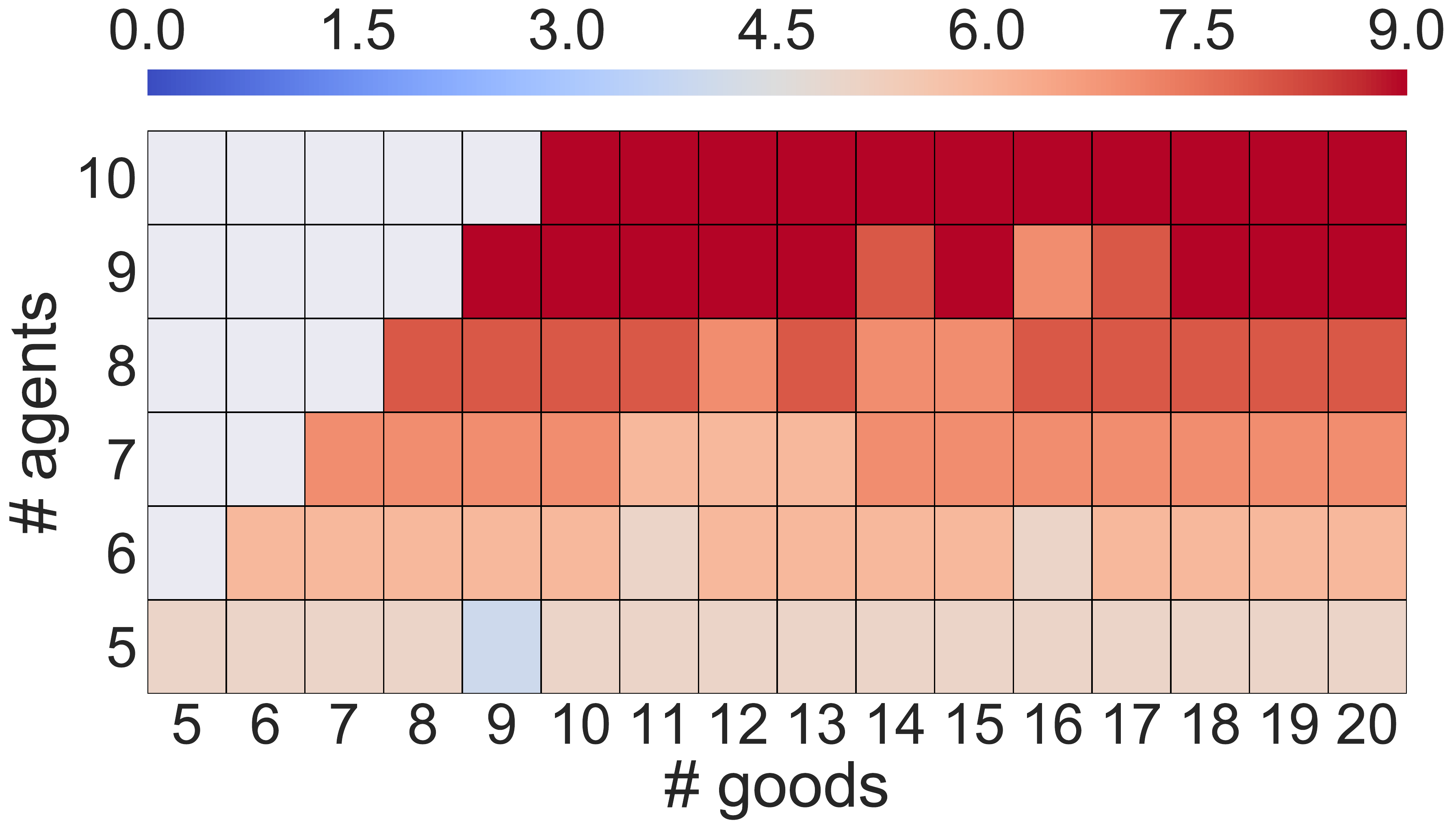} \\
 \hline
 \end{tabular}
 \caption{Comparing various \EF{1} algorithms over synthetically generated binary instances with $v_{i,j} \sim \Ber(0.7)$ i.i.d.}
 \label{tab:Expt_BinaryVals_bias_0.7_Part2}
\end{table*}

\begin{figure*}
    \centering
    \includegraphics[width=\linewidth]{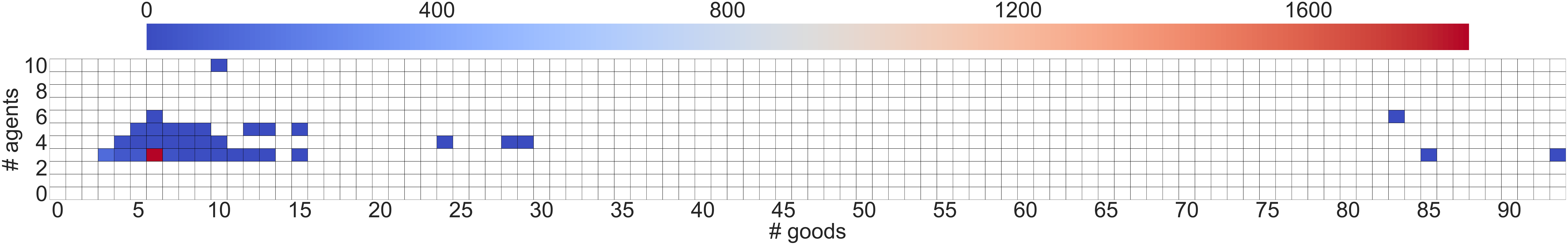}
    \caption{Distribution of the Spliddit data. The color of each cell denotes the number of instances in the dataset with the corresponding number of goods, $m$, on the X axis, and number of agents, $n$, on the Y axis.}
    \label{fig:Spliddit_data_distribution}
\end{figure*}

\begin{table*}
\centering
 \begin{tabular}{|cccc|}
 \multicolumn{4}{c}{\textbf{Normalized average-case regret}}\\
 \hline
 \footnotesize{\Market{}} & \footnotesize{\RR{}} & \footnotesize{\MNW{}} & \footnotesize{\Envygraph{}}\\
 \includegraphics[width=0.22\textwidth]{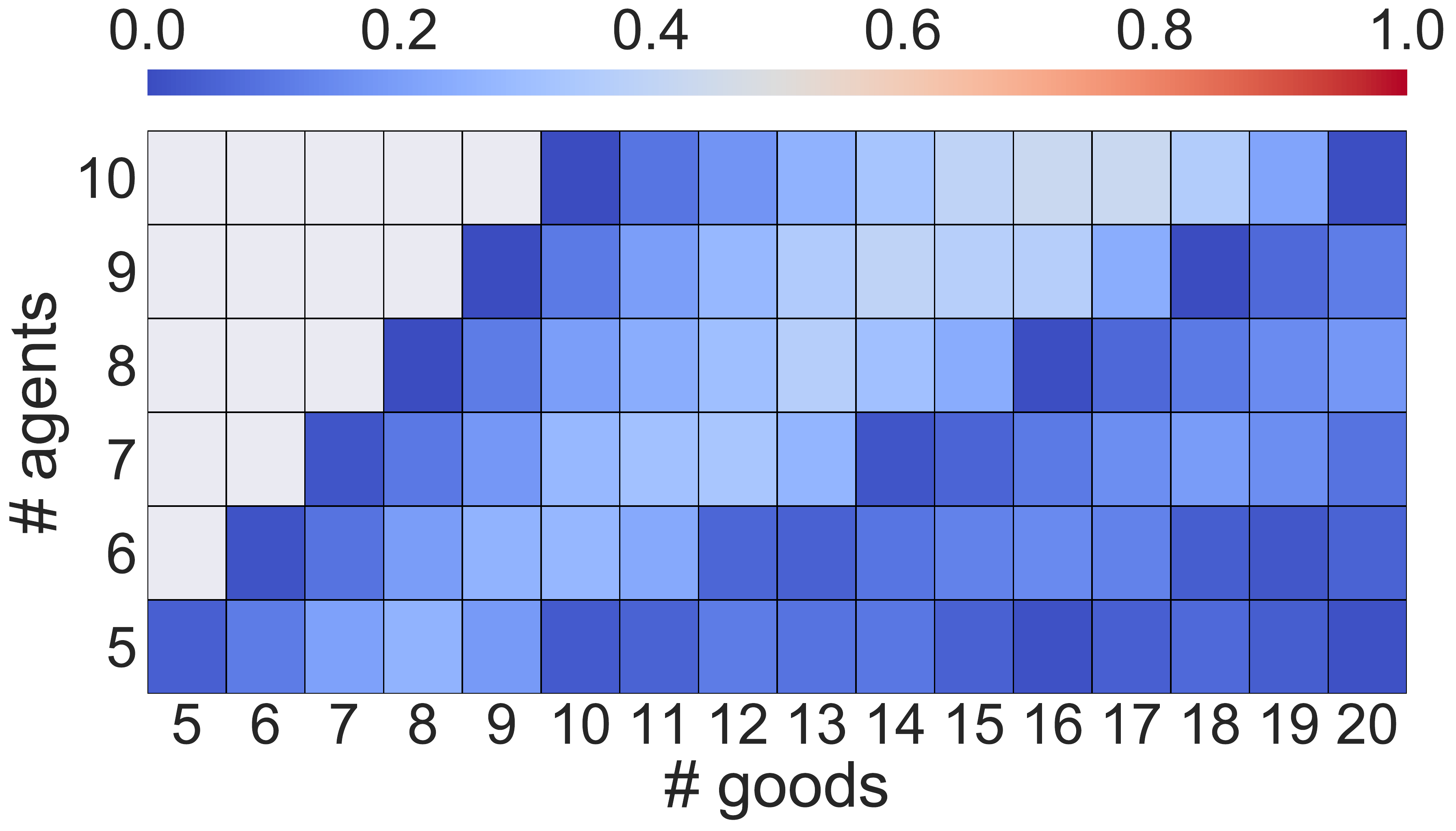} &
 \includegraphics[width=0.22\textwidth]{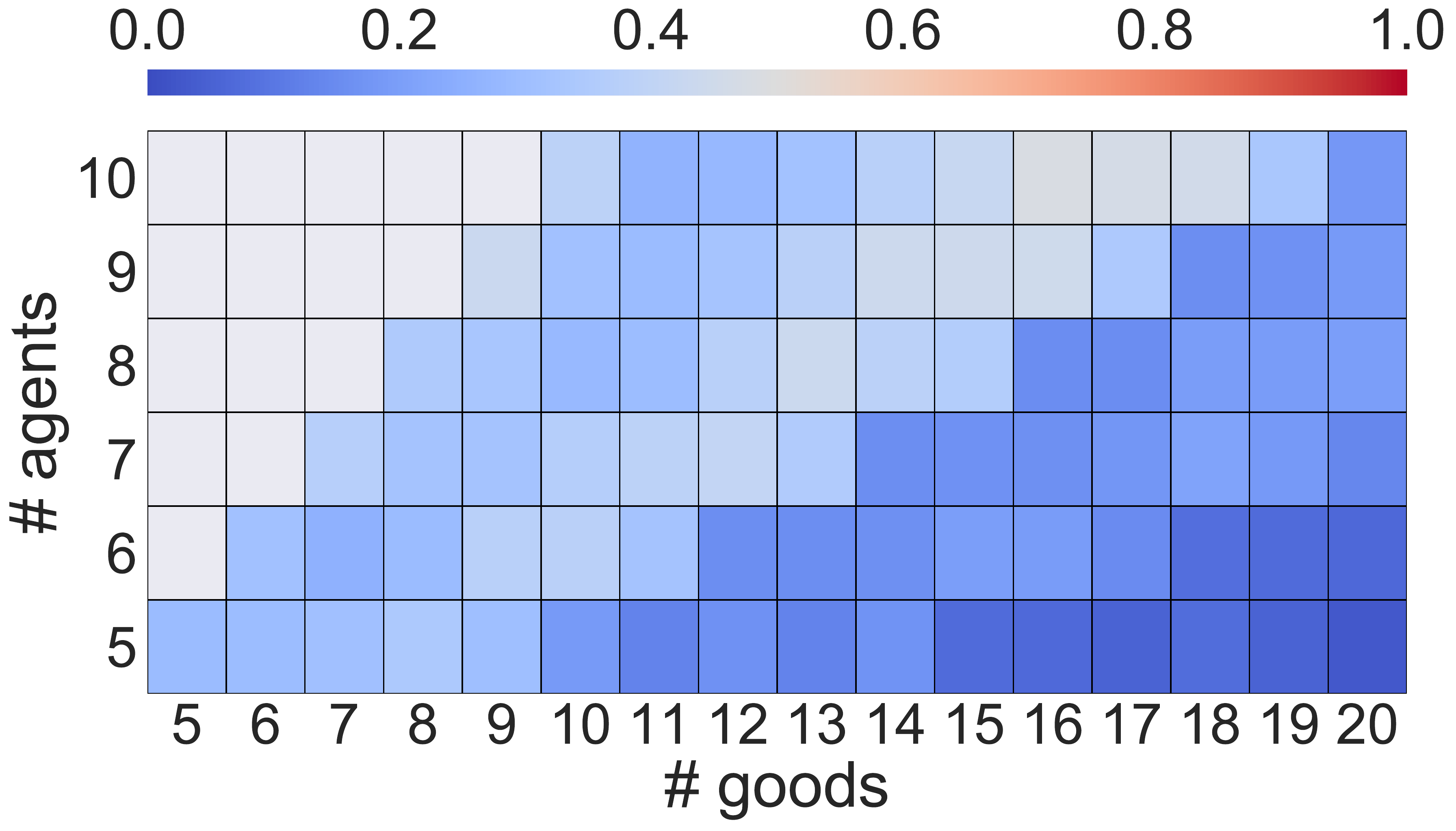} &
 \includegraphics[width=0.22\textwidth]{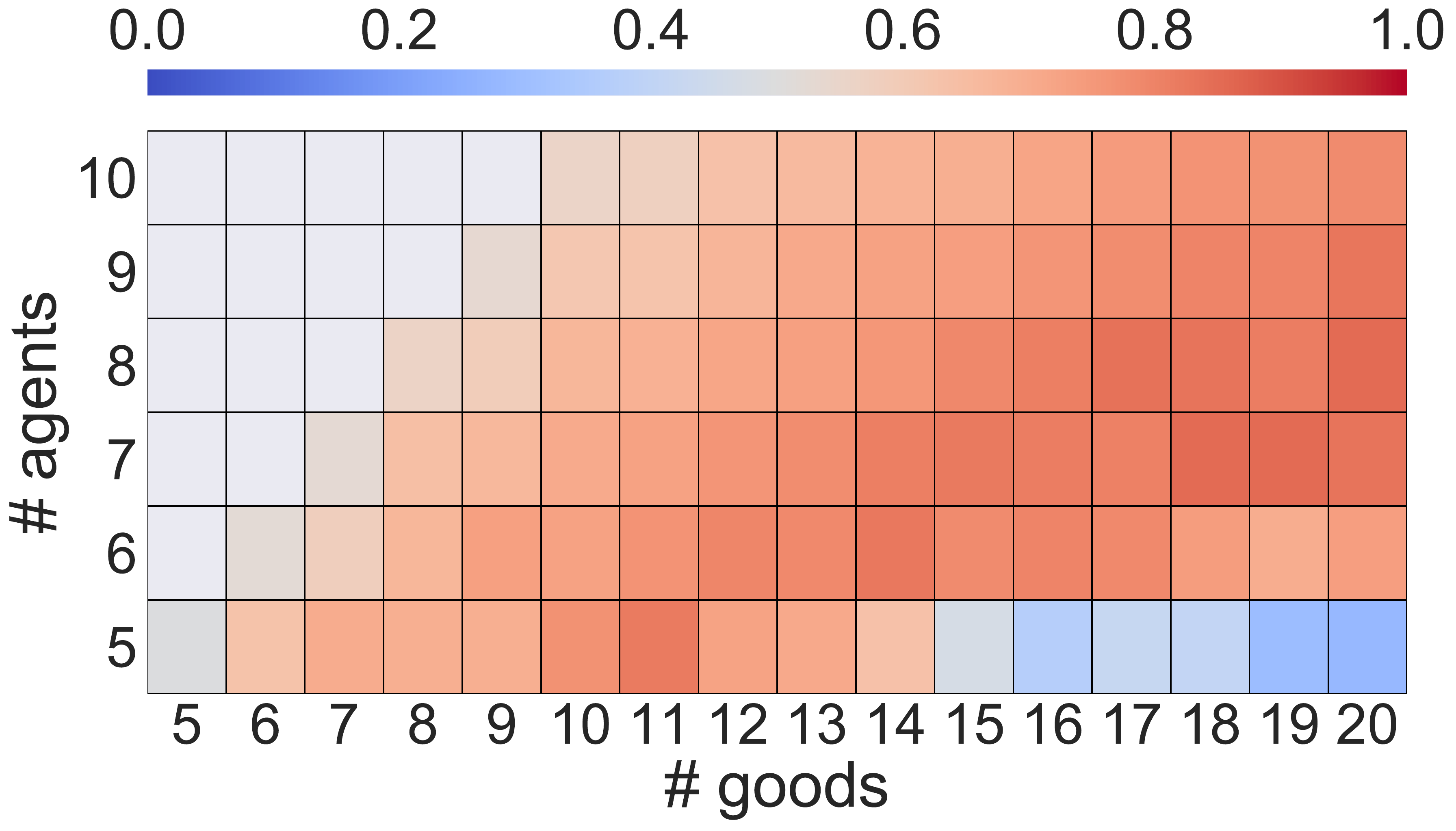} &
 \includegraphics[width=0.22\textwidth]{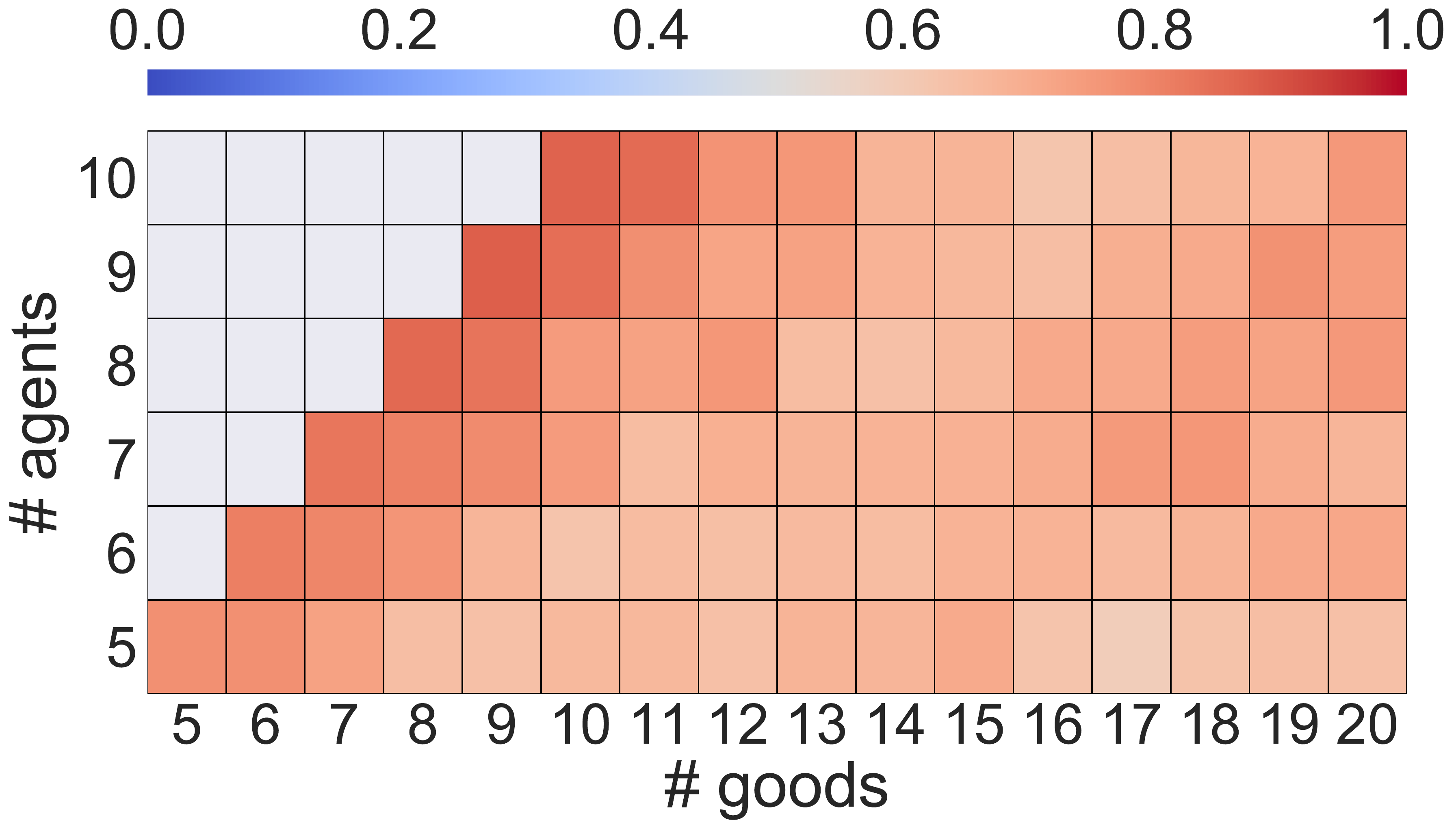} \\
 \hline
 \multicolumn{4}{c}{}\\
 \multicolumn{4}{c}{\textbf{Normalized worst-case regret}}\\
 \hline
 \footnotesize{\Market{}} & \footnotesize{\RR{}} & \footnotesize{\MNW{}} & \footnotesize{\Envygraph{}}\\
 \includegraphics[width=0.22\textwidth]{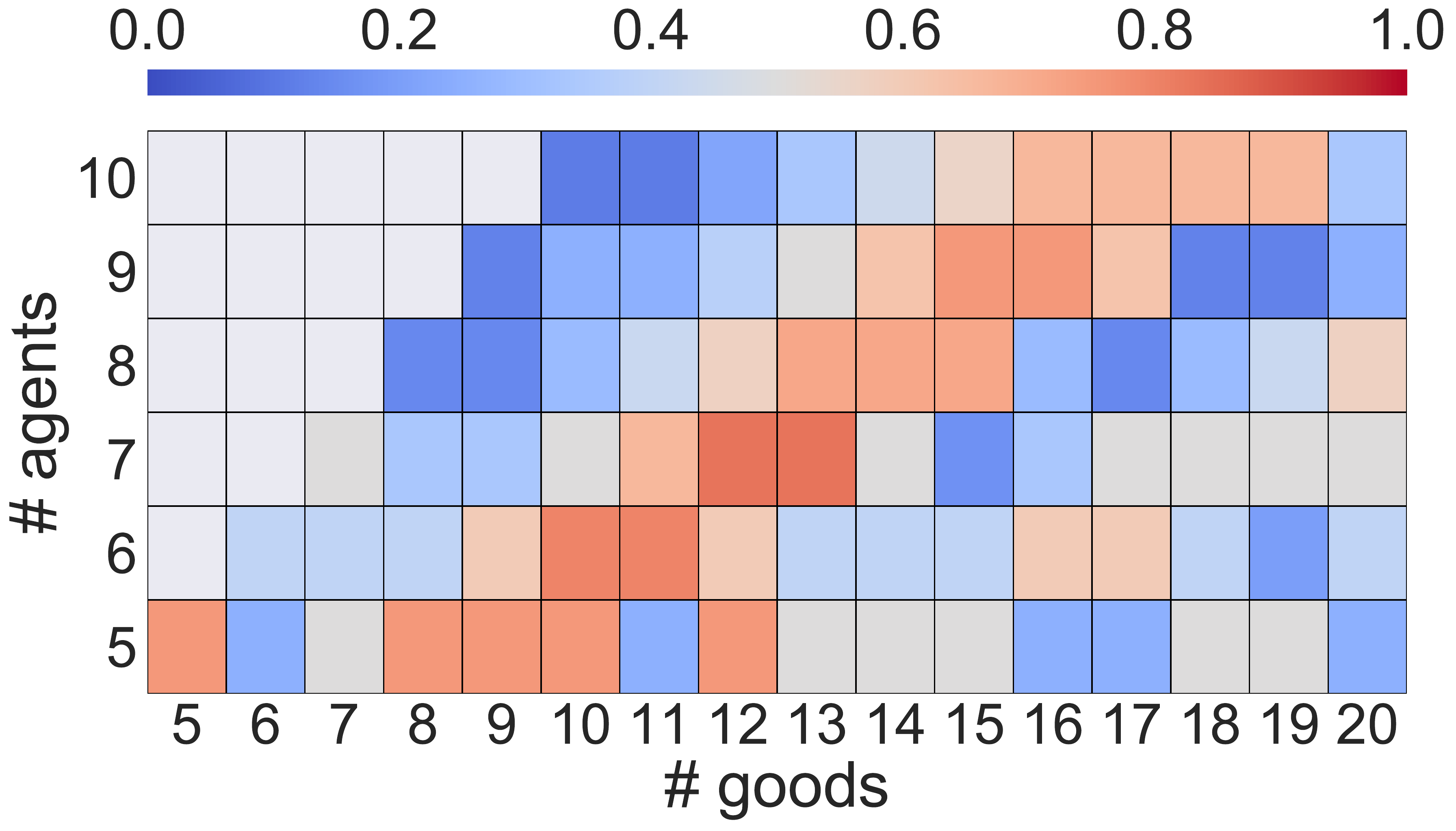} & \includegraphics[width=0.22\textwidth]{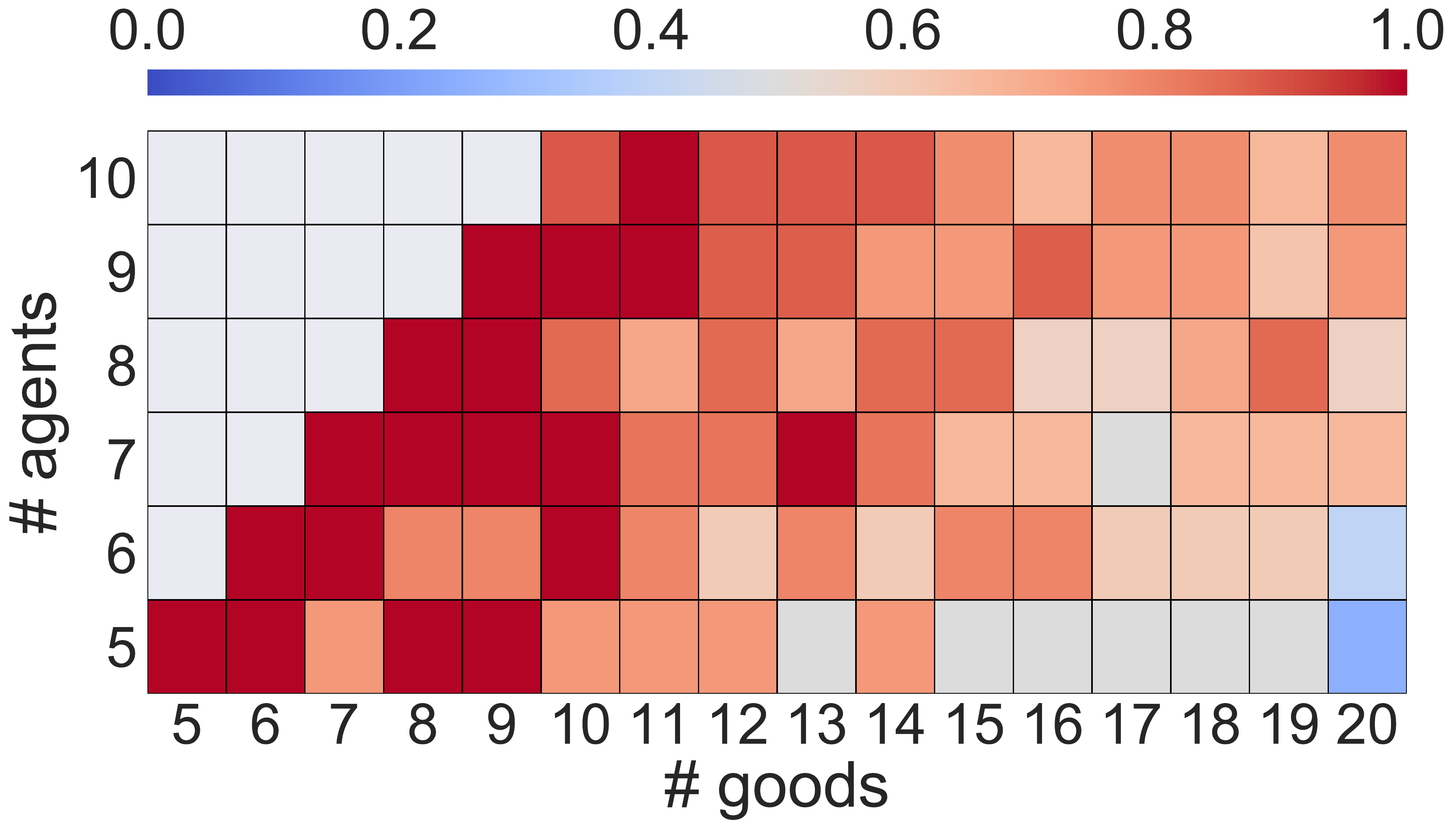} &
 \includegraphics[width=0.22\textwidth]{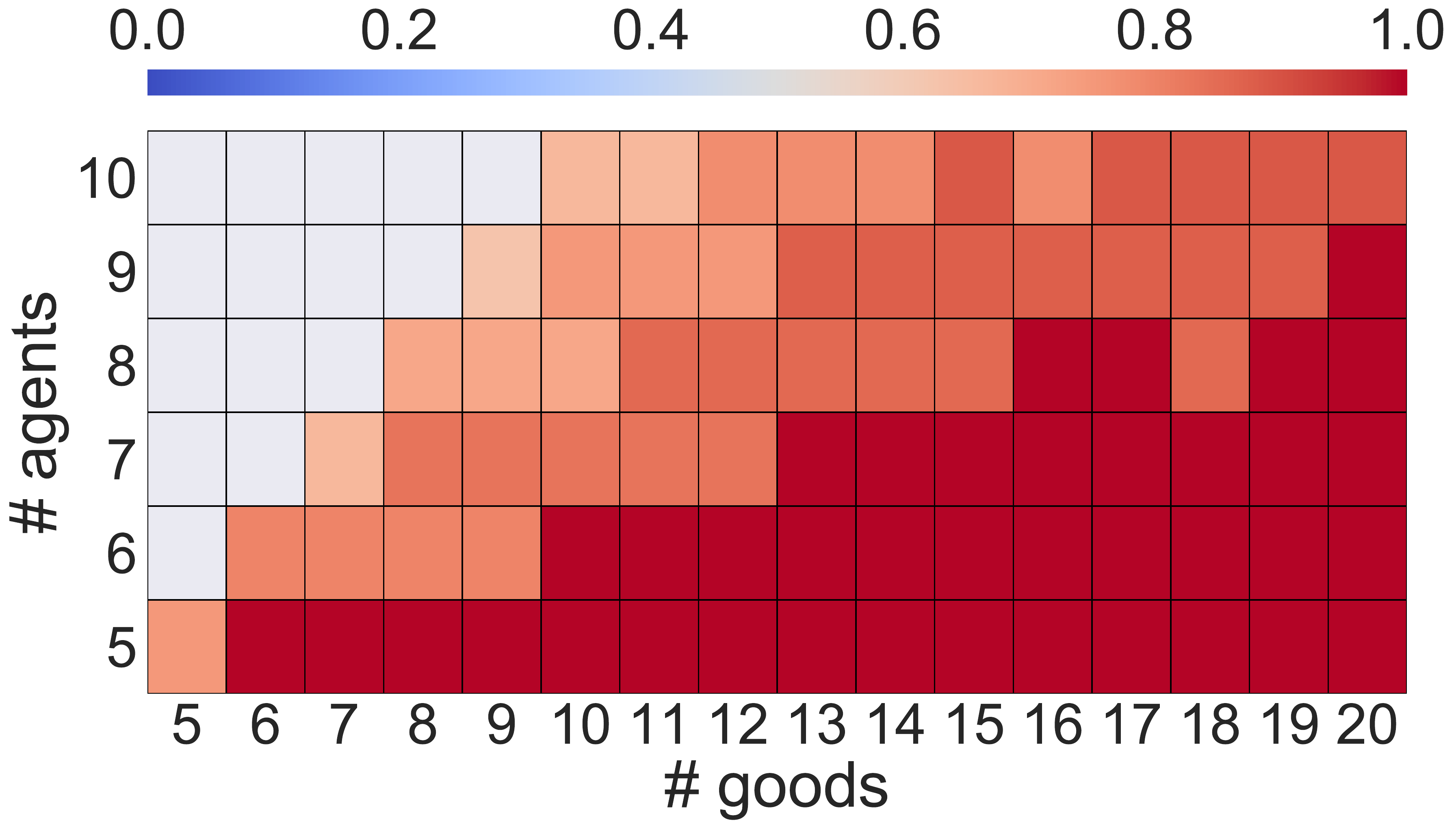} &
 \includegraphics[width=0.22\textwidth]{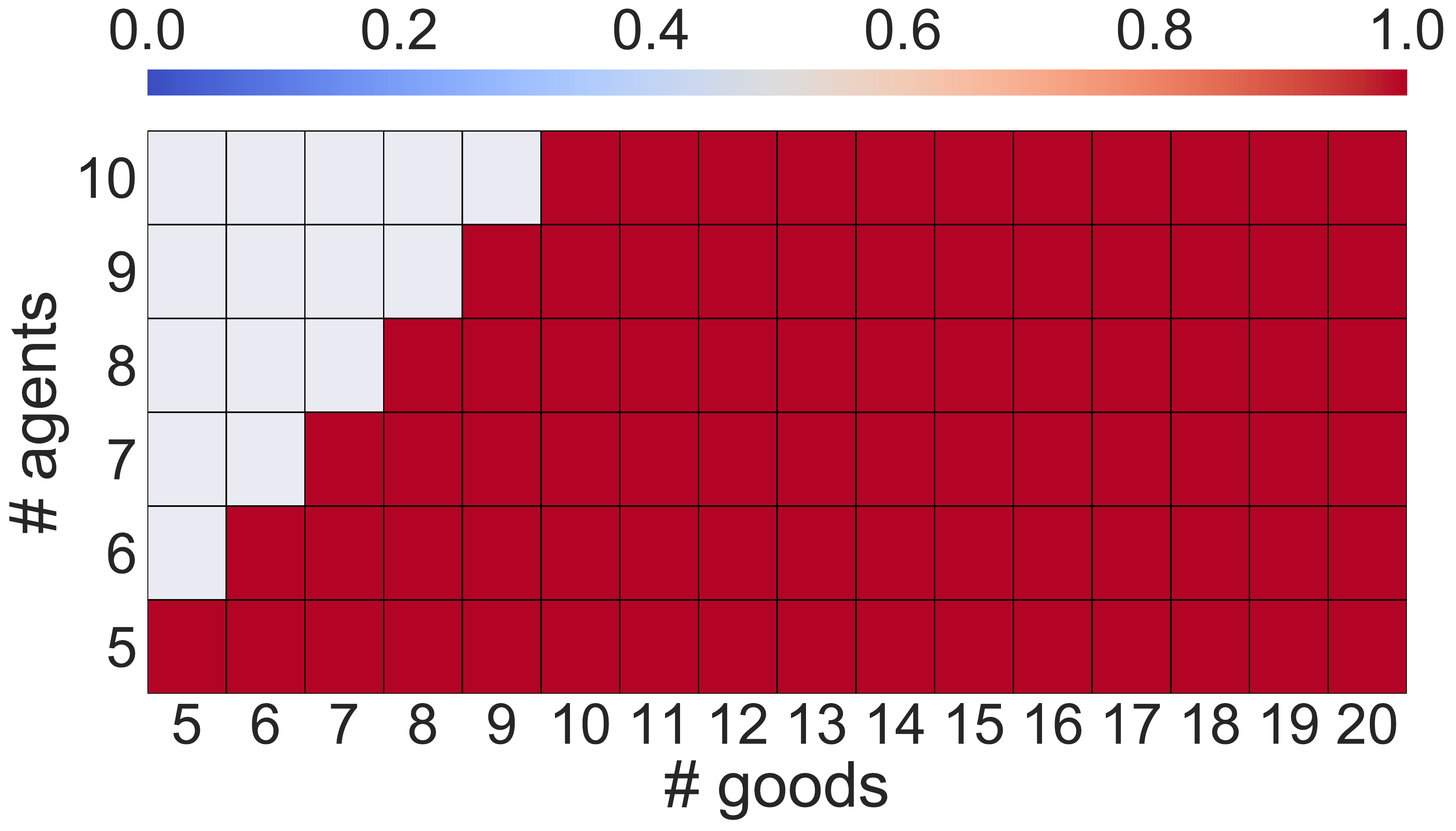} \\
 \hline
 \multicolumn{4}{c}{}\\
 \multicolumn{4}{c}{\textbf{Frequency of envy-freeness}}\\
 \hline
 \footnotesize{\Market{}} & \footnotesize{\RR{}} & \footnotesize{\MNW{}} & \footnotesize{\Envygraph{}}\\
 \includegraphics[width=0.22\textwidth]{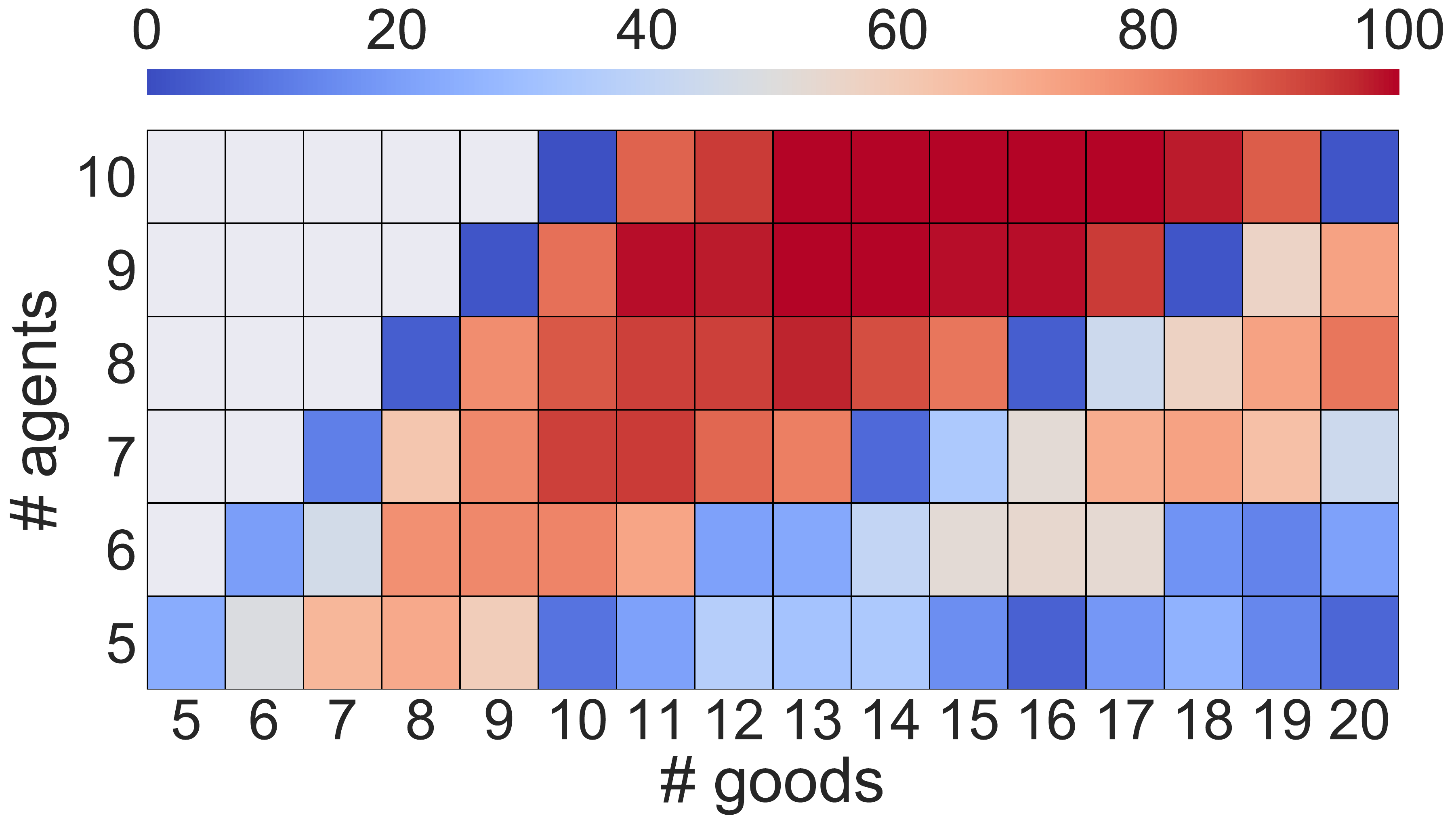} & \includegraphics[width=0.22\textwidth]{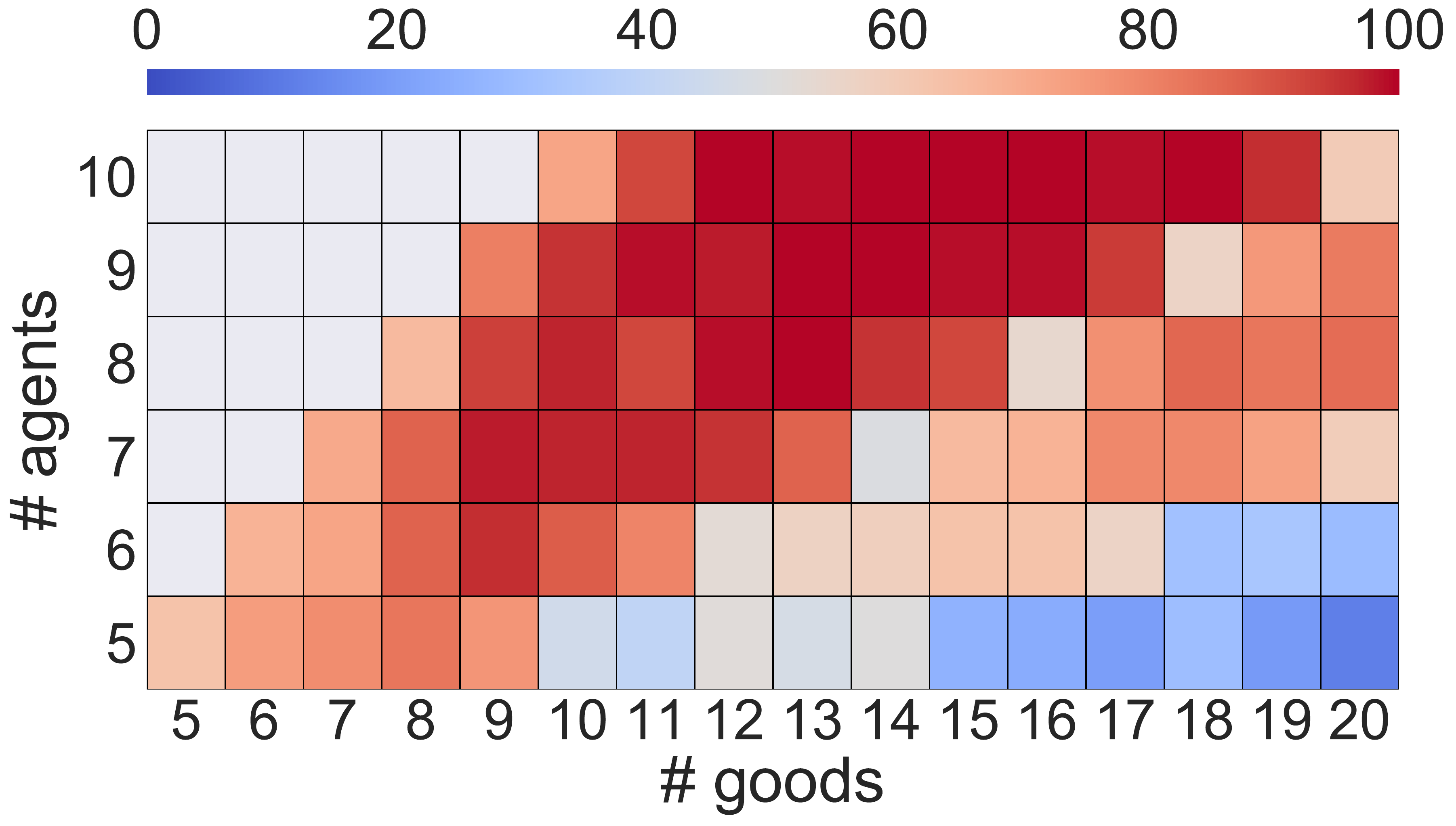} &
 \includegraphics[width=0.22\textwidth]{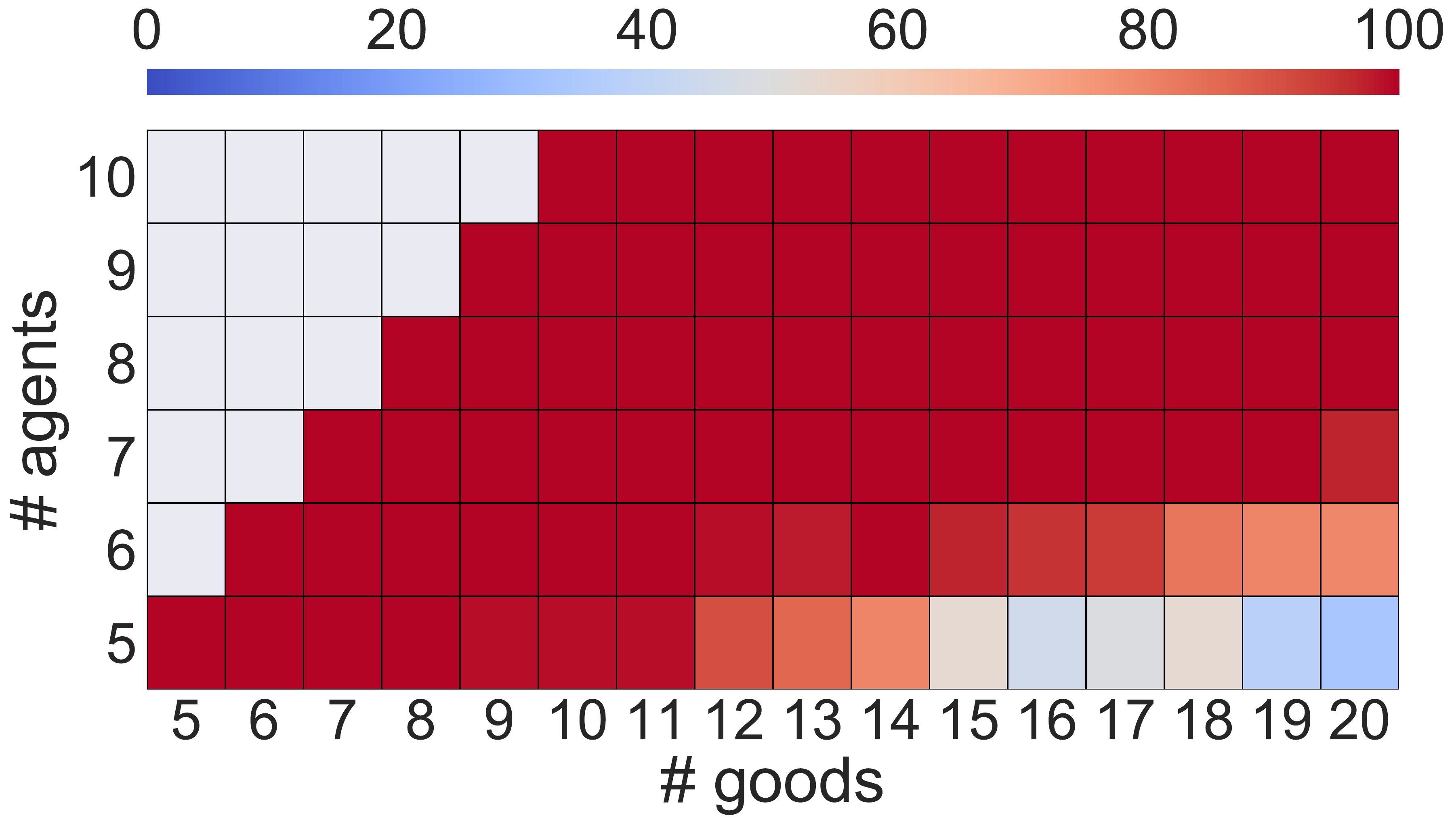} &
 \includegraphics[width=0.22\textwidth]{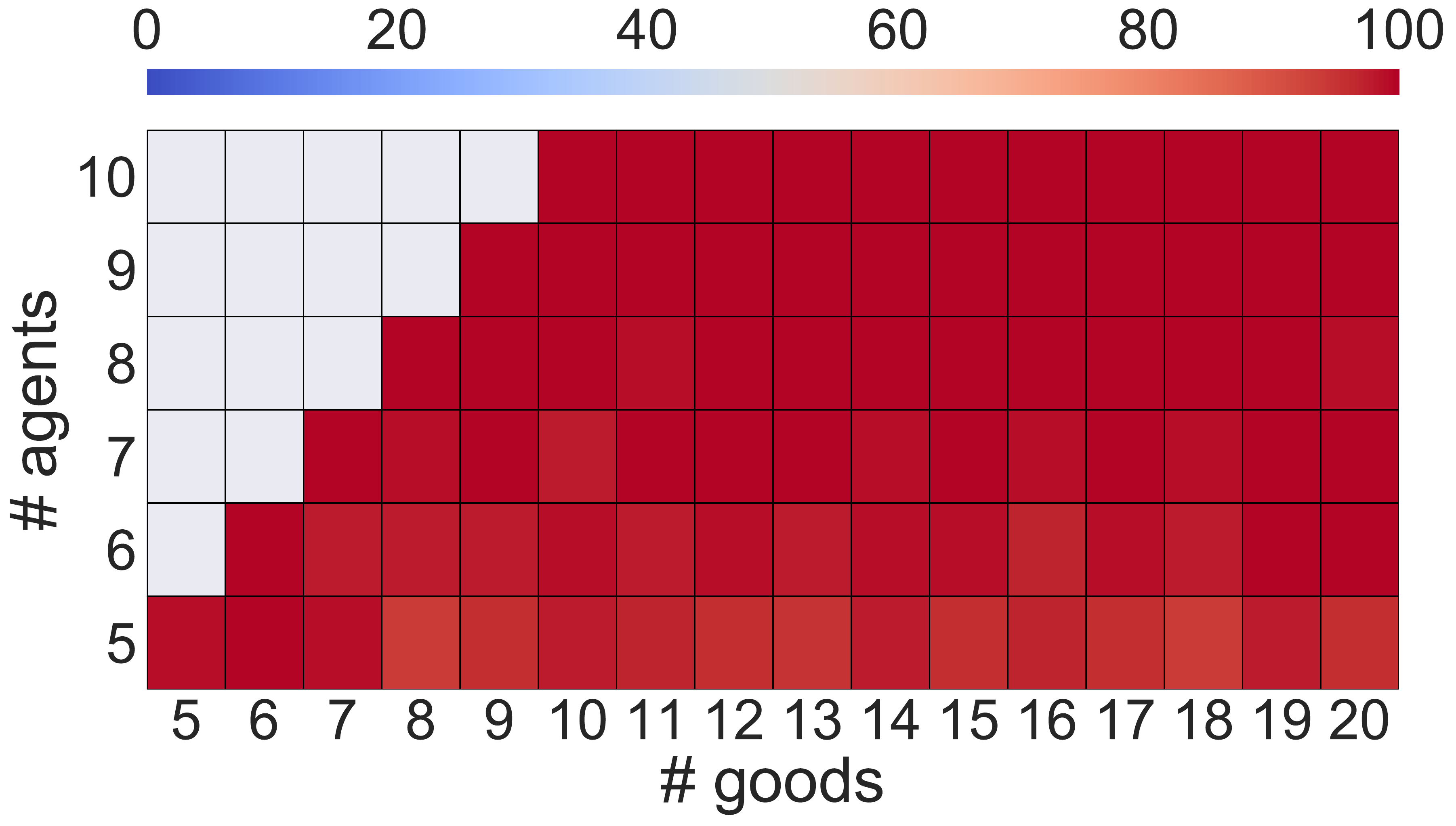} \\
 \hline
 \multicolumn{4}{c}{}\\
 \multicolumn{4}{c}{\textbf{Number of goods that must be hidden on average} (averaged over non-\EF{} instances only)}\\
 \hline
 \footnotesize{\Market{}} & \footnotesize{\RR{}} & \footnotesize{\MNW{}} & \footnotesize{\Envygraph{}}\\
 \includegraphics[width=0.22\textwidth]{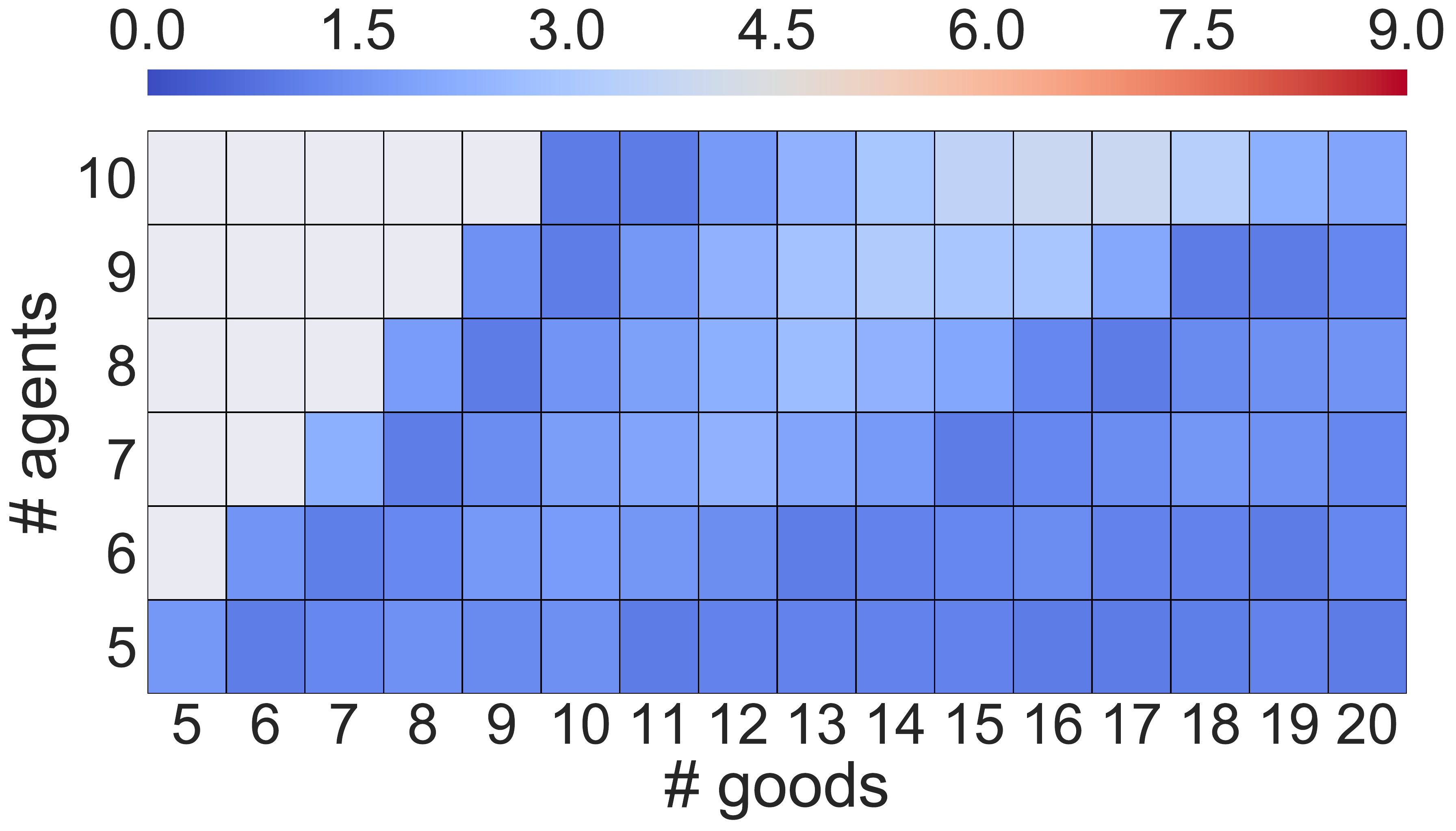} & \includegraphics[width=0.22\textwidth]{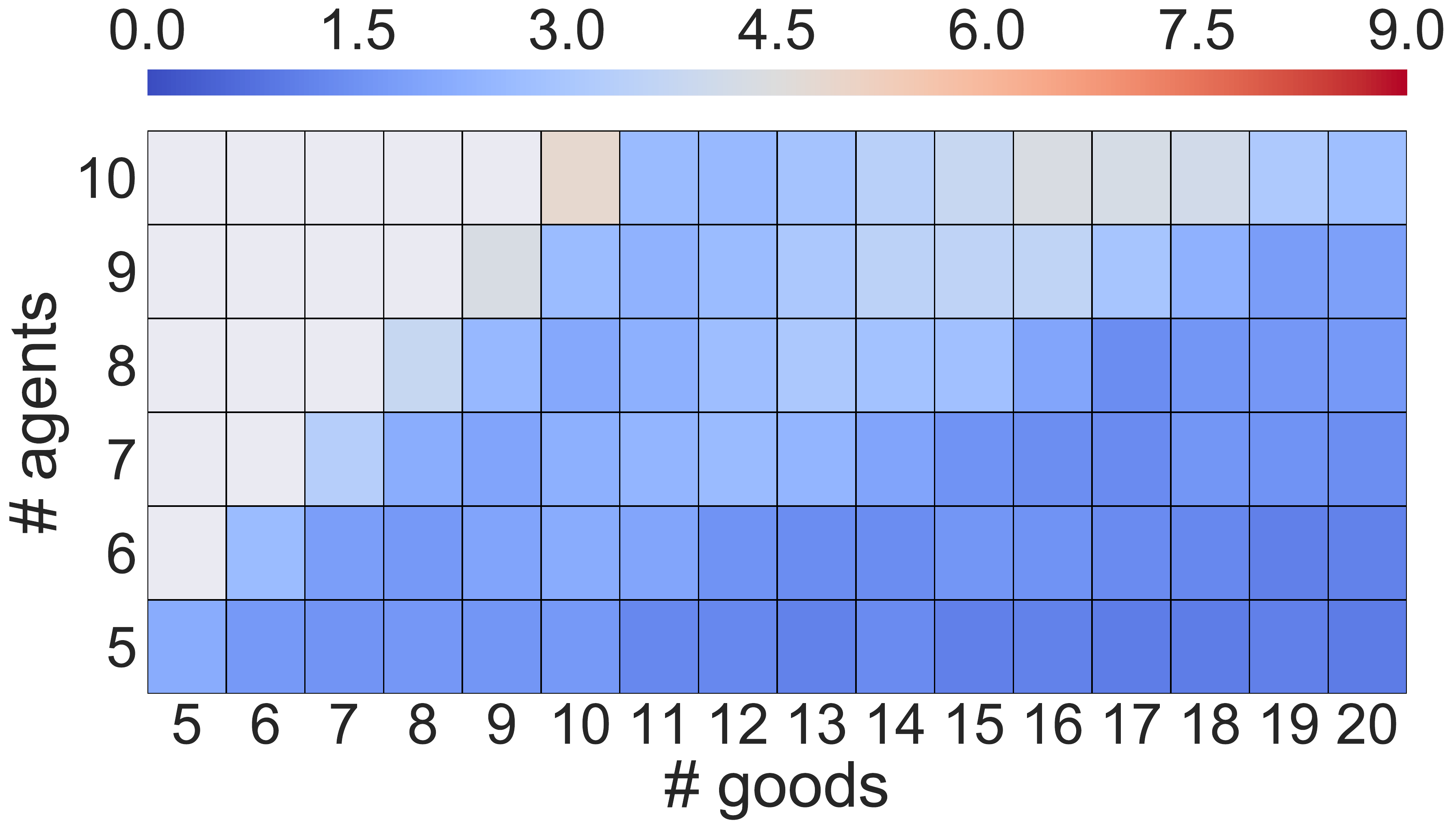} &
 \includegraphics[width=0.22\textwidth]{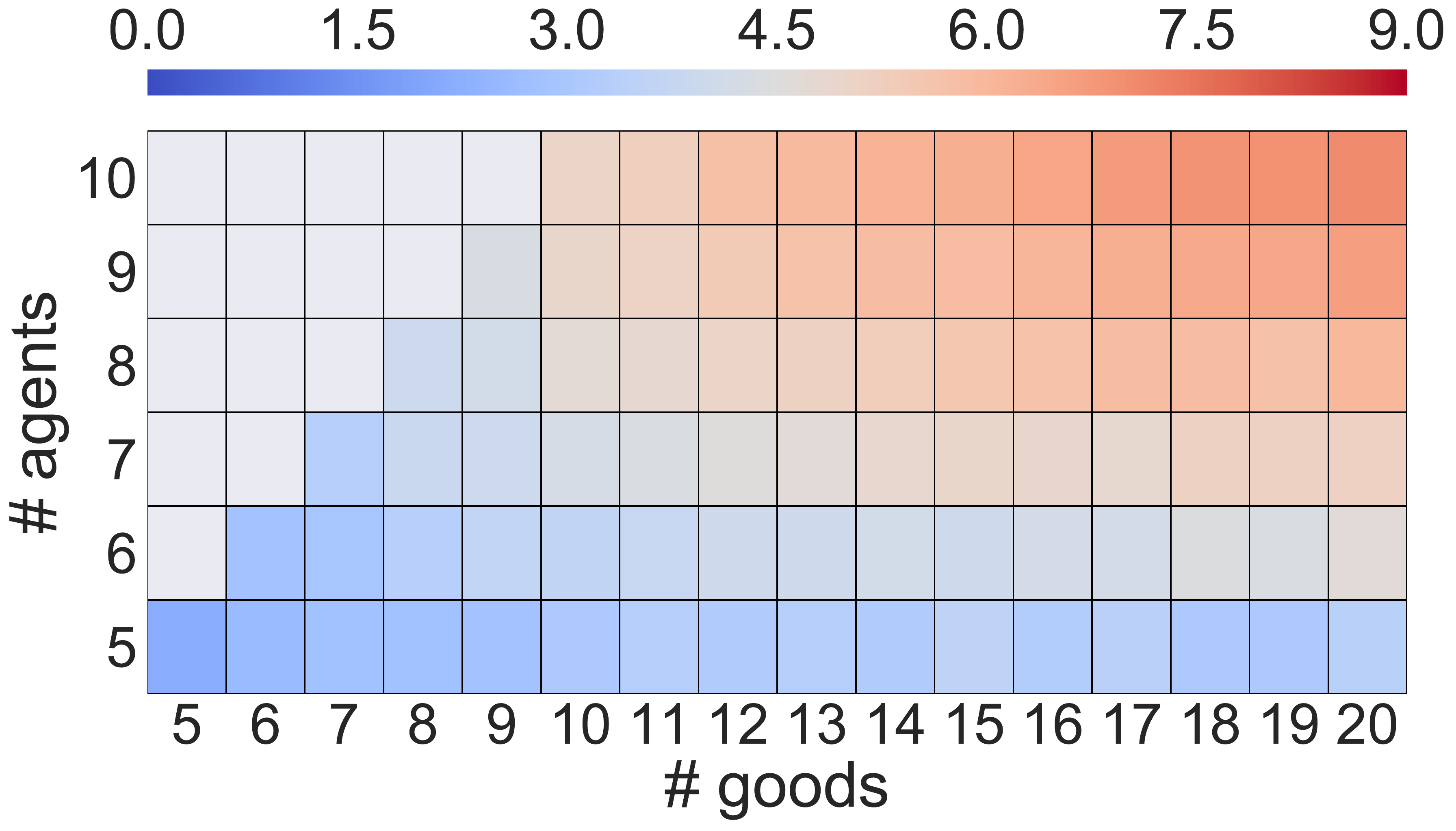} &
 \includegraphics[width=0.22\textwidth]{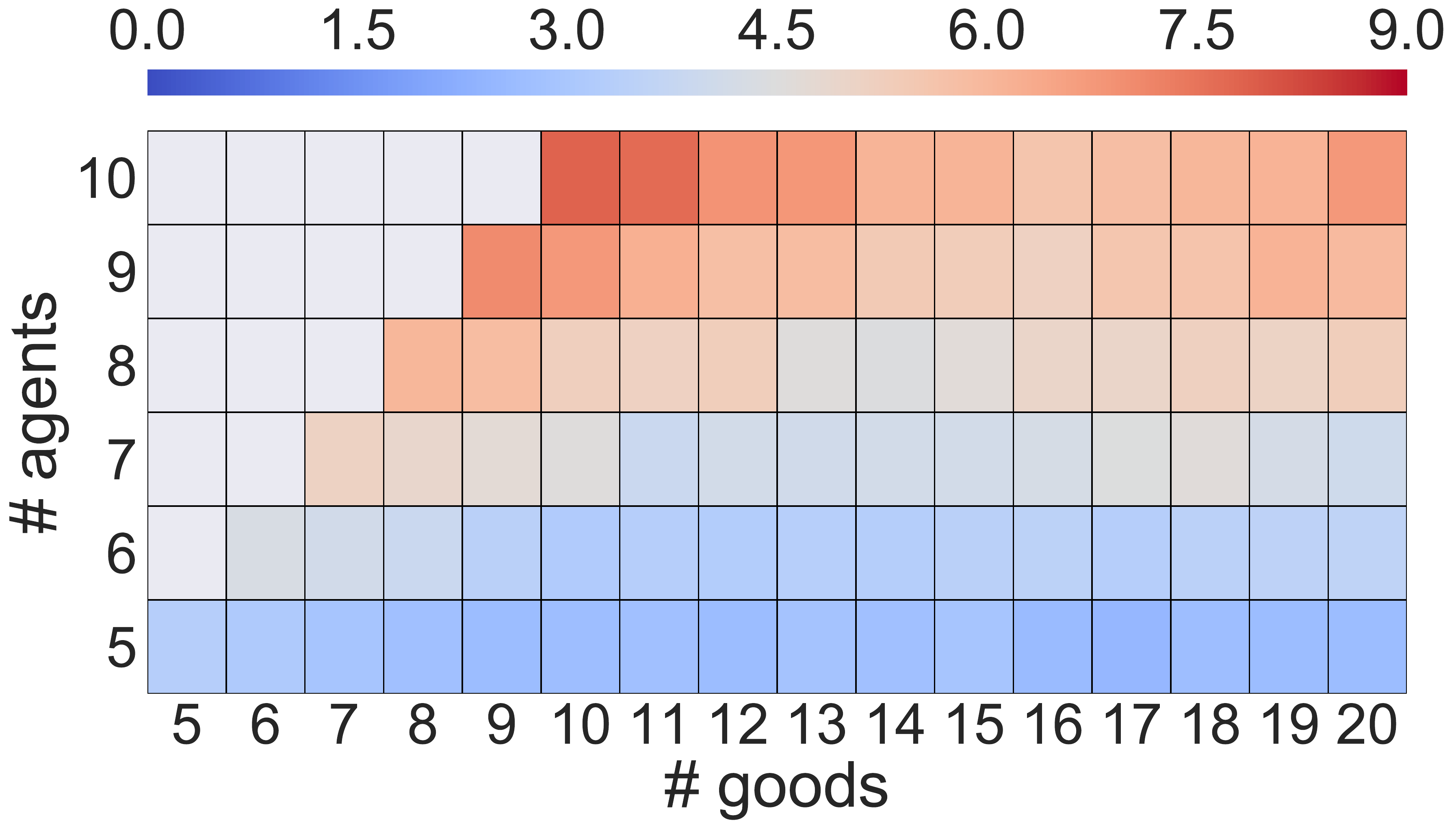} \\
 \hline
 \multicolumn{4}{c}{}\\
 \multicolumn{4}{c}{\textbf{Number of goods that must be hidden in the worst-case} (max over all $100$ instances)}\\
 \hline
 \footnotesize{\Market{}} & \footnotesize{\RR{}} & \footnotesize{\MNW{}} & \footnotesize{\Envygraph{}}\\
 \includegraphics[width=0.22\textwidth]{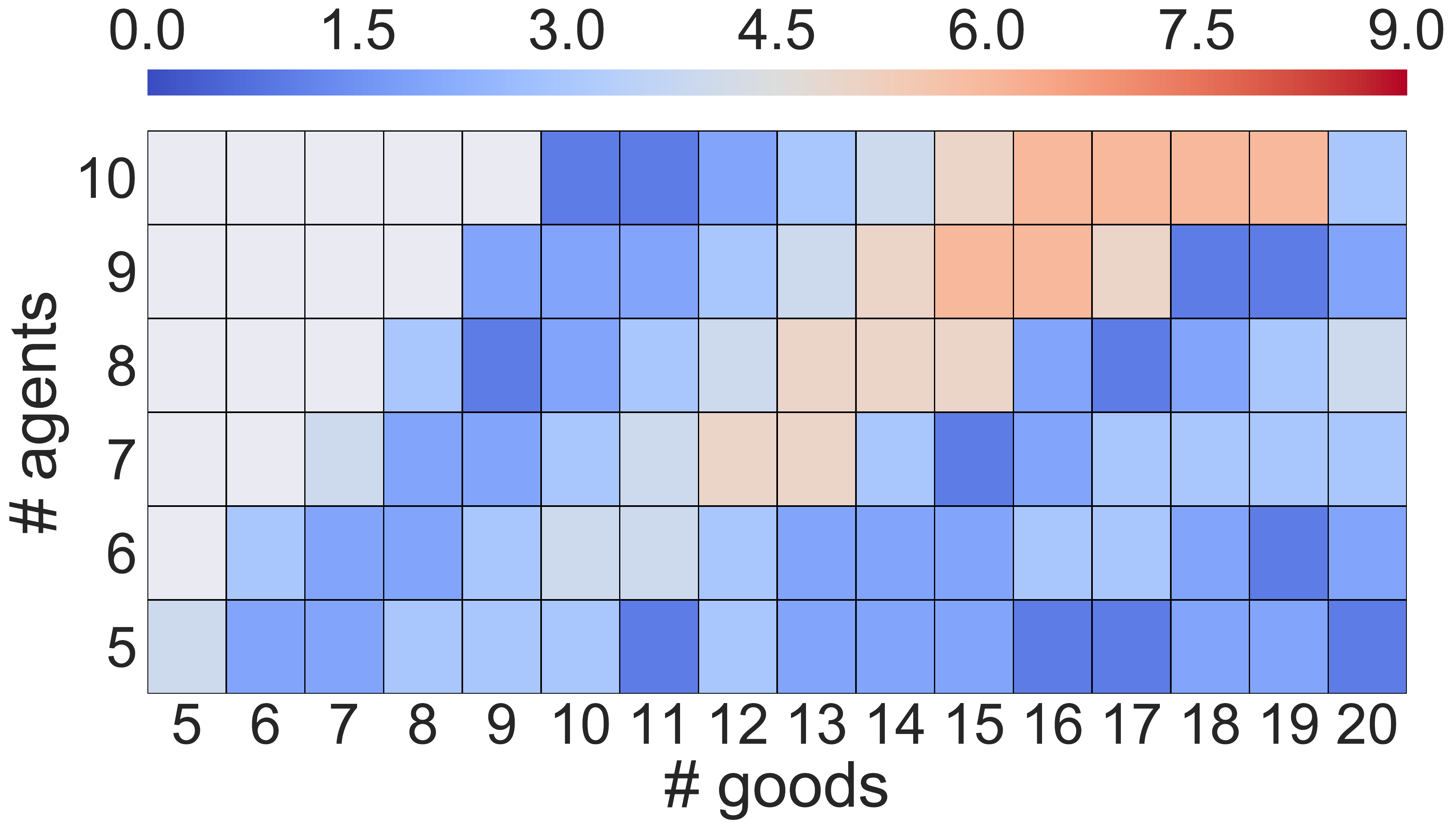} & \includegraphics[width=0.22\textwidth]{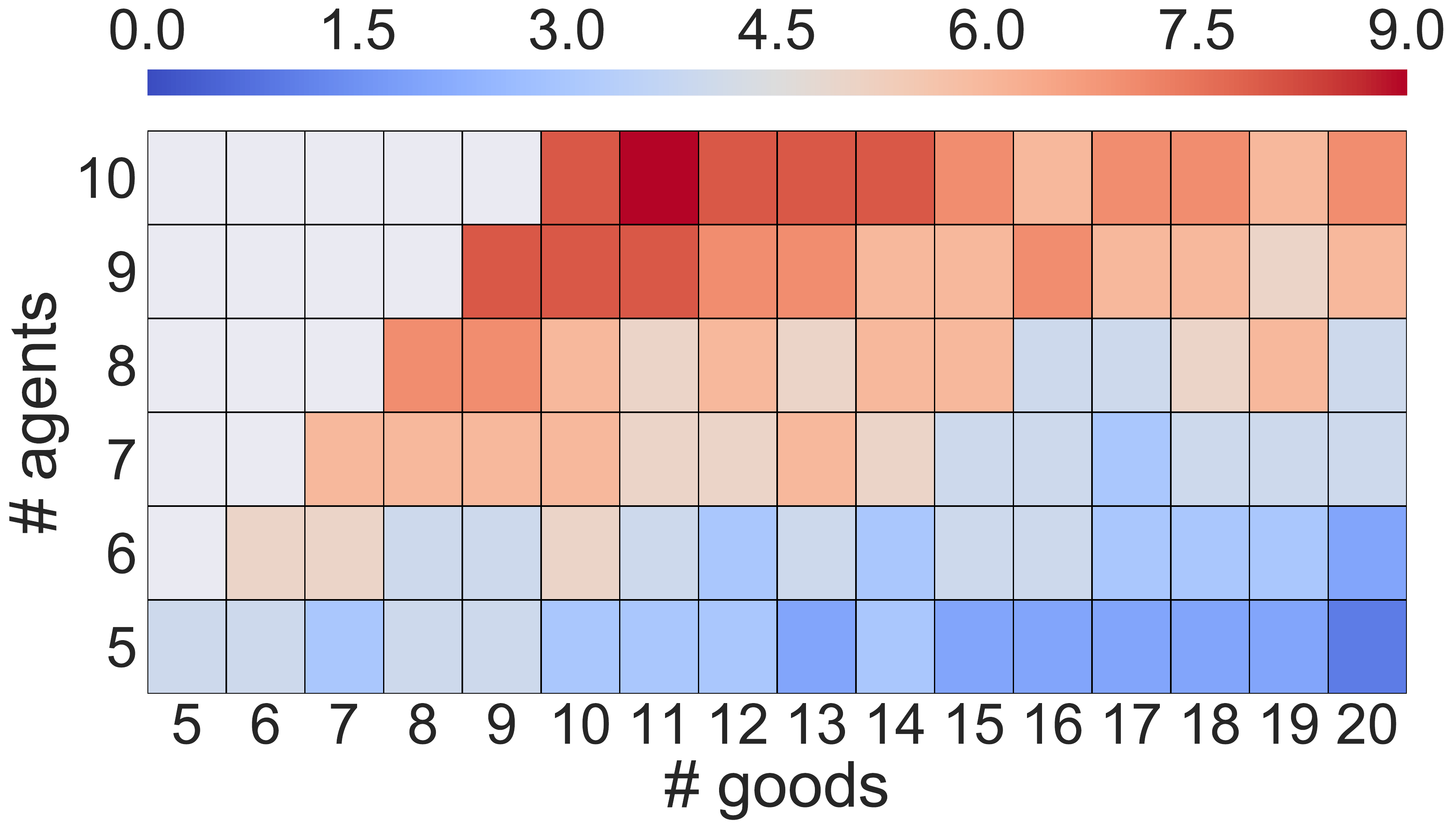} &
 \includegraphics[width=0.22\textwidth]{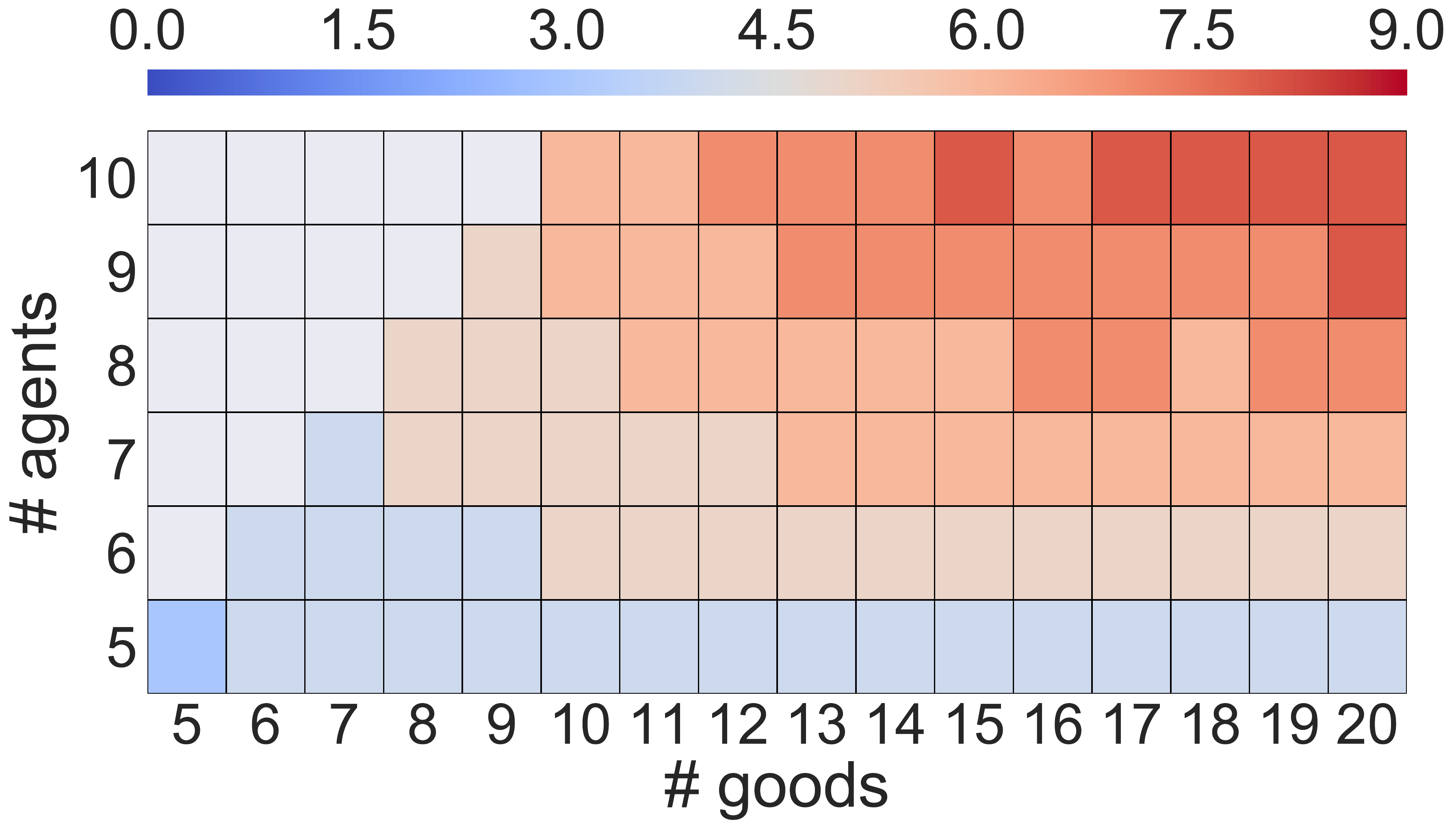} &
 \includegraphics[width=0.22\textwidth]{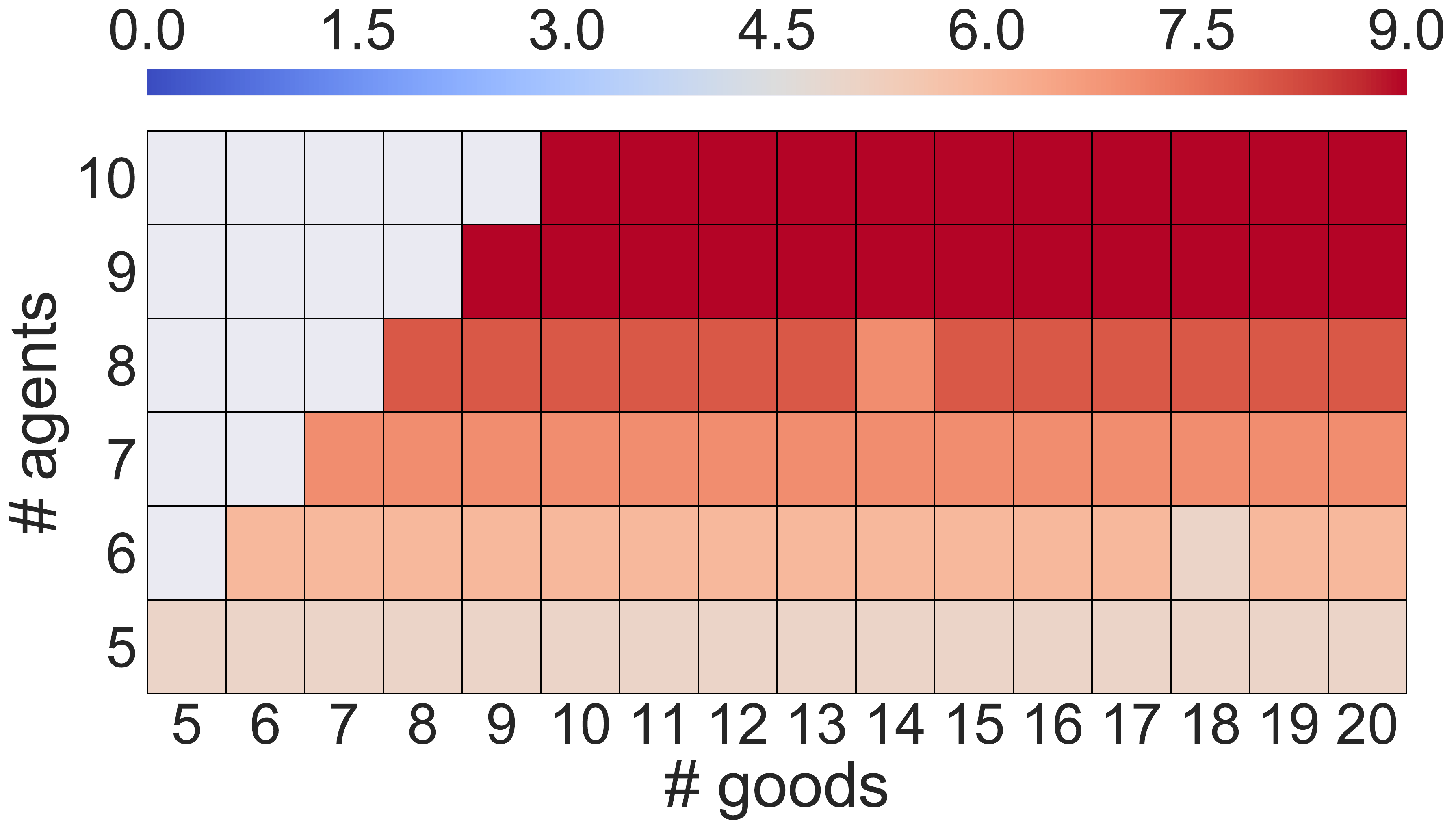} \\
 \hline
 \end{tabular}
 \caption{Comparing various \EF{1} algorithms over synthetically generated binary instances with $v_{i,j} \sim \Ber(0.5)$ i.i.d.}
 \label{tab:Expt_BinaryVals_bias_0.5}
\end{table*}

\end{document}